\title[Semantical Analysis of the Logic of Bunched Implications]
{Semantical Analysis of the Logic of \\ Bunched Implications}
\keywords{Logic, Proof Theory, Model Theory, Semantics, Bunched Logic}
\author{Alexander V. Gheorghiu}
\address{Department of Computer Science\\University College London\\London WC1E 6BT, UK}
\email{alexander.gheorghiu.19@ucl.ac.uk}
\author{David J. Pym}
\email{d.pym@ucl.ac.uk}
\newtheorem{Theorem}{Theorem}[section]
\newtheorem{Lemma}[Theorem]{Lemma}
\theoremstyle{definition}
\newtheorem{Definition}[Theorem]{Definition}
\newtheorem{Corollary}[Theorem]{Corollary}
\newtheorem{Example}[Theorem]{Example}
\newcommand{\lbi}{
	\begin{figure} \centering
			\fbox{
				\begin{minipage}{0.95\linewidth}
					\begin{minipage}[t]{0.3\linewidth}
						\begin{scprooftree}{0.85}
							\AxiomC{$\Gamma(\Delta) \proves \chi$}
							\RightLabel{$\weak$}
							\UnaryInfC{$\Gamma(\Delta \fatsemi\Delta') \proves \chi$}
						\end{scprooftree}
					\end{minipage}
					\begin{minipage}[t]{0.3\linewidth}
						\begin{scprooftree}{0.85}
							\AxiomC{$\Gamma \proves \phi$}
							\RightLabel{$\exch_{(\Gamma \equiv \Gamma')}$}
							\UnaryInfC{$\Gamma' \proves \phi$}
						\end{scprooftree}
					\end{minipage}
					\begin{minipage}[t]{0.3\linewidth}
						\begin{scprooftree}{0.85}
							\AxiomC{$\Gamma(\Delta\fatsemi\Delta) \proves \phi$}
							\RightLabel{$\cont$}
							\UnaryInfC{$\Gamma(\Delta) \proves \phi$}
						\end{scprooftree}
					\end{minipage}	\\
					
					\begin{minipage}[t]{0.4\linewidth}
						\begin{scprooftree}{0.85}
							\AxiomC{$\Delta' \proves \phi$}
							\AxiomC{$\Gamma(\Delta\fatcomma\psi) \proves \chi$}
							\RightLabel{$\lrn \wand$}
							\BinaryInfC{$\Gamma(\Delta\fatcomma\Delta'\fatcomma\phi \wand \psi) \proves \chi$}
						\end{scprooftree}
					\end{minipage} 
					\begin{minipage}[t]{0.28\linewidth}
						\begin{scprooftree}{0.85}
							\AxiomC{$\Gamma\fatcomma\phi \proves \psi$}
							\RightLabel{$\rrn\wand$}
							\UnaryInfC{$\Gamma \proves \phi \wand \psi$}
						\end{scprooftree}
					\end{minipage} 
				\begin{minipage}[t]{0.28\linewidth}
					\begin{scprooftree}{0.85}
						\AxiomC{$\Gamma(\munit)\proves \chi$}
						\RightLabel{$\lrn \I$}
						\UnaryInfC{$\Gamma(\top^*) \proves \chi$}
					\end{scprooftree}
				\end{minipage} 
				\\ 
					
					\begin{minipage}[t]{0.4\linewidth}
						\begin{scprooftree}{0.85}
							\AxiomC{$\Delta \proves \phi$}
							\AxiomC{$\Gamma(\Delta' \fatsemi\psi) \proves \chi$}
							\RightLabel{$\lrn \to$}
							\BinaryInfC{$\Gamma(\Delta \fatsemi \Delta' \fatsemi \phi \to \psi) \proves \chi$}
						\end{scprooftree}
					\end{minipage}
					\begin{minipage}[t]{0.28\linewidth}
						\begin{scprooftree}{0.85}
							\AxiomC{$\Gamma \fatsemi\phi \proves \psi$}
							\RightLabel{$\rrn \to$}
							\UnaryInfC{$\Gamma \proves \phi \to \psi$}
						\end{scprooftree}
					\end{minipage} 
					\begin{minipage}[t]{0.28\linewidth}
					\begin{scprooftree}{0.85}
						\AxiomC{$\Gamma(\aunit)\proves \chi$}
						\RightLabel{$\lrn \top$}
						\UnaryInfC{$\Gamma(\top) \proves \chi$}
					\end{scprooftree}
				\end{minipage}
				\\ 
			
				\begin{minipage}[t]{0.25\linewidth}
						\begin{scprooftree}{0.85}
							\AxiomC{$\Gamma(\phi\fatcomma\psi) \proves \chi$}
							\RightLabel{$\lrn *$}
							\UnaryInfC{$\Gamma(\phi * \psi) \proves \chi$}
						\end{scprooftree}
					\end{minipage}
					\begin{minipage}[t]{0.35\linewidth}
						\begin{scprooftree}{0.85}
							\AxiomC{$\Gamma \proves \phi$}
							\AxiomC{$\Gamma' \proves \psi$}
							\RightLabel{$\rrn *$}
							\BinaryInfC{$\Gamma\fatcomma\Gamma' \proves \phi*\psi$}
						\end{scprooftree}
					\end{minipage} 
					\begin{minipage}[t]{0.35\linewidth}
					\begin{scprooftree}{0.85}
						\AxiomC{$\Gamma ( \phi  ) \proves \chi$}
						\AxiomC{$\Gamma ( \psi ) \proves \chi $}
						\RightLabel{$\lrn \lor$}
						\BinaryInfC{$\Gamma(\phi \lor \psi) \proves \chi$}
					\end{scprooftree}
				\end{minipage}
				\\ 
				
					\begin{minipage}[t]{0.25\linewidth}
						\begin{scprooftree}{0.85}
							\AxiomC{$\Gamma(\phi\fatsemi\psi) \proves \chi$}
							\RightLabel{$\lrn \land$}
							\UnaryInfC{$\Gamma(\phi \land \psi) \proves \chi$}
						\end{scprooftree}
					\end{minipage} 
					\begin{minipage}[t]{0.4\linewidth}
						\begin{scprooftree}{0.85}
							\AxiomC{$\Gamma \proves \phi$}
							\AxiomC{$\Gamma \proves \psi$}
							\RightLabel{$\rrn \land$}
							\BinaryInfC{$\Gamma \proves \phi\land\psi$}
						\end{scprooftree}
					\end{minipage} 
							\begin{minipage}[t]{0.3\linewidth}
					\begin{scprooftree}{0.85}
						\AxiomC{$\Gamma \proves  \phi_i$}
						\RightLabel{$\rrn \lor$}
						\UnaryInfC{$\Gamma \proves \phi_1 \lor \phi_2$}
					\end{scprooftree}
				\end{minipage}
			\\ 
			
					\begin{minipage}[t]{0.24\linewidth}
						\begin{scprooftree}{0.85}
							\AxiomC{$\square$}
							\RightLabel{$\bot$}
							\UnaryInfC{$\Gamma(\bot) \proves \phi$ }
						\end{scprooftree}
					\end{minipage}
					\begin{minipage}[t]{0.24\linewidth}
						\begin{scprooftree}{0.85}
							\AxiomC{$\square$}
							\RightLabel{\ax}
							\UnaryInfC{$\rm p \proves \rm p$ }
						\end{scprooftree}
					\end{minipage}
					\begin{minipage}[t]{0.24\linewidth}
						\begin{scprooftree}{0.85}
							\AxiomC{$\square$}
							\RightLabel{$\rrn \I$}
							\UnaryInfC{$\munit \proves \I$ }
						\end{scprooftree}
					\end{minipage}
					\begin{minipage}[t]{0.24\linewidth}
						\begin{scprooftree}{0.85}
							\AxiomC{$\square$}
							\RightLabel{$\rrn \top$}
							\UnaryInfC{$\aunit \proves \top$ }
						\end{scprooftree}
					\end{minipage}
				\end{minipage}
				}
			\caption{System \system{LBI} \label{fig:lbi}}
		\end{figure}
		
		}
		\newcommand{\exchrules}{
		\begin{figure}[t]
		\fbox{
		\begin{minipage}{0.95\linewidth}
		
        \begin{minipage}{0.24\linewidth}
        \begin{scprooftree}{0.85}
        \AxiomC{$\Gamma(\Delta\fatsemi \aunit) \proves \chi$}
        \RightLabel{$\contadd$}
        \UnaryInfC{$\Gamma(\Delta) \proves \chi$}
        \end{scprooftree}
        \end{minipage}
        \begin{minipage}{0.24\linewidth}
        \begin{scprooftree}{0.85}
        \AxiomC{$\Gamma(\Delta\fatcomma \munit) \proves \chi$}
        \RightLabel{$\contmult$}
        \UnaryInfC{$\Gamma(\Delta) \proves \chi$}
        \end{scprooftree}
        \end{minipage}
        \begin{minipage}{0.24\linewidth}
        \begin{scprooftree}{0.85}
        \AxiomC{$\Gamma(\Delta) \proves \chi$}
        \RightLabel{$\weakadd$}
        \UnaryInfC{$\Gamma(\Delta\fatsemi \aunit) \proves \chi$}
        \end{scprooftree}
        \end{minipage}
        \begin{minipage}{0.24\linewidth}
        \begin{scprooftree}{0.85}
        \AxiomC{$\Gamma(\Delta) \proves \chi$}
        \RightLabel{$\weakmult$}
        \UnaryInfC{$\Gamma(\Delta\fatcomma \munit) \proves \chi$}
        \end{scprooftree}
        \end{minipage}
	
	   \begin{minipage}{0.47\linewidth}
        \begin{scprooftree}{0.85}
        \AxiomC{$\Gamma(\Delta' \fatcomma \Delta) \proves \chi$}
        \RightLabel{$\mcomm$}
        \UnaryInfC{$\Gamma(\Delta \fatcomma \Delta') \proves \chi$}
        \end{scprooftree}
        \end{minipage}
        	   \begin{minipage}{0.47\linewidth}
        \begin{scprooftree}{0.85}
        \AxiomC{$\Gamma(\Delta' \fatsemi \Delta) \proves \chi$}
        \RightLabel{$\acomm$}
        \UnaryInfC{$\Gamma(\Delta \fatsemi \Delta') \proves \chi$}
        \end{scprooftree}
        \end{minipage}

        	   \begin{minipage}{0.47\linewidth}
        \begin{scprooftree}{0.85}
        \AxiomC{$\Gamma((\Delta \fatcomma \Delta') \fatcomma \Delta'') \proves\chi$}
        \RightLabel{$\masso$}
        \UnaryInfC{$\Gamma(\Delta \fatcomma (\Delta' \fatcomma \Delta'')) \proves \chi$}
        \end{scprooftree}
        \end{minipage}
                \begin{minipage}{0.47\linewidth}
        \begin{scprooftree}{0.85}
        \AxiomC{$\Gamma((\Delta \fatsemi \Delta') \fatsemi \Delta'') \proves\chi$}
        \RightLabel{$\aasso$}
        \UnaryInfC{$\Gamma(\Delta \fatsemi (\Delta' \fatsemi \Delta'')) \proves \chi$}
        \end{scprooftree}
        \end{minipage}

		\end{minipage}
		}
		\caption{The Exchange Rules} \label{fig:exch}
		\end{figure}
		}
\newcommand{\uni}[1][V]{\mathbb{#1}}
\newcommand{\Atoms}{\mathbb{P}}
\newcommand{\set}[1]{\ensuremath{\mathbb{#1}}}
\newcommand{\system}[1]{\ensuremath{\mathsf{#1}}}
\newcommand{\logic}[1]{\ensuremath{\mathfrak{#1}}}
\newcommand{\model}[1][M]{\ensuremath{\mathfrak{#1}}}
\newcommand{\proves}[1][]{\vdash_{\system{#1}}}
\newcommand{\formulas}{\mathbb{F}}
\renewcommand{\models}[1][]{\vDash_{\mathcal{#1}}}
\newcommand{\satisfies}[1][]{\Vdash_{#1}}
\newcommand{\powerset}{\mathcal{P}}
\newcommand{\with}{\, \binampersand \, }
\newcommand{\parr}{\, \bindnasrepma \, }
\newcommand{\hash}{\#}
\renewcommand{\epsilon}{\varepsilon}
\newenvironment{scprooftree}[1]%
{\gdef\scalefactor{#1}\begin{center}\proofSkipAmount \leavevmode}%
	{\scalebox{\scalefactor}{\DisplayProof}\proofSkipAmount \end{center} }
\newcommand{\redop}[1][]{\rho_{\system{#1}}}
\renewcommand{\ll}{[\![}
\newcommand{\rr}{]\!]}
\DeclareMathSymbol{\fatcomma}{\mathrel}{bbold}{\lq\,}
\newcommand{\subbunch}{\triangleleft}
\newcommand{\subbuncheq}{\trianglelefteq}
\newcommand{\B}{\mathbb{B}}
\newcommand\wand{\mathrel{-\mkern-15mu-\mkern-9mu\ast}}
\newcommand{\munit}{\varnothing_\times}
\newcommand{\aunit}{\varnothing_+}
\newcommand{\I}{\top^{*}}
\newcommand{\weak}{\rn{w}}
\newcommand{\cont}{\rn{c}}
\newcommand{\exch}{\rn{e}}
\newcommand{\rn}[1]{\ensuremath{\mathsf{#1}}}
\newcommand*{\contadd}{\rn{c}_{\aunit}}
\newcommand*{\contmult}{\rn{c}_{\munit}}
\newcommand*{\weakadd}{\rn{w}_{\aunit}}
\newcommand*{\weakmult}{\rn{w}_{\munit}}
\newcommand*{\rrn}[1]{#1_\mathsf{R}}
\newcommand*{\lrn}[1]{#1_\mathsf{L}}
\newcommand*{\cl}[1]{\mathsf{cl}(#1)}
\newcommand*{\rcl}[1]{\cl{#1}_\mathsf{R}}
\newcommand*{\lcl}[1]{\cl{#1}_\mathsf{L}}
\newcommand{\comm}{\rn{comm}}
\newcommand{\asso}{\rn{asso}}
\newcommand{\acomm}{\comm_+}
\newcommand{\aasso}{\asso_+}
\newcommand{\mcomm}{\comm_\times}
\newcommand{\masso}{\asso_\times}
\newcommand{\cut}{\rn{cut}}
\newcommand{\taut}{\rn{taut}}
\newcommand{\ax}{\rn{id}}
\newcommand{\metatop}{\square}
\begin{document}

\begin{abstract}

This paper studies a new approach to proving soundness and completeness that bypasses truth-in-a-model to work directly with validity. Essentially, rather than working with specific worlds in specific models one reasons with eigenworlds (i.e., generic representatives of worlds) in an arbitrary model, which is handled by a calculus for validity. The method proceeds through the perspective of reductive logic (as opposed to the more traditional paradigm of deductive logic), using proof-search spaces as a medium for showing the behavioural equivalence of provability in BI's sequent calculus and validity in the given calculus. 
As BI combines intuitionistic propositional logic and multiplicative intuitionistic linear logic,  
its meta-theory is quite complex, which has resulted in much consternation with regards to its model theory; for example, the literature on BI contains many similar but ultimately different algebraic structures and satisfaction relations that either capture only fragments of the logic (albeit large ones) or have complex clauses for certain connectives (e.g., Beth's clause for disjunction rather than Kripke's). This complexity makes BI a suitable choice to demonstrate the approach to completeness. 
\end{abstract}

\maketitle

\section{Introduction} \label{sec:introduction}
 This paper centres around a new approach for proving soundness and completeness. It is an extended case-study of the method applied to \emph{the logic of Bunched Implications} (BI) \cite{OHearn1999}, which is chosen as the subject because the complexities in the logic's syntax and meta-theory help expose the more subtle aspects of how and why the method works.  Essentially, the approach proceeds by showing equivalence of provability and validity by showing \emph{behavioural} equivalence. This supports the intuition that rules for the connectives in the sequent calculus \emph{define} their meaning, since it is with these rules that the clauses of satisfaction must match.
 
As a logic, BI can be seen as arising from proof-theoretic considerations regarding the relationship between conjunction and implication, and contains primitive additive and multiplicative variants of both connectives. Consequently, contexts in BI are not lists, multisets, nor sets, they are instead \emph{bunches}, a data-structure constructed out of formulas using two context-formers, one denoting additive conjunction and one denoting multiplicative conjunction, that do not commute with each other but individually behave as expected (i.e., they are commutative and associative). The interaction between the additive and multiplicative parts of the logic renders much of the meta-theory of BI subtle and complex. 

Taking a logic to be distinguished from a language by the presence of a reasoning system, the \emph{a priori} semantics of BI in this paper is provided by the logic's sequent calculus. Sequents have two components, a context $\Gamma$ (a bunch) and an extract $\phi$ (a formula), the nomenclature being purposefully suggestive: the context is regarded as available information, and the extract as inferred information. A sequent $\Gamma : \phi $ is consequence when it has a proof in the sequent calculus $\system{LBI}$ (defined below), in which case one may write $\Gamma \proves[LBI] \phi$, the relation being called \emph{provability}. The choice of the sequent calculus over the other formalisms available for BI (e.g., Hilbert and Natural Deduction systems) is justified below.

This paper concerns the model theory of BI. One gives a model-theoretic account of BI-truth by means of a satisfaction relation of the form $w \satisfies \phi$, in which the $w$, often called \emph{possible worlds} (a terminological legacy from philosophy), are elements from a structure called a \emph{frame}, and the $\phi$ are formulas. Let $\mathcal{C}$ be a class of frames for which satisfaction is defined, then one has a relation called \emph{validity} on sequents $\Gamma : \phi $ that holds when, in any model from $\mathcal{C}$ at any world $w$, if $w \satisfies \Gamma$, then $w \satisfies \phi$. One writes $\Gamma \models \phi$ when $\Gamma : \phi $ is valid.  

This validity relation is a model-theoretic semantics for BI when it is equivalent to provability, equality being captured by \emph{soundness and completeness}: the provability relation is \emph{sound} with respect to validity when $\Gamma \proves \phi$ implies $\Gamma \models \phi$, and it is \emph{complete} when $\Gamma \models \phi$ implies $\Gamma \proves \phi$. Unpacking validity, one reads soundness and completeness as saying: the sequent $\Gamma : \phi$ is a consequence iff in the \emph{context} of just knowing that $w$ witnesses $\Gamma$ one can \emph{extract} that $w$ also witnesses $\phi$. From this perspective, one expects the proof-theoretic and model-theoretic views of logic to be reasonably close.

Various sound and complete semantics have been studied for BI in the past, including categorical, relational, and topological variants, but the most widely used models of BI in the literature employ the monoidal semantics for which completeness has been a subtle problem. Currently soundness and completeness results have been achieved only for monoids with partial or non-deterministic products or with total deterministic products, but with Beth's clause for disjunction in the definition of satisfaction. These previous results are discussed in Section \ref{sec:previous}. This paper demonstrates the completeness result for a general class of relational models with few conditions that subsumes the relational and monoidal semantics discussed above, while employing Kripke's clause for disjunction. That this is possible while avoiding the complications that arise in previously considered term-model constructions demonstrates the strength of the approach.

The intuition behind the approach to soundness and completeness in this paper is that the ways in which proof theory and model theory define the connectives coincide; for example, in both paradigms additive conjunction is defined by having that relative to some available information regarded either as context $\Gamma$ or a state of knowledge $w$ one has the conjunction $\phi \land \psi$ if and only if from the same information one has each of $\phi$ and $\psi$ independently. It proceeds by showing that whatever reasoning can be done in the proof theory can be \emph{simulated} in the model theory and \emph{vice versa}. Here, reasoning is characterized through the perspective on logic called \emph{reductive logic} --- dual to the more traditional paradigm of \emph{deductive logic} --- as explorations of \emph{proof-search spaces}. The definition of a proof-search space is contained within (see Section \ref{sec:proofsearch}), but it is only a formal treatment of the intuitive idea as the space of sequents accessible by means of reductive reasoning (i.e., by using sequent calculus rules backward) from a given sequent. The value of working with a sequent calculus for BI is that is has local correctness so that reasoning is easier to characterize.

 To formulate a proof-search space characterization of reasoning about validity, one needs a concept of a proof-object for it, which is handled by encoding it with a \emph{meta-logic} in which worlds and formulas are terms and satisfaction is a relation, an application of similar ideas used in universal algebra and applied logic (see Section \ref{subsec:encoding}). Consequently, the study of validity can bypass the concept of truth in a model because rather than working with actual worlds one can work with eigenvariables for worlds in the meta-logic, dubbed \emph{eigenworlds}, so that the reasoning being witnessed can be instantiated at any world in any model. 

The paper begins with a terse but self-contained syntactic and proof-theoretic formulation of BI given in Section \ref{sec:bi}, which also defines the BI-frame semantics; it continues with a brief summary of proof-search spaces in Section \ref{sec:proofsearch}; this is followed by an analysis of model-theoretic reasoning as captured by proof-search in a meta-logic in Section \ref{sec:modeltheoryquaclassicaltheory}; the work of the previous sections are combined in Section \ref{sec:snc} to prove soundness and completeness of BI with respect to BI-frames; the paper then recapitulates previous result on the semantics of BI in Section \ref{sec:previous}, contrasting them with the results herein; a brief review of Beth's clause for disjunction in the context of the methods of this paper is given in Section \ref{sec:beth}. The paper concludes in Section \ref{sec:conclusion} with a summary of the main theorem and thesis, and a proposal of future work.

\section{The Logic of Bunched Implications} \label{sec:bi}

In this section, we give a syntactic and proof-theoretic account of BI that provides the concept of BI-truth, as well as define the concept of a model for which we prove soundness and completeness. The first part recalls the usual sequent calculus, the second part provides the ancillary proof theory for the soundness and completeness results below, and the third introduces the models.

\subsection{Syntax}

The logic of Bunched Implications (BI) \cite{OHearn1999} can be regarded as the free combination (i.e., the fibration --- see Gabbay \cite{Gabbay1998}) of (additive) intuitionistic logic, with  connectives $\land, \lor, \to, \top, \bot$, and multiplicative intuitionistic logic, with  connectives $*, \wand, \I$. A distinguishing feature of BI is that contexts are not one of the familiar structures of lists, multisets, or sets, since the two context-formers $\fatsemi$ and $\fatcomma$ representing the two conjunctions $\land$ and $*$, respectively, do not commute with each other, though individually they behave as usual; contexts are instead \emph{bunches}  --- a term that derives from the relevance logic literature (see, for example, Read \cite{Read1988}).

	\begin{Definition}[Formulas]
		Let $\Atoms$ be a set of propositional letters. The set of formulas $\formulas$ is defined by the following grammar:
		\[
		\phi ::= \rm p \in \Atoms \mid \top \mid \bot \mid \I \mid \phi \land \psi \mid \phi \lor \psi \mid \phi \to \phi \mid \phi * \phi \mid \phi \wand \phi
		\]
	\end{Definition}
\begin{Definition}[Bunches] \label{def:bunch}
	   The set of bunches $\B$ is defined by the following:
		\[
		\Gamma ::= \phi \in \formulas \mid \aunit \mid \munit \mid \Gamma \fatsemi \Gamma \mid \Gamma \fatcomma \Gamma
		\]
		The $\fatsemi$ is the additive context-former and the $\aunit$ is the additive unit; the $\fatcomma$ is the multiplicative context-former and the $\munit$ is the multiplicative unit.
	\end{Definition}
	\begin{Definition}[BI-sequent] \label{def:sequent}
		A BI-sequent is a pair $\Gamma : \phi $ in which $\Gamma$ is a bunch, called the context, and $\phi$ is a formula, called the extract. The empty pair is also a sequent denoted $\square$.
	\end{Definition}
	
	Let $\Delta \subbunch \Gamma$ denote that $\Delta$ is a proper sub-tree of $\Gamma$, and let $\Delta \subbuncheq \Gamma$  
denote that either $\Delta \subbunch \Gamma$ or $\Delta = \Gamma$, in which case $\Delta$ is called a sub-bunch of $\Gamma$. One may 
write $\Gamma(\Delta)$ to mean that $\Delta$ is a sub-bunch of $\Gamma$. The operation $\Gamma[\Delta \mapsto \Delta']$ --- abbreviated to 
$\Gamma(\Delta')$ where no confusion arises --- is the result of replacing the occurrence of $\Delta$ by $\Delta'$. 

Since contexts are more complex than in many of the more familiar logics (e.g., classical logic, intuitionistic logic, etc), the following is an explicit characterization of the analogous structural behaviour (i.e., equivalence up-to permutation):

\begin{Definition}[Coherent Equivalence]
	Two bunches $\Gamma,\Gamma' \in \B$ are coherently equivalent when $\Gamma \equiv \Gamma'$, where $\equiv$ is the least relation satisfying: 
	
	\begin{itemize}
		\item commutative monoid equations for $\fatsemi$ with unit $\aunit$
		\item commutative monoid equations for $\fatcomma$ with unit $\munit$
		\item coherence; that is, if $\Delta\equiv\Delta' $ then $\Gamma(\Delta) \equiv \Gamma(\Delta')$. 
	\end{itemize} 
\end{Definition}
	
Bunches are typically understood as the syntax trees provided by Definition~\ref{def:bunch} modulo coherent equivalence, in the same way that lists for the contexts of classical logic sequents are understood modulo permutation.

The idea that the context-formers are representations of the conjunctions and the units are representations of the tops provides the following transformation:

\begin{Definition}[Compacting]
    The compacting function $\lfloor - \rfloor : \B \to \formulas$ is defined inductively by fixing formulas and the following action on bunches:
    \[
    \lfloor \Gamma \fatcomma \Delta \rfloor := \lfloor \Gamma \rfloor * \lfloor \Delta \rfloor \quad  \lfloor \munit \rfloor := \I \quad \lfloor \aunit \rfloor := \top  \quad  \lfloor \Gamma \fatsemi \Delta \rfloor := \lfloor \Gamma \rfloor \land \lfloor \Delta \rfloor
    \]
\end{Definition}

Since much of subsequent work concerns proof-search and the reductive view of logic, proofs are defined by a correctness criterion rather than the familiar inductive construction (see, for example, Troelstra and Schwichtenberg \cite{Troelstra}), doubtlessly already familiar. 

	\begin{Definition}[Rule]
	A rule $\rn{r}$ is a relation on sequents.
	\end{Definition}
    The situtation $\rn{r}(S,S_1,...,S_n)$ may be denoted in the following format:
	\[
	\infer[\rn{r}]{S}{S_1 & ... & S_n}
	\]
	\begin{Definition}[Sequent Calculus]
		A sequent calculus is a set of rules.
	\end{Definition}
	
	\begin{Definition}[Proof] \label{def:proof}
	 	 Let $\system{L}$ be a sequent calculus and let $S =  \Sigma : \Pi $ be a sequent. A rooted finite tree $\mathcal{D}$ of sequents is a $\textsf{L}$-proof of $S$, if for any node $\zeta$,
	 	\begin{itemize}
	 		\item if $\zeta$ is a leaf if and only if $\zeta = \square$;
	 		\item if $\zeta$ has children $P_0,...,P_n$ in $\mathcal{D}$, then there is a rule $\rn{r} \in \system{L}$ such that $\rn{r}(P_0,...,P_n,\zeta)$; and,
	 		\item if $\zeta$ is the root, then $\zeta$ is $S$.
	 	\end{itemize}
	 	The $\system{L}$-provability judgment $\Sigma \proves[L] \Pi$ holds if and only if there is an $\system{L}$-proof of the sequent $ \Sigma : \Pi $.
	 \end{Definition}

	The sequent calculus defining BI is defined as follows:
	
	\begin{Definition}[System \system{LBI}]
	    System \system{LBI} is given in Figure \ref{fig:lbi}.
	\end{Definition}
\lbi

The symbol $\square$ is used to facilitate the transition between the concept of a proof and the concept of a reduction in a proof-search space in Section \ref{sec:proofsearch}.

\subsection{Proof Theory} \label{subsec:prooftheory}
Heuristically, the provability judgment $\Gamma \proves \phi$ is an \emph{implication} that says whenever all of $\Gamma$ holds it follows that $\phi$ holds. Taking the intuition seriously, provability should be transitive and reflexive, which is captured by the admissibility of the $\cut$-rule and truth of tautologies, respectively:

\begin{Lemma}
   If $\Delta \proves[LBI] \phi$ and $\Gamma(\phi) \proves[LBI] \chi$, then $\Gamma(\Delta) \proves[LBI] \chi$.
\end{Lemma}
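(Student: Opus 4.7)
The plan is a standard Gentzen-style cut-admissibility argument adapted to bunches: a double induction, the outer on the complexity of the cut formula $\phi$ and the inner on the sum of the heights of the given $\system{LBI}$-proofs of $\Delta \proves[LBI] \phi$ and $\Gamma(\phi) \proves[LBI] \chi$. Since cut is not literally a rule of $\system{LBI}$, I would formulate the result as: given the two proofs, one constructs a proof of $\Gamma(\Delta) \proves[LBI] \chi$ by recursion on this well-founded measure, with case analysis on the last rule of each input derivation.

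First I would dispatch the axiom cases. If the right proof ends in $\ax$, then either $\Gamma(\phi) = \rm p = \chi$ (so $\phi = \rm p$ and the left proof already proves $\Gamma(\Delta) = \Delta \proves \rm p$), or the axiom is $\bot$- or unit-based and $\phi$ is a side formula; in either sub-case the output proof is immediate. Symmetrically for the left proof. Then I would handle the \emph{principal} cases, in which $\phi$ is the principal formula of the last rule of both derivations. Here the outer induction hypothesis applies: for $\phi = \phi_1 * \phi_2$, for example, I would combine the premises of $\rrn *$ in the left proof with the premise of $\lrn *$ in the right proof by two cuts on $\phi_1$ and $\phi_2$, each of strictly smaller complexity; analogous reductions handle $\land$, $\lor$, $\to$, $\wand$, $\I$, $\top$, and $\top^*$ by pairing each right rule with its matching left rule. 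The only subtlety is in the multiplicative cases, where the premise of the right rule must be sliced into the bunch $\Gamma(\Delta)$ using the bunch-replacement operation $\Gamma[\phi \mapsto \Delta]$.

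For all remaining \emph{commutative} cases, where $\phi$ is not principal on at least one side, I would push the cut upward using the inner induction hypothesis. If the right derivation ends in a rule whose principal formula lies in $\Gamma$ (outside the distinguished occurrence of $\phi$), the cut descends through it unchanged after rewriting the bunch-context; similarly if the left derivation ends with a rule not introducing $\phi$. The interesting structural cases are $\weak$ and $\cont$: weakening on the right derivation is either trivial (if the weakened sub-bunch is disjoint from $\phi$) or absorbs the cut; contraction, when it duplicates a sub-bunch containing $\phi$, requires two applications of cut on the same $\phi$, which is justified by the inner induction hypothesis applied independently to each copy and then reassembled via contractions on $\Delta$. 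The exchange rule $\exch_{(\Gamma \equiv \Gamma')}$ commutes through freely, since coherent equivalence is preserved by the replacement $\Gamma[\phi \mapsto \Delta]$.

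The main obstacle I anticipate is the interaction between contraction and the multiplicative left rules $\lrn *$ and $\lrn \wand$: contracting a sub-bunch $\Delta_0$ that contains $\phi$ and then splitting multiplicatively forces us to duplicate the left derivation of $\Delta \proves \phi$ inside each half of the multiplicative split, which in turn requires weakening by an additive context while respecting the bunch discipline. Thanks to the admissibility of the structural rules on bunches and the fact that the secondary induction measure (sum of heights) strictly decreases when one merely duplicates the left proof without altering its cut-formula complexity, this is tractable; but it is the case where the bunch structure genuinely forces a more delicate bookkeeping than in classical or intuitionistic sequent calculi, and where a careful statement of the induction hypothesis (quantified over all bunch contexts $\Gamma(\cdot)$) is essential.
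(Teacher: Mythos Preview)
The paper does not prove this lemma itself; it defers entirely to Brotherston and to Gheorghiu--Marin. Your outline follows the standard Gentzen lines those references take, and most of the case analysis is sound, but there is a real gap in the contraction case.

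When the right derivation ends in $\cont$ on a sub-bunch containing the cut formula, the premise carries two occurrences of $\phi$, and you propose to cut on each ``independently'' under the inner hypothesis on the sum of heights. The first cut is licensed, since the premise has strictly smaller height; but the derivation it produces may have arbitrary height, so the second cut is no longer covered. Your claim that ``the secondary induction measure strictly decreases when one merely duplicates the left proof'' is false here: the left proof is unchanged, the right one may have grown. This is precisely the classical obstruction that forces Gentzen's mix rule. The repair used in the cited references is to prove admissibility of a \emph{multicut} that substitutes $\Delta$ for several designated occurrences of $\phi$ simultaneously; the contraction case then becomes a single appeal to the inner hypothesis followed by contractions on copies of $\Delta$. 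Incidentally, your diagnosis of the hardest case is slightly off: the core obstruction is contraction against cut itself, already present in the purely additive fragment, rather than its interaction with $\lrn *$ or $\lrn \wand$, which add bookkeeping but no new conceptual difficulty once multicut is in place.
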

\begin{proof}
Proved by Brotherston \cite{Brotherston2012} (see also Gheorghiu and Marin~\cite{Gheorghiu2021}).
\end{proof}
\begin{Lemma}\label{lem:taut}
 For any $\Gamma \in \B$ the following holds: $\Gamma \proves[LBI] \lfloor \Gamma \rfloor$.
\end{Lemma}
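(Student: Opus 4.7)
The plan is to prove the statement by structural induction on the bunch $\Gamma$, with one preliminary auxiliary result needed to handle the base case for formulas: namely, an \emph{identity expansion} lemma asserting that $\phi \proves[LBI] \phi$ for every formula $\phi$. Since the only axiom for atomic identity in System \system{LBI} is $\rn{p} \proves \rn{p}$, this expansion itself requires a separate induction on formula structure, which is entirely routine: for each principal connective one applies the corresponding right-rule followed by the corresponding left-rule (e.g., for $\phi \wand \psi$, the inductive hypothesis gives $\phi \proves \phi$ and $\psi \proves \psi$, so $\lrn\wand$ yields $\phi \fatcomma (\phi \wand \psi) \proves \psi$, and $\rrn\wand$ then delivers $\phi \wand \psi \proves \phi \wand \psi$; the cases for $\land, \lor, \to, *$ are analogous, while $\top, \I, \bot$ are handled directly by the corresponding nullary rules combined with their left-introductions).

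With identity expansion in hand, the main induction is straightforward. The base cases are: (i) $\Gamma = \phi$, handled by identity expansion since $\lfloor \phi \rfloor = \phi$; (ii) $\Gamma = \aunit$, for which $\lfloor \aunit \rfloor = \top$ and $\aunit \proves \top$ is an axiom; and (iii) $\Gamma = \munit$, for which $\lfloor \munit \rfloor = \I$ and $\munit \proves \I$ is an axiom. For the multiplicative inductive case $\Gamma = \Gamma_1 \fatcomma \Gamma_2$, the inductive hypothesis provides $\Gamma_i \proves \lfloor \Gamma_i \rfloor$, and a single application of $\rrn *$ gives $\Gamma_1 \fatcomma \Gamma_2 \proves \lfloor \Gamma_1 \rfloor * \lfloor \Gamma_2 \rfloor = \lfloor \Gamma_1 \fatcomma \Gamma_2 \rfloor$. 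For the additive case $\Gamma = \Gamma_1 \fatsemi \Gamma_2$, one first weakens each hypothesis to the combined context (using $\weak$, together with exchange via coherent equivalence to reach $\Gamma_1 \fatsemi \Gamma_2$ from $\Gamma_2 \fatsemi \Gamma_1$ in the second sub-case), obtaining $\Gamma_1 \fatsemi \Gamma_2 \proves \lfloor \Gamma_1 \rfloor$ and $\Gamma_1 \fatsemi \Gamma_2 \proves \lfloor \Gamma_2 \rfloor$, after which $\rrn\land$ yields the desired conclusion.

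The only point of genuine care lies in the additive inductive step, where one must explicitly invoke the exchange rule $\exch_{(\Gamma \equiv \Gamma')}$ to accommodate the fact that the $\weak$ rule in Figure~\ref{fig:lbi} only adds a bunch on the right of an existing additive sub-bunch; this is trivially handled by commutativity of $\fatsemi$ in the coherent equivalence. The rest of the argument is mechanical, so I expect no substantive obstacle — the work is essentially bookkeeping across the clauses of the compacting function and the corresponding right-rules of \system{LBI}.
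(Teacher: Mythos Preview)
Your proposal is correct and follows essentially the same approach as the paper: both proceed by induction on the structure (equivalently, size) of the bunch $\Gamma$. The paper's own proof is a one-line reference to that induction, whereas you spell out the identity-expansion auxiliary and each case explicitly; the content is the same.
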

\begin{proof}
    This follows from induction on the size of $\Gamma$ --- see, for example, Gheorghiu and Marin~\cite{Gheorghiu2021}.
\end{proof}

The method for proving soundness and completeness of BI with respect to a general class of models in this paper relies on showing that the proof theory and the model theory behave in the same way, but that does not mean that $\system{LBI}$ behaves as the clauses defining satisfaction in Section \ref{subsec:modeltheory} below. The underlying ideology that the rules of the sequent calculus \emph{define} the connectives (and context-formers) of the logic. From this perspective, the exchange rule $\exch$ is not tractable since its definition outsources the key behaviour of the syntax that it concerns. Hence, it is reformulated to be more suitable:

\exchrules

\begin{Lemma}\label{lem:lbiprime}
 The rules in Figure \ref{fig:exch} are admissible for BI. 
\end{Lemma}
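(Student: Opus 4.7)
The plan is to observe that each of the eight rules displayed in Figure \ref{fig:exch} is a special case of the $\exch$ rule of $\system{LBI}$, witnessed by a specific instance of the coherent equivalence $\equiv$. Since $\equiv$ is defined to include the commutative monoid equations for $\fatsemi$ with unit $\aunit$, the commutative monoid equations for $\fatcomma$ with unit $\munit$, and a coherence (congruence) clause allowing substitution inside any context $\Gamma(-)$, each rule in Figure \ref{fig:exch} can be simulated by a single application of $\exch$ with a suitable witness.

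I would handle the eight rules by splitting them into three groups. First, the commutativity rules $\acomm$ and $\mcomm$: from the basic equations $\Delta \fatsemi \Delta' \equiv \Delta' \fatsemi \Delta$ and $\Delta \fatcomma \Delta' \equiv \Delta' \fatcomma \Delta$, coherence yields $\Gamma(\Delta \fatsemi \Delta') \equiv \Gamma(\Delta' \fatsemi \Delta)$ (and similarly for $\fatcomma$), so $\exch$ delivers each rule. Second, the associativity rules $\aasso$ and $\masso$ are treated identically, using the associativity equations for $\fatsemi$ and $\fatcomma$ respectively together with coherence. Third, the unit rules $\weakadd$, $\contadd$, $\weakmult$, $\contmult$ are derived from the unit-law equations $\Delta \fatsemi \aunit \equiv \Delta$ and $\Delta \fatcomma \munit \equiv \Delta$: for instance, given a derivation of $\Gamma(\Delta) \proves \chi$, coherence yields $\Gamma(\Delta) \equiv \Gamma(\Delta \fatsemi \aunit)$, and a single $\exch$ step produces $\Gamma(\Delta \fatsemi \aunit) \proves \chi$, which is $\weakadd$; the other three follow symmetrically, with $\contadd$ and $\contmult$ invoking the equivalence in the opposite direction.

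No serious mathematical obstacle is expected, since in each case the work reduces to matching the given rule with a basic clause in the inductive definition of $\equiv$ and then appealing to coherence to transport the equation into the hole of $\Gamma(-)$. The only bookkeeping point worth being explicit about is that the definition of $\equiv$ must be read symmetrically: the equations $\Delta \fatsemi \aunit \equiv \Delta$ and $\Delta \fatcomma \munit \equiv \Delta$ justify \emph{both} the introducing rules ($\weakadd$, $\weakmult$) and the absorbing rules ($\contadd$, $\contmult$), since $\equiv$ is an equivalence relation. With this remark, the proof reduces to eight essentially identical one-line applications of $\exch$, one per rule.
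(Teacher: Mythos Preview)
Your proposal is correct and takes essentially the same approach as the paper: each rule in Figure~\ref{fig:exch} is obtained as a single instance of $\exch$, since the premiss and conclusion differ by one step of the commutative monoid equations lifted into context by coherence. The paper's proof is a terse one-sentence version of exactly this observation; your treatment simply spells out the eight cases more explicitly.
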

\begin{proof}
Since $\Gamma \equiv \Gamma'$ if and only if there is a sequence of steps using the commutative monoid axioms and all of these are encoded by the new rules of Figure \ref{fig:exch}, these rules are admissible.
\end{proof}

Below are some generalizations of rules to facilitate subsequent discussion:

\begin{Lemma}
   The following rules are admissible for BI:
   \[
   \infer[\taut]{\Gamma \fatsemi \phi \proves \phi}{} \qquad \infer[\weak \rrn*]{\Gamma \fatsemi (\Delta \fatcomma \Delta') \proves \phi * \psi}{\Delta \proves \phi & \Delta \proves \psi} \qquad  \infer[\lrn \bot]{\Gamma(\bot) \proves \phi}{\Gamma(\phi) \proves\phi}
   \]
\end{Lemma}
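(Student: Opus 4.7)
The plan is to derive each of the three displayed rules directly from the rules of $\system{LBI}$, using the admissibility results already established in this subsection (in particular Lemma~\ref{lem:taut} and Lemma~\ref{lem:lbiprime}). Each derivation is short, so no substantive obstacle is anticipated; the only minor bookkeeping concern is that $\weak$ as stated in Figure~\ref{fig:lbi} adds material in a specific syntactic position, whereas our conclusions may require the added bunch to sit on a particular side --- this is handled uniformly by the commutative-monoid rule $\acomm$ from Figure~\ref{fig:exch}, which is admissible by Lemma~\ref{lem:lbiprime}.

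First, for the generalized tautology $\Gamma \fatsemi \phi \proves[LBI] \phi$: the plan is to apply Lemma~\ref{lem:taut} to the singleton bunch $\phi$. Since $\lfloor \phi \rfloor = \phi$, this yields $\phi \proves[LBI] \phi$. One then applies $\weak$ (taking the extra additive bunch to be $\Gamma$), possibly followed by $\acomm$, to obtain $\Gamma \fatsemi \phi \proves[LBI] \phi$.

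Second, for the combined rule $\weak\rrn *$: assuming the stated premises $\Delta \proves[LBI] \phi$ and $\Delta' \proves[LBI] \psi$, one application of $\rrn *$ gives $\Delta \fatcomma \Delta' \proves[LBI] \phi * \psi$. A single application of $\weak$ then prepends $\Gamma$ additively to produce $\Gamma \fatsemi (\Delta \fatcomma \Delta') \proves[LBI] \phi * \psi$. Again $\acomm$ from Figure~\ref{fig:exch} is invoked if needed to line up the order of the resulting bunch.

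Third, for $\lrn \bot$: the $\bot$-axiom of $\system{LBI}$ already gives $\Gamma(\bot) \proves[LBI] \phi$ as a zero-premise rule for every bunch context $\Gamma(-)$ and every formula $\phi$. The admissibility of the displayed $\lrn\bot$ rule is therefore immediate: one simply discards the premise and applies the $\bot$-axiom to conclude. The premise is strictly speaking unnecessary, but writing the rule in this form is convenient because it makes $\lrn\bot$ fit the same general schema as the other left-rules, which will be relevant in the later proof-search analysis.
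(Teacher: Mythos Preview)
Your argument for the first two rules matches the paper's exactly: Lemma~\ref{lem:taut} plus $\weak$ for $\taut$, and $\rrn *$ followed by $\weak$ for $\weak\rrn *$. (You silently repaired the evident typo in the second premise of $\weak\rrn *$, reading it as $\Delta' \proves \psi$; this is surely what is intended.)

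For $\lrn\bot$ you take a different, more elementary route than the paper. The paper derives the rule by a genuine use of the premise: from the axiom $\bot \proves \phi$ and the premise $\Gamma(\phi)\proves\chi$ it applies $\cut$ to obtain $\Gamma(\bot)\proves\chi$. You instead observe that the $\bot$-axiom of $\system{LBI}$ already yields $\Gamma(\bot)\proves\phi$ outright, so the premise may simply be discarded. Both arguments are sound. Yours is shorter and avoids the appeal to cut-admissibility; the paper's version has the small advantage that it exhibits the new $\lrn\bot$ as a bona fide derived rule in which the premise is actually consumed, which is closer in spirit to how the rule is used in the later proof-search correspondence (where the passage from $\Gamma(\bot)$ to $\Gamma(\phi)$ tracks the absurdity clause).
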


 We may refer to $\weak \rrn *$ simply by $\rrn *$. 
 
\begin{proof}
The first two rules are admissible by combining $\weak$ with Lemma \ref{lem:taut} and $\rrn*$, respectively. The remaining rule is demonstrated to be admissible by the following derivation:

\begin{prooftree}
\AxiomC{$\square$}
\RightLabel{$\lrn \bot$}
\UnaryInfC{$\bot \proves \phi$}
\AxiomC{$\Gamma(\phi) \proves \chi$}
\RightLabel{$\cut$}
\BinaryInfC{$\Gamma(\bot) \proves \chi$}
\end{prooftree}

\end{proof}

The technical results in this section witness the following:

\begin{Definition}[System $\system{sLBI}$]
    System $\system{sLBI}$ is composed of the rules in Figure \ref{fig:slbi} in which $\asso$ is invertable.
\end{Definition}
\begin{figure}[t]
    \fbox{
    \scalebox{0.9}{
	\begin{minipage}{\textwidth}
		\centering 
		\vspace{0.2cm}
	    \[
	    \begin{array}{cc}
		 \infer[\lrn \land]{  \Gamma(\phi\land\psi)  :  \chi}{ \Gamma(\phi\fatsemi\psi) :  \chi}
		  &
		\infer[\rrn \land]{ \Gamma :  \phi \land \psi}{ \Gamma :  \phi  &  \Gamma :  \psi}
		\\[1.5ex]
		 \infer[\lrn \ast]{  \Gamma(\phi*\psi) :  \chi }{  \Gamma(\phi\fatcomma\psi) :  \chi}
		  &
		     \infer[\rrn \ast]{ \Gamma \fatsemi (\Delta_1 \fatcomma \Delta_2) :  \phi_1 * \phi_2}{ \Delta_1 :  \phi_1  &  \Delta_2 :  \phi_2}
		  \\[1.5ex]
		  \infer[\lrn \lor]{  \Gamma(\phi\lor\psi) :  \chi}{ \Gamma(\phi) :  \chi &  \Gamma(\psi) :  \chi}
		   &
		 	\infer[\rrn \lor]{ \Gamma :  \phi_1 \lor \phi_2}{ \Gamma :  \phi_i}
		\\[1.5ex]
		 \infer[\lrn \to]{ \Delta \fatsemi \phi \to \psi :   \chi}{ \Delta :  \phi &  \Gamma(\Delta,\psi) :  \chi }
		  &
		  \infer[\rrn \to]{ \Gamma :  \phi \to \psi}{ \Gamma \fatsemi \phi:  \psi}
		 \\[1.5ex]
\infer[\lrn \wand]{  \Gamma(\Delta_1 \fatcomma \Delta_2, \phi\wand\psi) :  \chi }{  \Delta_2 :  \phi &  \Gamma(\Delta_1 \fatcomma \psi) :  \chi}
 &
       \infer[\rrn \wand]{ \Gamma :  \phi \wand \psi}{ \Gamma \fatcomma \phi :  \psi}
       \\[1.5ex]
		 \infer[\lrn \I]{ \Gamma(\Delta \fatcomma \munit) :   \chi }{ \Gamma(\Delta) :   \chi}
                 &
                 \infer[\lrn \I]{ \Gamma(\Delta) : \chi }{  \Gamma(\Delta \fatcomma \munit) :  \chi}
         \\[1.5ex]
		 \infer[\lrn \top]{ \Gamma(\Delta) :  \chi }{ \Gamma(\Delta\fatsemi \aunit) :  \chi}
		  &
		       \infer[\rrn \top]{ \Gamma :  \top}{\square}
    \\[1.5ex]
                \infer[\lrn \bot]{ \Gamma(\bot) :   \chi }{  \Gamma(\phi) :   \chi }   
                 &
                 \infer[\ax]{ \Gamma\fatsemi \phi :   \phi}{ \square }
       \\[1.5ex]
        \infer[ \comm]{ \Gamma(\Delta_1 \fatcomma \Delta_2) :  \chi}{  \Gamma(\Delta_2 \fatcomma \Delta_1) :  \chi }
         & 
        \infer[\asso]{ \Gamma(\Delta_1 \fatcomma (\Delta_2 \fatcomma \Delta_3)) :  \chi}{  \Gamma((\Delta_1 \fatcomma (\Delta_2) \fatcomma \Delta_3) :  \chi }
        \\[1.5ex]
        \infer[\exch_\rn{1}]{ \Gamma(\Delta_1 \fatsemi \Delta_2) :  \chi}{  \Gamma(\Delta_2 \fatsemi \Delta_1) :  \chi }
         & 
        \infer[\exch_\rn{2}]{ \Gamma(\Delta_1 \fatsemi (\Delta_2 \fatsemi \Delta_3)) :  \chi}{  \Gamma((\Delta_1 \fatsemi \Delta_2) \fatsemi \Delta_3) :  \chi }
        \\[1.5ex]
        \infer[\cont]{ \Gamma(\Delta) : \chi }{  \Gamma(\Delta \fatsemi \Delta) :  \chi}
         &
        \infer[\weak]{ \Gamma(\Delta \fatsemi \Sigma) :  \chi}{  \Gamma(\Delta) :  \chi }
        \end{array}
        \]
        \\[1.5ex]
        $
        \infer[\cut]{ \Gamma(\Delta) : \chi }{  \Delta :  \phi &  \Gamma(\phi) :  \chi}
        $
	    \vspace{0.2cm}
    \end{minipage}
    }
    }
    \caption{System \system{sLBI}}
    \label{fig:slbi}
\end{figure}

\begin{Lemma}\label{lem:slbi}
   $\Gamma \proves[LBI] \phi$ iff $\Gamma \proves[sLBI] \phi$.
\end{Lemma}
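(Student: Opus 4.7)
The plan is to prove the two implications separately by induction on proof structure, showing that each rule of the source calculus is either a rule of, or admissible in, the target calculus.

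For the direction $\Gamma\proves[LBI]\phi \Rightarrow \Gamma\proves[sLBI]\phi$, most logical rules of LBI appear verbatim in sLBI, as do $\weak$ and $\cont$, so only the rules that differ between the two systems need treatment. The exchange rule $\exch_{\Gamma\equiv\Gamma'}$ is decomposed into a sequence of the explicit rules $\acomm,\aasso,\mcomm,\masso,\weakadd,\weakmult,\contadd,\contmult$ of Figure \ref{fig:exch} along the commutative-monoid derivation witnessing $\Gamma \equiv \Gamma'$. The axioms $\rm p\proves \rm p$, $\munit\proves\I$, $\aunit\proves\top$ are instances of $\taut$ because $\lfloor \rm p\rfloor = \rm p$, $\lfloor\munit\rfloor = \I$, and $\lfloor\aunit\rfloor = \top$. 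LBI's $\bot$-axiom $\Gamma(\bot)\proves\phi$ is obtained from $\taut$ (yielding $\Gamma(\phi)\proves\phi$) followed by sLBI's $\lrn\bot$. Finally, LBI's binary $\rrn*$ is recovered as an instance of sLBI's ternary $\rrn*$ with first premise $\Gamma\fatcomma\Gamma'\proves\lfloor\Gamma\fatcomma\Gamma'\rfloor$, itself an instance of $\taut$.

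For the direction $\Gamma\proves[sLBI]\phi \Rightarrow \Gamma\proves[LBI]\phi$, the exchange rules of Figure \ref{fig:exch} are admissible in LBI by Lemma \ref{lem:lbiprime}; $\taut$ is Lemma \ref{lem:taut}; $\lrn\bot$ was shown admissible in the discussion preceding the definition of \system{sLBI}; and $\rrn\bot$ follows by cutting LBI's $\bot$-axiom. The essential cases are therefore the ternary $\rrn*$ and modified $\rrn\I$. For these I would first establish an auxiliary \emph{compacting lemma}: for every bunch $\Delta$ and every context $\Gamma(\cdot)$, $\Gamma(\Delta)\proves[LBI]\chi$ implies $\Gamma(\lfloor\Delta\rfloor)\proves[LBI]\chi$. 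The proof is by structural induction on $\Delta$: the formula case is immediate, the units $\munit$ and $\aunit$ are handled by $\lrn\I$ and $\lrn\top$, and the context-former cases $\Delta_1\fatcomma\Delta_2$ and $\Delta_1\fatsemi\Delta_2$ are handled by two uses of the IH (first on $\Delta_1$ under context $\Gamma(\cdot\fatcomma\Delta_2)$, then on $\Delta_2$ under context $\Gamma(\lfloor\Delta_1\rfloor\fatcomma\cdot)$, and analogously for $\fatsemi$) followed by $\lrn*$ or $\lrn\land$ to combine the compacted sub-formulas. Given this lemma, sLBI's $\rrn*$ is derivable in LBI as follows: applying the lemma to the premises $\Delta_i\proves\phi_i$ yields $\lfloor\Delta_i\rfloor\proves\phi_i$; LBI's $\rrn*$ gives $\lfloor\Delta_1\rfloor\fatcomma\lfloor\Delta_2\rfloor\proves\phi*\psi$; one application of $\lrn*$ yields $\lfloor\Delta_1\fatcomma\Delta_2\rfloor\proves\phi*\psi$; finally, cut with $\Gamma\proves\lfloor\Delta_1\fatcomma\Delta_2\rfloor$ produces $\Gamma\proves\phi*\psi$. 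The modified $\rrn\I$ is handled analogously, concluding $\Gamma\proves\I$ from $\Gamma\proves\lfloor\Delta\rfloor$ and $\Delta\proves\I$ by the compacting lemma and a single cut.

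The principal obstacle is the compacting lemma, which requires a properly generalised induction hypothesis quantified over all occurrence positions $\Gamma(\cdot)$ rather than just the root; without this generalisation the inductive step for $\Delta_1\fatcomma\Delta_2$ cannot proceed, since the IH must be applied under an enclosing context in which other sub-bunches have already been compacted. Everything else reduces to book-keeping, given the admissibility of cut in LBI (Brotherston's cut-elimination, cited above) and the results of Lemmas \ref{lem:taut} and \ref{lem:lbiprime}.
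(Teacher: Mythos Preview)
The paper gives no explicit proof of this lemma; it simply prefaces the statement with ``The technical results in this section witness the following,'' meaning the equivalence is intended to be immediate from the admissibility results already established: cut (Lemma~2.9), the tautology lemma (Lemma~\ref{lem:taut}), the explicit exchange rules (Lemma~\ref{lem:lbiprime}), and the three generalized rules $\taut$, $\weak\rrn*$, $\lrn\bot$ shown admissible just before the definition of $\system{sLBI}$. Your proposal is correct and considerably more explicit than this: you carry out the rule-by-rule translation in both directions and isolate the one genuinely nontrivial step.

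The compacting lemma you introduce is correct and your induction sketch for it is fine. Note, however, that for the version of $\system{sLBI}$ actually displayed in Figure~\ref{fig:slbi} it is not strictly necessary: the $\rrn*$ rule there is the $\weak\rrn*$ form (from $\Delta_i\proves\phi_i$ conclude $\Gamma\fatsemi(\Delta_1\fatcomma\Delta_2)\proves\phi_1*\phi_2$), and its LBI-admissibility was established directly by composing $\weak$ with LBI's binary $\rrn*$. Your argument via the compacting lemma instead handles the stronger ternary $\rrn*$ with premise $\Gamma\proves\lfloor\Delta_1\fatcomma\Delta_2\rfloor$, and likewise the modified $\rrn\I$; these appear in an alternative formulation of $\system{sLBI}$ present in the paper's source but not in the displayed figure. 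So your proof is correct for either formulation and slightly more general than what the paper's own line of argument requires.
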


\subsection{Model Theory} \label{subsec:modeltheory}

In model theory one thinks of statements (i.e., formulas of the logic) not as being universally true, in contrast to classical validity, but instead true with respect to a certain state of affairs such as at a certain time or with respect to some information. For example, intuitionistic logic (IL) is the constructive fragment of classical logic wherein a formula is true when one can provide a method for witnessing it; hence, the states in IL's model-theoretic semantics may be thought of as witnesses of these methods 
and the clauses of the satisfaction relation specify how the witnesses 
relate to each other --- see Dummett 
\cite{Dummett2000} for details. 

The model-theoretic semantics of BI extends the models of IL by allowing witnesses to be decomposed. A witness $w$ satisfies an additive conjunction $\phi \land \psi$ when it satisfies both $\phi$ and $\psi$; a witness $w$ satisfies a multiplicative conjunction $\phi * \psi$ when there are two states $u$ and $v$ in relation $R$ to $w$ such that $u$ satisfies $\phi$ and $v$ satisfies $\psi$. Intuitively, therefore, its semantics should be an extension of the semantics of IL that has precisely enough structure to provide a suitable denotation for the new conjunction. This handling of the semantics by means of a relation is in the style of Routley and Meyer \cite{routley1973semantics} and Urquhart \cite{urquhart1972semantics} for relavant logics.

\begin{Definition}[BI-frame]
    A quintuple $\model := \langle \uni, e, \pi, \prec, R\rangle$ is a BI-frame when $\uni$ is a set, $e$ and $\pi$ are distinguished element of the set,  $\preceq$ is a preorder on the set dominated by $\pi$ (i.e., for any $w$ in the set, $w \preceq \pi$), and $R$ is a ternary relation on the set, satisfying the following conditions:
    \begin{itemize}
        \item (\emph{Unitality}) $R(w,w,e)$
        \item (\emph{Commutativity}) $R(x,y,z)$ iff $R(x,z,y)$
        \item (\emph{Associativity}) if  $R(x,w,y)$ and $R(y,u,v)$, then there exists a $z$ such that $R(x,z,v)$ \text{ and } $R(z,w,u)$
    \end{itemize}
\end{Definition}

Following the same intuition as above, the clauses of the satisfaction relation for BI should be those for IL together with additional clauses that hand the new connectives using the available additional structure.

\begin{Definition}[Satisfaction] \label{def:satisfaction}
    Let  $\model := \langle \uni, e, \pi, \prec, R\rangle$ be a BI-frame. Given a mapping $\ll - \rr: \uni \to \powerset (\Atoms)$, called an interpretation, satisfaction in a  $\model$ is a binary relation $\satisfies$ between the worlds $\uni$ of the frame and the formulas $\formulas$ of BI defined by the clauses in Figure \ref{fig:sat:BI}. 
\end{Definition}

\begin{figure}[t]
    \centering
    \fbox{
    \begin{minipage}{0.95\linewidth}
    \[
    \begin{array}{lcc}
        w \satisfies \rm p  & \text{ iff } & \rm p \in \ll w \rr \\
        w \satisfies \phi \land \psi & \text{ iff } & w \satisfies \phi \text{ and } w \satisfies \psi \\
        w \satisfies  \phi \lor \psi & \text{ iff } & w \satisfies \phi  \text{ or } w \satisfies \psi \\
        w \satisfies \phi \to \psi & \text{ iff } & \text{ for any } v,  \text{ if } w \preceq v  \text{ and } v \satisfies \phi,  \text{ then }  v \satisfies \psi \\
        w \satisfies \top & \text{ iff } & w \in \uni \\
        w \satisfies \I & \text{ iff } & e \preceq w   \\
        w \satisfies \bot & \text{ iff } & w = \pi \\
        w \satisfies \phi * \psi & \text{ iff } & \text{ there are } u, v \text{ st.} \, R(w,u,v) \text{ and } u \satisfies \phi \text{ and } v \satisfies \psi \\
        w \satisfies \phi \wand \psi & \text{ iff } & \text{ for any } u, v, \text{ if } R(v,w,u)  \text{ and }  u \satisfies \phi, \text{ then }  v \satisfies \psi \\
    \end{array}
    \]
    \end{minipage}
    }
    \caption{Satisfaction for BI}
    \label{fig:sat:BI}
\end{figure}

For a BI-frame to become a model of BI, its components should behave as what they interpret. For example, the preorder ought to be (generally) persistent:
\[
\text{for any $\phi \in \formulas$ and any $w,u \in \uni$, if $w \preceq u$ and $w \satisfies \phi$, then $u \satisfies \phi$}
\]
And $\pi$ ought to be absurd:
\[
\text{for any $\phi \in \formulas$, if there is a world $w$ such that $w \satisfies \phi$, then $\pi \satisfies \phi$}
\]

\begin{Definition}[Model] \label{def:bialg}
    A pair $\langle \model, \ll - \rr \rangle$ in which $\model := \langle \uni, \preceq, R\rangle$ is a BI-frame and $\ll - \rr: \uni \to \powerset (\Atoms)$ is an interpretation is a model when it is persistent and, for any $\phi \in \formulas$, $\pi \satisfies \phi$.    The set of all models is $\mathcal{C}$. 
\end{Definition}

The concept of a model given in Definition \ref{def:bialg} actually arises from the approach to completeness that this paper demonstrates in that the clauses are designed to reflect the proof-theoretic behaviour of the connectives (see Section \ref{sec:snc}). This definition of a frame for modelling BI based on a relation $R$ is more general than that studied by Galmiche et al. \cite{Galmiche2005}, whose relationship to the present structure is discussed in Section \ref{sec:previous}. Similar models to the present one have previously been studied by Docherty and Pym \cite{Docherty2018,Docherty2019a, Docherty2019}. In that work, certain variations of satisfaction are also considered that may also be understood from the approach to completeness in this paper, but they are more complex without being more informative for our purposes. 

In terms of constructing models, the most difficult requirement to satisfy is persistence, but the aforementioned authors have also given conditions under which this condition can be met. A necessary condition is bifunctoriality: 
\[
\text{if $u \preceq u'$, $v \preceq v'$, $R(w,u,v)$ and $R(w',u',v')$, then $w \preceq w'$}
\]
These concerns are discussed further in Section \ref{subsec:simon}

The satisfaction relation on models defines a semantics as follows:

\begin{Definition}[Validity] \label{def:semanticrelation}
    Two formulas, $\Gamma$ and $\phi$, are in semantic relation, denoted $\Gamma \models \phi$, if, for any model $\model \in \mathcal{C}$, at any world $w$, if $w \satisfies \Gamma$, then $w \satisfies \phi$.
\end{Definition}

The equivalence of $\models$ with the the provability relation $\proves$ is the main theorem of the paper. The earlier claim in Section \ref{subsec:prooftheory} that $\proves$ behaves as classical implication is witnessed by the implication in the definition of the semantics.

The intuition for why the class of frames just defined should be complete is that it has precisely the structure required to simulate BI in the context-extract reading. In particular, the strength of universal condition on frames and on worlds in the Definition \ref{def:semanticrelation} is curtailed by the structure of the implication that assumes $w \satisfies \Gamma$, because the fact that $w \satisfies \phi$ follows has not so much then do with what is known about the frame as what is known about $\Gamma$, but this is precisely the context-extra reading in the proof-theoretic setting.

\section{Reductive Logic \& Proof-search} \label{sec:proofsearch}
The traditional paradigm of logic proceeds by inferring a conclusion from 
established premisses using an \emph{inference rule}. This is the paradigm known as \emph{deductive logic}: 
\[
\infer[\Downarrow]{\text{Conclusion}}{\text{Established Premiss}_1 & ... & 
\text{Established Premiss}_n }
\]

In contrast, the experience of the use of logic is often dual to deductive logic in the sense that it proceeds from a putative conclusion to a collection of premisses that are 
sufficient for that conclusion to be deduced using a \emph{reduction operator}. 
This is the paradigm known as \emph{reductive logic}: 
\[
\infer[\Uparrow]{\text{Putative Conclusion}}{\text{Sufficient Premiss}_1 & ... & \text{Sufficent Premiss}_n }
\]
Reductions may correspond to inference rules, read from conclusion to premisses, or 
may have other forms --- see, for example, Milner's theory of tactical reasoning \cite{Milner1984}. The process of constructing a proof in reductive logic is \emph{proof-search}. 

Historically, the deductive paradigm has dominated since it exactly captures the meaning of truth relative to some set of axioms and inference rules, and therefore is the natural point of view when considering \emph{foundations of mathematics}, the original \emph{raison d'\^{e}tre} of logic. Of course, it is the reductive paradigm from which much of computational logic derives, including various instances of automated reasoning. The value of reductive logic as a paradigm in which to study meta-theory is the moral of this paper.

There are many ways of studying (reductive) proof-search, and a number of models have been considered, especially in the case of classical and intuitionistic logic (see, for example, Pym and Ritter \cite{Pym2004}). A generic approach to representing and understanding the structure of the proof-search space is the use of co-inductive derivations trees, which have their origin in Kowalski's study of logic programming \cite{kowalski1979book}. A (co-)algebraic treatment has been considered by the authors previously \cite{Samsonschrift}, generalizing earlier work by Komandantskaya et al.~\cite{Komendantskaya2010}.

Given a sequent calculus $\system{L}$ and a space of sequent $\mathbb{S}$, define the reduction operator $\redop[L]: \mathbb{S} \to \powerset \powerset \mathbb{S} $ as follows:
\[
\redop[L]: S \mapsto \bigcup_{\rn{r} \in \system{L}}\{ \{S_1,...,S_n\} \mid \rn{r}(S,S_1,...,S_n) \}
\]

\begin{Definition}[Proof-search Space]
Let $\system{L}$ be a sequent calculus. The proof-search space of a sequent $S$ is the tree corecursively generated as follows: 

\begin{itemize}
\item The root of the tree is $S$; 
\item each element of $\bullet \in \redop[L](S)$ is a child of $S$;
\item each element in the $S_i \in \bullet \in \redop[L](S)$ is a child of the $\bullet$;
\item each node $S_i \in \bullet \in \redop[L](S)$ has a proof-search-tree of $S_i$ as a child.
\end{itemize}

\end{Definition}

\begin{Example} \label{ex:pss}
Below is a section of the proof-search space for $ \rm p \fatsemi \rm \aunit \fatsemi \rm p \to \rm q :  \rm q $ --- the search proceeds in the direction of the arrows. The $\bullet$-nodes represents a set of sufficient premises for a particular instance of the rule(s) labelled on the arrow and $\rn{struct.}$ is used as a shorthand for the various structural rules that may apply (e.g., rules from $\cont, \weak, \contadd, \contmult$, and $\acomm$):
\[
\xymatrix@R=0.35cm{
 & & {  \rm p \fatsemi \aunit \fatsemi \rm p \to \rm q:\rm q  } \ar[d]^{\lrn \to} \ar@{..>}[dll]_{\rn{struct.}}  & & \\
 \bullet ... \bullet  & &   \bullet \ar[dr] \ar[dl]  & & \\
 & \ar@{..>}[dl]_{\rn{struct.}}   \rm p: \rm p  \ar[d]^{\ax} & & \ar@{..>}[dr]^{\rn{struct.}}  \aunit \fatsemi \rm q:\rm q  \ar[d]^{\contadd} & \\
\bullet \hdots \bullet  & \square & & \bullet \ar[d] & \bullet \hdots \bullet  \\
  &  & &  \rm q : \rm q  \ar[d]^{\ax} \ar@{..>}[dr]^{\rn{struct.}} &   \\ 
&  & & \square &  \bullet \hdots \bullet \\
}
\]
\end{Example}

The $\bullet$-nodes are suggestively dubbed \emph{or}-nodes and their children \emph{and}-nodes. This nomenclature proposes the following reading of a proof-search space: a sequent is valid if and only if it is either vacuous (i.e., it is $\square$) or there is at least one or-node child all of whose children are valid.  The reading describes an algorithm for how one extracts particular reductions (i.e., attempts to find proof) from the space:

\begin{Definition}[Reduction] \label{def:reduction}
    	A finite subtree $\mathcal{R}$ of a proof-search space $\redop[L](S)$ is a reduction when it has root $S$ whose children are the reduction trees of all the children of one or-node following $S$ in the proof-search space. It is a successful reduction when all the leaves are $\square$.
\end{Definition}

\begin{Example}\label{ex:psscont}
    The explicit subtree of the displayed proof-search space for $\langle \rm p \fatsemi \rm \aunit \fatsemi \rm p \to \rm q, \rm q\rangle$ in Example \ref{ex:pss} is a reduction in that space. It is a successful reduction because the terminal nodes are all $\square$.
\end{Example}

\begin{Lemma} \label{lem:reduction}
    A tree of sequents is an $\system{L}$-proof of $S$ if and only if it is a successful reduction in $\redop[L](S)$.
\end{Lemma}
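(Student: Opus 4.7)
The plan is to prove both directions by structural induction on finite trees, exploiting the fact that the reduction operator $\redop[L]$ is, by construction, a faithful encoding of the rule relation of $\system{L}$: each or-node following a sequent $S$ in the proof-search space corresponds to a choice of rule $\rn{r}$ with $\rn{r}(S, S_1, \ldots, S_n)$ for some premises $S_i$, and the and-node children of that or-node are precisely those $S_i$. Conversely, any application of a rule with conclusion $S$ gives rise to an or-node below $S$. So the argument is essentially a structural translation between the two presentations; the only bookkeeping concerns the interleaved or-node layer in the search space and the trivial $\square$ case.

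For the forward direction, induct on the height of an $\system{L}$-proof $\mathcal{D}$ of $S$. If $\mathcal{D}$ is a single node, then by Definition \ref{def:proof} we have $S = \square$, and this single node is trivially a successful reduction (as $\redop[L](\square) = \emptyset$, there is no choice to make). Otherwise, the root $S$ has children $P_1, \ldots, P_n$ that are $\system{L}$-proofs of some $S_1, \ldots, S_n$ with $\rn{r}(S, S_1, \ldots, S_n)$ for some $\rn{r} \in \system{L}$. By construction of $\redop[L]$, the set $\{S_1, \ldots, S_n\}$ is an or-node following $S$ in the proof-search space; by the induction hypothesis each $P_i$ yields a successful reduction $\mathcal{R}_i$ of $S_i$. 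Assembling these under the selected or-node produces a successful reduction of $S$ per Definition \ref{def:reduction}.

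For the backward direction, induct on the height of a successful reduction $\mathcal{R}$ of $S$ in $\redop[L](S)$. If $\mathcal{R}$ is a single node, then successful means this node is $\square$, and hence $S = \square$; the singleton tree is the required proof. Otherwise, $\mathcal{R}$ selects exactly one or-node $\bullet$ below $S$ whose children $S_1, \ldots, S_n$ each carry a sub-reduction $\mathcal{R}_i$. By the definition of $\redop[L]$, there is a rule $\rn{r} \in \system{L}$ with $\rn{r}(S, S_1, \ldots, S_n)$; by the induction hypothesis each $\mathcal{R}_i$ yields a proof $P_i$ of $S_i$; assembling these under $\rn{r}$ produces the desired $\system{L}$-proof of $S$. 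No step presents a serious obstacle: the result is a design-level observation reflecting that or-nodes are, by construction, the alternative rule applications available at each sequent, so reductions and proofs are structurally isomorphic objects viewed from opposite directions.
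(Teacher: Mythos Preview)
Your proposal is correct and takes essentially the same approach as the paper: the paper's own proof is simply ``Immediate by induction on the height of proofs and the definition of reduction tree,'' and you have correctly spelled out that induction in both directions. The only extra content in your write-up is the explicit handling of the $\square$ base case and the bookkeeping around the or-node layer, which is exactly what the paper leaves implicit.
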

\begin{proof}
	Immediate by induction on the height of proofs and the definition of reduction tree --- see, for example, previous work by the authors \cite{Samsonschrift}.
\end{proof}

The reductive view of logic expresses the sense in which completeness follows from the proof-theoretic definition of a logic. The idea of using proof-search in this way is similar to the more familiar approach of a term-model construction in that it employs the proof-theoretic principles to determine the structure of the frame, but it is different in that it does not commit one to building a particular model, merely to witness that all the requisite behaviours are present and, crucially, that only those behaviours are present. Put succinctly, the idea is that reasoning (i.e., analyzing and determining why a sequent is valid) in a formal system is characterized by proof-search spaces; for example, proof-theoretic reasoning about the truth of a sequent in BI is captured by the proof-search space for the sequent with respect to a sequent calculus. 

How does the work on proof-search spaces apply to model-theoretic validity? The technology delivering the paper is that one can take the phrase \emph{model theory} literally; that is, one may study satisfaction in a frame as a theory of first-order classical logic by formalizing the implicit ambient logic in which mathematics is conducted as a meta-logic. This is the subject of Section \ref{sec:modeltheoryquaclassicaltheory}. Being classical, the meta-logic comes with its own well-understood and well-behaved proof theory, and one uses the construction of proof-search space with respect to the meta-calculus to characterize model-theoretic reasoning. The equivalence claim of the proof-theoretic and model-theoretic semantics of BI is then understood strongly: one shows that the proof-search spaces of the former contains the proof-search spaces of the latter.

\section{Model Theory \emph{qua} Classical Theory} \label{sec:modeltheoryquaclassicaltheory} In this section we capture BI-frames and satisfaction as a theory of classical logic, called the meta-logic, such that a validity judgment holds iff there is a formal proof of a certain meta-sequent. The section is composed of three parts: first we define the meta-logic and the encoding of semantic validity in Section \ref{subsec:encoding}; second, we develop a proof theory for the meta-logic in Section \ref{subsec:metaprooftheory}; and, third, we characterize \emph{reasoning} about validity for BI as captured by proof-search in the meta-logic in Section~\ref{subsec:reasoning}.

\subsection{Meta-logic}\label{subsec:encoding}
The definition of model-theoretic validity $\Gamma \models \phi$ takes the form of an implication (i.e., ...\emph{if} $w \satisfies \Gamma \emph{ then } w \satisfies \phi$.). This point of this observation is to call to attention that the mathematics used to study logic happens itself within a logic, which may be called \emph{the ambient logic}. In this section we formalize the ambient logic into a symbolic (meta-)logic that then allows previously developed techniques (e.g., the work of proof-search in Section \ref{sec:proofsearch}) to be applied to study model theory. This move is quite natural as the original impetus for logic was as a way to formulate a \emph{foundation of mathematics}, to understand symbolically the logic in which mathematical reasoning takes places; for example, one may use logic to study the natural numbers, set theory, etc. There is also a precedent in that many of the applications of logic proceed by formalizing some situtation symbolically; for example, logic may be used to model communications protocols, performed automated verification tasks, etc. From this perspective, the present paper differs only in that the application is to study another logic --- viz. BI. Indeed, the idea of formalizing one logic in another is already present in the literature --- see, for example, Gabbay \cite{Gabbay1993,Gabbay1996} and Negri \cite{Negri2005}.

The meta-logic introduced in this section is designed for the model theory of BI, and so it is determined by formulating the definitions of the previous section symbolically. In short, it is a two-sorted first-order classical logic. 

\begin{Definition}[Syntax of the Meta-logic]
    The world-terms are given by a set of world-variables $\set{V}_w$ together with a world-constant $e$ and a world-constant $\pi$. The formulas-terms $\set{T}_f$ are composed of formulas-variables and the grammar of BI-formulas in which connectives are regarded as function-symbols. 
    
    The set of atomic meta-formulas $\set{A}$ is defined as follows:
    \[
    \set{A} ::=\{ x \satisfies \phi, R(x,y,z), x \preceq y, x = y  \mid x,y,z \in \set{V}_w \cup \{e, \pi\} \text{ and } \phi \in \set{T}_f\}
    \]
    
    The set of meta-formulas, denoted $\set{M}$, is defined by the following grammar:
    \[
    \Phi ::= A \in \set{A} \mid \Phi \with \Phi \mid \Phi \parr \Phi \mid \Phi \Rightarrow \Phi \mid \forall \phi \Phi \mid \forall w \Phi \mid \exists w \Phi \mid \exists \phi \Phi \mid \square \mid \hash 
    \]
\end{Definition}

As above, $w \satisfies \Gamma$ abbreviates $w \satisfies \lfloor \Gamma \rfloor$. Aside from $\hash$, used as meta-falsum, it is assumed that the meaning of the connectives is clear; the choice to overload the symbol $\square$ as both the empty sequent and as meta-verum in this paper is purposeful, as discussed below. The notation $\Phi \iff \Psi$ abbreviates $(\Phi \Rightarrow \Psi) \with (\Psi \Rightarrow \Phi)$. Throughout, the symbols $\Sigma$ and $\Pi$ are reserved for lists of meta-formulas; $\Sigma \sim \Sigma'$ denotes that $\Sigma$ and $\Sigma'$ are permutations of each other. Finally, we call quantifier-free meta-formulas \emph{propositional} meta-formulas, and we call meta-atoms of the form $w \satisfies \phi$ \emph{assertions}.

The definitions of the previous section can be encoded in the meta-logic; that is, one may regard the model theory of BI \emph{qua} a theory in the meta-logic. There are two parts to capture: the sentences governing BI-frames $\Sigma_{\mathfrak{M}}$ (Definition \ref{def:bialg}) and sentences governing satisfaction $\Sigma_{\satisfies}$ (Definition \ref{def:satisfaction}).  

The sentences in $\Sigma_{\mathfrak{M}}$ are the universal closure of the following, in which $u,v,w,x,y,z$ are world-variables and $\phi$ is a formula variable:
\[
\begin{array}{c}
\underbrace{R(x,x,e)}_{\text{unitality}} \quad \underbrace{\big(R(x,y,z) \Leftrightarrow R(x,z,y) \big)}_{\text{commutativity}} \quad \underbrace{\big( w \preceq u \Rightarrow  ( w \satisfies \phi \Rightarrow u \satisfies \phi) \big)}_{\text{persistence}} \\   \underbrace{\big(R(x,w,y) \with R(y,u,v) \Rightarrow \exists z( R(x,z,v) \with R(z,w,u)) \big)}_{\text{associativity}} 
 \quad \underbrace{w = \pi \Rightarrow w \satisfies \phi}_{\text{absurdity}}
\end{array}
\]

The sentences in $\Sigma_{\satisfies}$ are given by the universal closure of the meta-formulas in Figure \ref{fig:sat:BI:symbolic} in which quantifiers are taken to be over each implicit conjunct separately, which merits comparison with Figure \ref{fig:sat:BI}. There are two significant differences between Figure \ref{fig:sat:BI} and Figure \ref{fig:sat:BI:symbolic}: first, there is no clause for $w \satisfies \rm p$,  where $\rm p \in \Atoms$; second, there is no clause for $w\satisfies \I$. This is an effort to simplify computations about satisfaction below. The elimination of a clause for atomic satisfaction follows from working with validity directly (i.e., without passing though truth-in-a-model) as interpretation are no longer required; that is, atomic satisfaction is captured by an atomic tautology, $(w \satisfies \rm p) \Rightarrow (w \satisfies \rm p)$. The justification of the non-presence of a $\I$-clause is postponed to the end of Section \ref{subsec:reasoning} as some additional technology is useful to facilitate discussion. 

\begin{figure}[t]
    \centering
    \fbox{
    \begin{minipage}{0.95\linewidth}
    \[
    \begin{array}{lcc}
        w \satisfies \top & \iff & \square \\
         w \satisfies \bot & \iff & w = \pi \\
         w \satisfies \phi \land \psi & \iff & (w \satisfies \phi) \with (w \satisfies \psi) \\
         w \satisfies  \phi \lor \psi & \iff & (w \satisfies \phi)  \parr (w \satisfies \psi) \\
         w \satisfies \phi \to \psi &  \iff & \forall u (w \preceq u \Rightarrow (u \satisfies \phi  \Rightarrow u \satisfies \psi )) \\
        w \satisfies \phi * \psi & \iff & \exists u,v : \, R(w,u,v) \with u \satisfies \phi \with v \satisfies \psi \\
         w \satisfies \phi \wand \psi & \iff & \forall u, w' (R(w',w,u)  \Rightarrow (u \satisfies \phi  \Rightarrow w' \satisfies \psi)) \\
    \end{array}
    \]
    \end{minipage}
    }
    \caption{Satisfaction for BI (Symbolic)}
    \label{fig:sat:BI:symbolic}
\end{figure} 

The union of $\Sigma_{\satisfies}$ and $\Sigma_{\mathfrak{M}}$ is denoted $\Sigma_{\logic{BI}}$. Let $[\Sigma]$ denote the meta-conjunction of all the meta-formulas in $\Sigma$. Under this encoding of BI's model theory, the assertion $\Gamma \models \phi$ may be understood as the following meta-implication:
\[
[\Sigma_{\logic{BI}}] \with (w \satisfies \Gamma) \Rightarrow (w \satisfies \phi)
\]
The significance of this is that all the familiar tools of classical logic become available, including sequent calculi for reasoning about when the above implication holds. The details of the proof-theoretic tools used for the meta-logic in this paper is reserved for Section \ref{subsec:metaprooftheory}, but sequents may be captured presently.

\begin{Definition}[Meta-sequent]
    A meta-sequent is a pair $\Sigma\,:\, \Pi $ in which $\Sigma$ and $\Pi$ are lists of meta-formulas. The empty pair, denoted $\square$, is also a sequent.
\end{Definition}

 The overloading of the symbol with meta-top $\metatop$ is not a problem as they are different kinds of objects: one is a meta-sequent, the other a meta-formula. We do not distinguish $ \Sigma, \metatop : \Pi $ and  $ \Sigma: \Pi $. A meta-sequent that captures (putative) semantic judgments are called \emph{basic validity sequents}:

\begin{Definition}[Basic Validity Sequent]
    A basic validity sequent (BVS) is a sequent of the following form:
    \[
    \langle \Sigma_{\logic{BI}}, (w \satisfies \Gamma) \,:\,  (w \satisfies \phi) \rangle
    \]
\end{Definition}
\begin{Definition}[Complex Validity Sequent]
    A complex validity sequent (CVS) is a sequent of the following form in which $\bar{\Sigma}$ and $\bar{\Pi}$ are sets of assertions:
    \[
\langle \Sigma_{\logic{BI}}, \bar{\Sigma} \,:\,  \bar{\Pi} \rangle
    \]
\end{Definition}

\subsection{Meta-logic Proof Theory}\label{subsec:metaprooftheory}

Having encoded the (putative) semantics as a theory of classical logic, all of the tools of classical logic become available. In particular, a BVS may be established simply by witnessing a proof for it in a proof system for classical logic (e.g., Gentzen's \cite{Gentzen} \system{LK}). We desire to establish a simulation between reasoning model-theoretically for validity and proof-theoretically for provability, both characterized by proof-search, hence we will restrict proof-search in the more flexible logic (i.e., the meta-logic). This restriction is the subject of this section, which contains all the technical aspects. The resulting calculus of validity used  to prove soundness and completeness is given in Section \ref{subsec:reasoning}.

The logic of BI is constructive. Consequently, one expects satisfaction to be constructive in the sense that, relative to $\Sigma_{\logic{BI}}$, if $w \satisfies \Gamma \Rightarrow w \satisfies \phi$ holds, then there should be a constructive proof of it. For this reason, we begin with a meta-sequent calculus for intuitionistic logic, which is based on Dummett's \cite{Dummett2000} multiple-conclusioned system.

\begin{Definition}[System $\system{DLJ}$]
    System $\system{DLJ}$ is composed of the rules in Figure \ref{fig:dlj} in which $\theta_X$ denotes a substitution for $X$ and $\hat{\theta}_X$ denotes a substitution for $X$ by an eigenvariable.
\end{Definition}

Rules for negation have been elided from Figure \ref{fig:dlj} as $\Sigma_{\logic{BI}}$ is negation-free so they will not be required at any point.

\begin{figure}[t]
    \fbox{
	\begin{minipage}{.95\textwidth}
		\centering 
		\vspace{0.2cm}
		$
		\infer[\lrn{\weak}]{\Phi,\Sigma \proves \Pi}{\Sigma\proves\Pi}
		\quad
		\infer[\rrn{\weak}]{\Sigma \proves \Pi,\Phi}{\Sigma\proves\Pi}
		\quad
		\infer[\lrn{\exch}]{\Sigma_1,\Phi,\Psi,\Sigma_2\proves \Pi}{\Sigma_1,\Psi,\Phi,\Sigma_2\proves \Pi}
		\quad
		\infer[\rrn{\exch}]{\Sigma\proves \Pi_1,\Phi,\Psi,\Pi_2}{\Sigma\proves \Pi_1,\Psi,\Phi,\Pi_2}
		\quad
		$
		\\[1.5ex]
		$
		\infer[\rrn\land]{\Sigma \proves \Pi,\Phi \land \Psi}{\Sigma \proves \Phi,\Pi & \Sigma\proves\Psi,\Pi}
		\mkern12mu
		\infer[\lrn\land]{\Phi \land  \Psi,\Sigma\proves \Pi}{\Phi , \Psi ,\Sigma\proves\Pi}
		\mkern12mu
		\infer[\rrn \to]{\Sigma \proves \Phi \to \Psi}{\Phi,\Sigma \proves \Psi}
		$
		\\[1.5ex]
		$
		\infer[\lrn{\lor}]{\Phi\lor\Psi,\Sigma \proves \Pi}{\Phi,\Sigma \proves\Pi & \Psi,\Sigma \proves\Pi }
	       \quad
		\infer[\rrn{\lor}]{\Sigma \proves \Pi , \Phi \lor  \Psi}{\Sigma \proves\Pi,\Phi , \Psi }
		\quad
		\infer[\lrn \to]{\Phi \to \Psi,\Sigma \proves \Pi}{\Sigma \proves \Pi,\Phi & \Psi,\Sigma \proves \Pi}
		$
		\\[1.5ex]
		$
		\infer[\lrn \forall]{\forall X\Phi, \Sigma \proves \Pi}{\Phi\theta_X,\Sigma \proves \Pi}
			\quad
		\infer[\rrn \forall]{\Sigma \proves \forall X \Phi}{\Sigma \proves \Phi\hat{\theta}_X}
		\quad
		\infer[\lrn \exists]{\exists X\Phi, \Sigma \proves \Pi}{\Phi\hat{\theta}_X,\Sigma \proves \Pi}
		\quad
		\infer[\rrn \exists]{\exists X\Phi, \Sigma \proves \Pi}{\Phi\theta_X,\Sigma \proves \Pi}
		$
		\\[1.5ex]
		$
			    \infer[\ax]{\Phi \proves \Phi}{\square}
\quad
	    \infer[\rrn \square]{\Phi \proves \square}{\square}
		$
	    \vspace{0.2cm}
    \end{minipage}
    }
    \caption{System \system{DLJ}}
    \label{fig:dlj}
\end{figure}

Of course, the encoding of Section \ref{subsec:encoding} is in a  \emph{classical} meta-logic and thus, despite the above intuition, we have no guarantee that \system{DLJ} is adequate for proving BVSs. To this end, it suffices to show that the following rules are admissible for $\system{DLJ}$-proofs of BVSs as including them recovers a meta-sequent calculus for classical logic:
      \[
   \infer[\rrn{\forall^\mathsf{K}}]{   \Sigma : \Pi, \forall x\Phi  }{ \Sigma : \Pi, \Phi  }
     \qquad
   \infer[\rrn{\Rightarrow^\mathsf{K}}]{ \Sigma : \Pi, \Phi \Rightarrow \Psi  }{  \Phi, \Sigma \proves \Pi, \Psi }
   \qquad
   \infer[\rrn{\cont}]{ \Sigma : \Pi, \Phi }{ \Sigma : \Pi, \Phi, \Phi}
      \qquad
   \infer[\lrn{\cont}]{ \Phi, \Sigma : \Pi, \Phi }{ \Phi, \Phi, \Sigma : \Pi}
   \]

Two rules are immediate:

\begin{Lemma}\label{lem:admissibilityofcontractionright}
   The $\rrn{\cont}$-rule and $\lrn \cont$-rule are admissible in $\system{DLJ}$.
\end{Lemma}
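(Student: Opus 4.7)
The natural plan is a structural induction on the height of the derivation of the premise in each case, with a case split on the last rule applied. Since \system{DLJ} has explicit weakening and exchange and is defined in a multiple-conclusioned style, most cases are routine: if the last inference rule $\rn{r}$ does not have either of the two copies of $\Phi$ as a principal formula, then $\Phi,\Phi$ (respectively $\Phi,\Phi$ on the right) appears unchanged among the contexts of the premises of $\rn{r}$; one applies the induction hypothesis to contract $\Phi,\Phi$ to $\Phi$ in each premise (possibly with the help of $\rrn{\exch}$/$\lrn{\exch}$ to move the two copies adjacent), and then reapplies $\rn{r}$, yielding a derivation of the contracted sequent of no greater height.

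The delicate cases are those in which one of the two copies of $\Phi$ is principal in the last rule. Here I would handle them by invertibility-style reasoning: for each right-rule of \system{DLJ}, one shows (by a subsidiary induction on derivation height) that the rule is height-preserving invertible, so that from a derivation of $\Sigma : \Pi,\Phi,\Phi$ in which the last inference introduces the rightmost $\Phi$, one can invert to obtain the premises of that rule still carrying the second copy of $\Phi$, apply the induction hypothesis of the contraction claim to contract within the inverted premises, and then reapply the rule. The analogous story on the left uses left-invertibility of the corresponding left-rules. For the axiom $\ax$ and $\rrn\square$ the statement is immediate since the contracted formula was introduced by weakening applied beforehand (or the sequent has only one formula on the relevant side, in which case there is nothing to contract).

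The principal obstacle I anticipate is the interaction between contraction and the quantifier rules $\rrn\forall$ and $\lrn\exists$, which introduce eigenvariables $\hat{\theta}_X$. When contracting over a quantified formula whose proof uses an eigenvariable, naive inversion could violate the freshness side-condition. This is resolved by an auxiliary renaming lemma: eigenvariables in a derivation can be replaced by fresh ones without changing its height, so before inverting one renames to avoid capture, then proceeds as above. A comparable care is required for $\rrn{\to}$ on the right and $\lrn{\to},\lrn{\lor}$ on the left, whose premises duplicate the context; here contraction threads into each branch by the induction hypothesis, so no new complication arises beyond a careful bookkeeping of which occurrence of $\Phi$ is being contracted.

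Putting these pieces together — height-preserving invertibility for each connective rule, an eigenvariable-renaming lemma for the quantifier cases, and structural induction on derivation height — yields admissibility of both $\rrn{\cont}$ and $\lrn{\cont}$ in \system{DLJ}. The symmetry between the left and right variants means that the two proofs run in parallel: the only substantive difference is which side the contracted formula lives on, and the multi-conclusioned presentation of \system{DLJ} ensures the right-hand side can be treated with essentially the same techniques as the left.
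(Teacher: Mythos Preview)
Your approach is sound in outline and is the standard proof-theoretic route: structural induction on derivation height, using height-preserving invertibility of the connective rules and eigenvariable renaming for the quantifiers. With the caveat that the principal case for $\lrn\to$ is a little more delicate than ``threads into each branch by the induction hypothesis'' (one copy becomes $\Psi$ rather than $\Phi\to\Psi$, so a small secondary argument is needed, as in the Troelstra--Schwichtenberg treatment), the plan works for \system{DLJ}.

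The paper, however, takes a much shorter and genuinely different route: it simply invokes the idempotency of intuitionistic disjunction and conjunction. The point is that a multi-conclusioned sequent $\Sigma:\Pi$ in \system{DLJ} is equivalent to the single-conclusioned sequent $\Sigma:\bigparr\Pi$, and dually on the left with $\with$; since $\Phi\parr\Phi$ and $\Phi\with\Phi$ are intuitionistically equivalent to $\Phi$, contraction on either side is admissible via a cut against the idempotency derivation. This trades the combinatorics of your induction for an appeal to cut-admissibility (or equivalence with a cut-bearing system) together with an elementary algebraic fact. Your argument is more self-contained and yields height-preservation as a by-product; the paper's argument is a one-liner but leans on the ambient meta-theory of intuitionistic logic.
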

\begin{proof}
Immediate by the idempotency of IL disjunction and conjunction.
\end{proof}

The remaining two may be regarded as generalized versions of $\rrn \forall$ and $ \rrn \Rightarrow$, respectivel. Define $\system{DLJ}^\mathsf{K} : = \system{DLJ}\cup \{\rrn{\forall^\mathsf{K}}, \rrn{\Rightarrow^\mathsf{K}}, \rrn{\cont}, \lrn \cont\}$. The relationship between $\system{DLJ}$ and $\system{DLJ}^\mathsf{K}$ is the same as the relationship between Dummett's \cite{Dummett2000} (multiple-conclusioned) sequent calculus for intuitionistic logic and Gentzen's \cite{Gentzen} sequent calculus for classical logic --- viz. that certain rules in the former system are guarded by a single-conlusioned condition that if relaxed (or, \emph{generalized} --- see previous work by the authors \cite{Samsonschrift}) to be multiple-conclusioned recovers the latter system. Why can this guard be removed for proofs of BVSs without enlarging the space of valid meta-sequents? A sufficient guard is already captured by the change of world that is required when encountering an implicational formula in the extract of CVSs (i.e., an assertion of the form $w \satisfies \phi \to \psi$ or $w \satisfies \phi \wand \psi)$, meaning that the new assumption cannot interact with the other disjunctions. We shall return to this idea once a certain definitions and technical results have been given that will facilitate the discussion.

Consequently, we restrict proof-search in the meta-logic to consider only computations that begins with a CVS and produces a set of CVS that collectively justify the original one, these phases being identified by the appeal to a clause from the theory $\Sigma_{\logic{BI}}$. Such steps are called \emph{resolutions}.

\begin{Definition}[Resolution] \label{def:resolution}
    A resolution is a derivation that instantiates a clause from $\Sigma_{\satisfies}$. A resolution is closed if the head of the clause matches with an assertion already present in the meta-sequent and one removes the head in the non-axiom premiss. 
\end{Definition}

 Note that it is without loss of generality that an $\lrn \Rightarrow$ immediately applies to the active formula of an $\lrn \forall$ as the quantifier rule is invertible.

For clarity, a resolution is a derivation of the following form: 
        \begin{prooftree}
    \AxiomC{$\Sigma_{\logic{BI}},\Sigma \proves \Pi,\Phi$}
    \AxiomC{$\Sigma_{\logic{BI}}, \Psi, \Sigma \proves \Pi$}
    \RightLabel{$\lrn \Rightarrow$}
    \BinaryInfC{$\Sigma_{\logic{BI}}, \Phi \Rightarrow \Psi, \Sigma \proves \Pi$} 
    \RightLabel{$\lrn \forall$}
     \UnaryInfC{$\Sigma_{\logic{BI}}, \Sigma \proves \Pi$}
    \end{prooftree} 
   
\noindent And a \emph{closed} resolution is a derivation of either of the following forms:
    \begin{prooftree}
    \AxiomC{$\square$}
    \UnaryInfC{$\Sigma_{\logic{BI}}, \Phi, \Sigma \proves \Pi, \Phi$}
        \AxiomC{$\Sigma_{\logic{BI}}, \Psi, \Sigma \proves \Pi$}
        \RightLabel{$\lrn \weak$}
        \UnaryInfC{$\Sigma_{\logic{BI}}, \Psi, \Phi, \Sigma \proves \Pi$}
        \RightLabel{$\lrn \Rightarrow$}
    \BinaryInfC{$\Sigma_{\logic{BI}}, \Phi \Rightarrow \Psi, \Phi, \Sigma \proves \Pi$} 
    \RightLabel{$\lrn \forall$}
     \UnaryInfC{$\Sigma_{\logic{BI}}, \Phi, \Sigma \proves \Pi$}
    \end{prooftree}
    
        \begin{prooftree}
        \AxiomC{$\Sigma_{\logic{BI}}, \Sigma, \proves \Pi, \Phi$}
        \RightLabel{$\rrn \weak$}
    \UnaryInfC{$\Sigma_{\logic{BI}}, \Sigma, \proves \Pi, \Psi, \Phi$}
       \AxiomC{$\square$}
        \UnaryInfC{$\Sigma_{\logic{BI}}, \Psi, \Sigma \proves \Pi, \Psi$}
        \RightLabel{$\lrn \Rightarrow$}
    \BinaryInfC{$\Sigma_{\logic{BI}}, \Phi \Rightarrow \Psi, \Sigma \proves \Pi, \Psi$} 
    \RightLabel{$\lrn \forall$}
     \UnaryInfC{$\Sigma_{\logic{BI}}, \Sigma \proves \Pi, \Psi$}
    \end{prooftree}
    
When a resolution is closed, we may denote the reduction by the premiss that is not a tautology, labeling it by the name of the justifying clause; that is, let $\rn{cl.}$ be name of some clause from $\Sigma_{\satisfies}$ that instantiates to $\Phi \Rightarrow \Psi$, then the derivations above may be written as follows:\\
  
 \begin{minipage}{0.47\linewidth}
    \begin{prooftree}{}
        \AxiomC{$\Sigma_{\logic{BI}}, \Psi, \Sigma \proves \Pi$}
    \RightLabel{$\rn{cl}_\mathsf{L}$}
     \UnaryInfC{$\Sigma_{\logic{BI}}, \Phi, \Sigma \proves \Pi$}
    \end{prooftree}
    \end{minipage}
    \begin{minipage}{0.47\linewidth}
        \begin{prooftree}
     \AxiomC{$\Sigma_{\logic{BI}}, \Sigma \proves \Pi, \Phi$}
    \RightLabel{$\rn{cl}_\mathsf{R}$}
     \UnaryInfC{$\Sigma_{\logic{BI}}, \Sigma \proves \Pi, \Psi$} 
    \end{prooftree}
    \end{minipage}\\
    
Denoted in this way, resolutions may be thought of as rules (or, more precisely, reduction operators). This allows us to emphazise the steps that make use of the theory $\Sigma_{\logic{BI}}$ while de-emphasizing the meta-logical ones. 

\begin{Example} \label{ex:instimpl} The following is intuitively a reduction of $ \Gamma \models \phi \land \psi$ to $\Gamma \models \phi$ and $\Gamma \models \psi$ that begins by a resolution using the $\land$-clause and then uses $\rrn \with$, the tautology has been suppressed for readability:\\

\begin{minipage}{0.98\linewidth}
\begin{prooftree}
\AxiomC{$\Sigma_{\logic{BI}}, (w \satisfies \Gamma) \proves (w \satisfies \phi)$}
\AxiomC{$\Sigma_{\logic{BI}}, (w \satisfies \Gamma) \proves (w \satisfies \psi)$}
        \BinaryInfC{$\Sigma_{\logic{BI}}, (w \satisfies \Gamma) \proves (w \satisfies \phi) \with (w \satisfies \psi)$}
        \RightLabel{$\lrn \weak$}
        \UnaryInfC{$\Sigma_{\logic{BI}}, (w \satisfies \Gamma) \proves (w \satisfies \phi) \with (w \satisfies \psi), (w \satisfies \phi \land \psi)$}
             \AxiomC{$\square$}
             \RightLabel{$\ax$}
\UnaryInfC{}
    \RightLabel{$\rrn \Rightarrow$}
    \BinaryInfC{$\Sigma_{\logic{BI}}, (w \satisfies \phi \land \psi) \impliedby (w \satisfies \phi \with w \satisfies \psi), (w \satisfies \Gamma) \proves (w \satisfies \phi\land\psi)$}
    \RightLabel{$\lrn \forall$}
    \doubleLine
\UnaryInfC{$\Sigma_{\logic{BI}}, (w \satisfies \Gamma) \proves (w \satisfies \phi\land\psi)$}
\end{prooftree}
\end{minipage}\\

The same derivation may be denoted as follows:

\begin{prooftree}
\AxiomC{$(w \satisfies \Gamma) \proves (w \satisfies \phi)$}
    \AxiomC{$(w \satisfies \Gamma) \proves (w \satisfies \psi)$}
    \RightLabel{$\rrn \land$}
\BinaryInfC{$(w \satisfies \Gamma) \proves (w \satisfies \phi) \with (w \satisfies \psi)$}
\RightLabel{$\land$-clause}
\UnaryInfC{$(w \satisfies \Gamma) \proves (w \satisfies \phi\land\psi)$}
\end{prooftree}

\end{Example}

Since the theory $\Sigma_{\logic{BI}}$ may be taken to be conserved in the context, henceforth it may be suppressed without further comment.

It is reasoning by resolution that captures what it means \emph{to use} a clause of satisfaction, hence the sequent calculus for the meta-logic ought to have resolutions be the primary operational step during proof-search.  The fact that resolution is how semantic reasoning is conducted is not surprising; after all, that a theory composed of clauses may be used to \emph{define} a predicate is the idea underpinning the logic-based approach to artificial intelligence known as Logic Programming (LP) invented by Kowalski \cite{Kowalski1971,kowalski1979book}.

Resolutions can be used not only to perform computation about the satisfaction relation, but to break up the structure of bunches such that they may be read in the form of a classical context. We may think of this as \emph{unpacking} the bunch. Of course, it is essential that no information is lost in this process.

\begin{Definition}[Unpacking]
    An unpacking is a sequence of closed resolutions using $\land$- and $\ast$-clauses in the context with $\lrn \exists$ and $\lrn \with$ applied eagerly.
\end{Definition}
\begin{Example}
The following computation constitutes an unpacking:
\begin{scprooftree}{0.95}
\AxiomC{$R(w,x,y) , (x \satisfies \Gamma) , (y \satisfies \Delta) \with (y \satisfies \Delta' \fatsemi \Delta'), (u \satisfies \Gamma') \proves (w \satisfies \phi), (u \satisfies \psi) $}
\RightLabel{$\land$-clause}
\UnaryInfC{$R(w,x,y) , (x \satisfies \Gamma) , (y \satisfies \Delta \fatsemi \Delta'), (u \satisfies \Gamma') \proves (w \satisfies \phi), (u \satisfies \psi) $}
\doubleLine
\RightLabel{$\lrn \with$}
\UnaryInfC{$R(w,x,y) \with (x \satisfies \Gamma) \with (y \satisfies \Delta \fatsemi \Delta'), (u \satisfies \Gamma') \proves (w \satisfies \phi), (u \satisfies \psi) $}
\RightLabel{$\lrn \exists$}
\UnaryInfC{$\exists x,y (R(w,x,y) \with x \satisfies \Gamma \with y \satisfies \Delta \fatsemi \Delta'), (u \satisfies \Gamma') \proves (w \satisfies \phi), (u \satisfies \psi) $}
\RightLabel{$\ast$-clause}
\UnaryInfC{$(w \satisfies \Gamma\fatcomma (\Delta \fatsemi \Delta')), (u \satisfies \Gamma') \proves (w \satisfies \phi), (u \satisfies \psi) $}
\end{scprooftree}
\end{Example}

The notation $\Sigma_{w , \Gamma}$ denotes a theory that arises from an unpacking of $w \satisfies \Gamma$. Unpackings do not have to be total; that is, one can have $w \satisfies \Gamma(\phi)$ unpack to a theory $\Sigma_{w , \Gamma(\phi)}$ containing a meta-formula $x \satisfies \phi$. In this case, the theory may be denoted $\Sigma_{w , \Gamma(\phi), x \satisfies\phi}$. It is partial in the sense that the unpacking does not continue on the assertion $w \satisfies \phi$,

\begin{Lemma}[Packing] \label{lem:packing}
   Both packing and unpackings are invertible.
\end{Lemma}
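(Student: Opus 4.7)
The plan is to leverage the fact that every clause in $\Sigma_{\satisfies}$ is stated as a biconditional, so that both directions $\Phi \Rightarrow \Psi$ and $\Psi \Rightarrow \Phi$ are available as resolutions. An unpacking step for the $*$-clause applies the left-to-right direction to replace $w \satisfies \phi * \psi$ with $\exists u, v(R(w,u,v) \with (u \satisfies \phi) \with (v \satisfies \psi))$, which is then broken down by eager applications of $\lrn{\exists}$ and $\lrn{\with}$; a packing step for the same clause applies the right-to-left direction to recombine the triple $R(w,u,v)$, $u \satisfies \phi$, $v \satisfies \psi$ back into the single assertion $w \satisfies \phi * \psi$. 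The analogous remark holds for the $\land$-clause.

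First I would verify that each individual clause-resolution is invertible in isolation. Since the $\land$-clause is $w \satisfies \phi \land \psi \iff (w \satisfies \phi) \with (w \satisfies \psi)$, the step $\rn{cl}_\mathsf{L}$ that replaces the left-hand assertion by the right-hand conjunction is inverted by the dual step $\rn{cl}_\mathsf{L}$ obtained from the opposite implication of the biconditional (and symmetrically on the right); the same argument treats the $*$-clause, with the $R$-atom and the two component assertions playing the role of the conjuncts under the existential. Because each resolution is built using only $\lrn{\forall}$ and $\lrn{\Rightarrow}$ on a sentence of the form $\forall\bar{X}(\Phi \iff \Psi)$, the inverse resolution is literally another instance of the same axiom of $\Sigma_{\satisfies}$, so no new principles are invoked.

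Second I would handle the surrounding structural rules. The steps $\lrn{\exists}$ and $\lrn{\with}$ applied eagerly after a resolution are invertible in the meta-calculus: their inverses are derivable by cutting against $\rrn{\exists}$ and $\rrn{\with}$ (and by the admissibility of $\lrn{\cont}$ from Lemma \ref{lem:admissibilityofcontractionright} if multiple copies arise). Hence a single unpacking block can be reversed by first re-assembling the existential and the meta-conjunction on the left, then applying the converse resolution; dually, a single packing block is reversed by discharging the existential and conjunction before applying the forward resolution.

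Finally, I would extend from single steps to arbitrary sequences by induction on the length of the unpacking (resp.\ packing), reversing the order of the steps. The only real bookkeeping obstacle is tracking which sub-bunch was produced by which resolution, so that during the reverse pass the $\fatcomma$- and $\fatsemi$-structure is reconstructed in exactly the order in which it was destroyed; this is immediate because each unpacking step names the clause that produced it, and partial unpackings $\Sigma_{w,\Gamma(\phi),\, x\satisfies\phi}$ determine their remaining packing uniquely by the position of the untouched assertion $x \satisfies \phi$ in the original bunch $\Gamma(\phi)$.
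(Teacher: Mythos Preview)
Your proposal is correct and follows essentially the same approach as the paper: the paper's proof also reduces the lemma to the invertibility of $\lrn{\with}$ and $\lrn{\exists}$ in the meta-calculus together with the biconditional form of the $\land$- and $\ast$-clauses, exhibiting the two cycle derivations that witness going down and back up. Your additional remarks about induction on the length of the unpacking and the bookkeeping for sub-bunch positions are reasonable scaffolding but are not needed by (and do not appear in) the paper's argument, which simply displays the two single-step round-trip derivations.
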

\begin{proof}
The result follows from the invertibility of $\lrn{ \with}$ and $\lrn \exists$, as witnessed by the following computations:
\begin{prooftree}
\AxiomC{$ \Sigma, (w\satisfies \phi \land \psi) \proves \Pi $}
\RightLabel{$\land$-clause}
\UnaryInfC{$ \Sigma, (w\satisfies \phi) \with (w \satisfies \psi) \proves \Pi $}
\dashedLine
\RightLabel{$\rrn{\with}^{-1}$}
\UnaryInfC{$\Sigma, (w\satisfies \phi), (w \satisfies \psi) \proves \Pi $}
\RightLabel{$\rrn \with$}
\UnaryInfC{$\Sigma, (w\satisfies \phi) \with (w \satisfies \psi) \proves \Pi $}
\RightLabel{$\land$-clause}
\UnaryInfC{$\Sigma, (w\satisfies \phi \land \psi) \proves \Pi $}
\end{prooftree}
and
\begin{prooftree}
\AxiomC{$ \Sigma, (w\satisfies \phi * \psi) \proves \Pi $}
\RightLabel{$\ast$-clause}
\UnaryInfC{$ \Sigma, \exists u,v(R(w,u,v) \with u \satisfies \phi \with v \satisfies \psi) \proves \Pi $}
\dashedLine
\RightLabel{$\lrn{\exists}^{-1}$}
\UnaryInfC{$ \Sigma, R(w,u,v) \with (u \satisfies \phi) \with (v \satisfies \psi) \proves \Pi $}
\dashedLine\doubleLine
\RightLabel{$\rrn{\with}^{-1}$}
\UnaryInfC{$ \Sigma, R(w,u,v), (u \satisfies \phi), (v \satisfies \psi) \proves \Pi $}
\doubleLine
\RightLabel{$\rrn \with$}
\UnaryInfC{$\Sigma, R(w,u,v) \with u \satisfies \phi \with v \satisfies \psi \proves \Pi $}
\RightLabel{$\lrn \exists$}
\UnaryInfC{$ \Sigma, \exists u,v(R(w,u,v) \with u \satisfies \phi \with v \satisfies \psi) \proves \Pi $}
\RightLabel{$\ast$-clause}
\UnaryInfC{$\Sigma, (w\satisfies \phi * \psi) \proves \Pi $}
\end{prooftree}

\end{proof}

Recall that the reason one expects $\system{DLJ}$ to be adequate for reasoning about BI-validity is because the guard distinguishing $\rrn{\Rightarrow}$ and $\rrn{\forall}$ from $\rrn{\Rightarrow^\mathsf{K}}$ and $\rrn{\forall^\mathsf{K}}$  is captured by the change-of-world when encountering an implication formula in the extract of a CVS. This idea is witnessed in the following example:

\begin{Example} \label{ex:lemfails}
To see how the change of world acts as a sufficient guard for BI-validity to be constructive, we may see how $\system{DLJ}^\mathsf{K}$ avoids the law of the excluded middle from holding in BI,

\begin{scprooftree}{0.95}
\AxiomC{$(w \satisfies \munit), (u \satisfies \munit\fatsemi \phi) \proves (w \satisfies \phi),  (u \satisfies \bot)$}
\RightLabel{$\land$-clause}
\UnaryInfC{$(w \satisfies \munit), (u \satisfies \munit), (u \satisfies \phi) \proves (w \satisfies \phi),  (u \satisfies \bot)$}
\RightLabel{$\rn{pers.}$}
\UnaryInfC{$(w \satisfies \munit), u \preceq w, (u \satisfies \phi) \proves (w \satisfies \phi),  (u \satisfies \bot)$ }
\RightLabel{$\rrn{\Rightarrow^\mathsf{K}}$}
\doubleLine
\UnaryInfC{$(w \satisfies \munit) \proves (w \satisfies \phi), (w \preceq u \Rightarrow (u \satisfies \phi \Rightarrow u \satisfies \bot))$}
\RightLabel{$\rrn{\forall^\mathsf{K}}$}
\UnaryInfC{$(w \satisfies \munit) \proves (w \satisfies \phi), \forall u (w \preceq u \Rightarrow (u \satisfies \phi \Rightarrow u \satisfies \bot))$}
\RightLabel{$\to$-clause}
\UnaryInfC{$(w \satisfies \munit) \proves (w \satisfies \phi), (w \satisfies (\phi \to \bot)) $}
\RightLabel{$\rrn \parr$}
\UnaryInfC{$(w \satisfies \munit) \proves (w \satisfies \phi) \parr (w \satisfies (\phi \to \bot))$}
\RightLabel{$\lor$-clause}
\UnaryInfC{$(w \satisfies \munit) \proves (w \satisfies \phi \lor (\phi \to \bot))$}
\end{scprooftree}
\noindent Moving to $u$ and using persistence means that one has all the contextual information about $w$ available (i.e., that $w \satisfies \Gamma$ is in the context enables $ u \satisfies \Gamma$ to be assumed); but since $u \satisfies \phi$ in the context and $w \satisfies \phi$ in the extract are different atoms since $u$ and $w$ are distinct, one has not reached an axiom. In short despite working in a classical system, suppressing an additional computational step, the above calculation witnesses that $\munit \models \phi \lor \phi \to \bot$ if $\munit \models \phi$ or $\phi \models \bot$, which is what one would expect of entailment for a constructive logic such as BI; that is, one knows that $\phi \lor \phi \to \bot$ is BI-valid only if one already knows that $\phi$ is BI-valid or one already knows that $\phi$ is BI-absurd.
\end{Example}

In particular, the change-of-guard means that a CVS to which one has reduces contains two \emph{independent} claims about validity, as witnessed in Example \ref{ex:lemfails}.

\begin{Definition}[World Independent]
    Sets of meta-formulas $\Sigma$ and $\Sigma'$ are world-indep\-endent if no free world-variable appearing in one appears in the other.
\end{Definition}

\begin{Lemma} \label{lem:worldindependence}
   Let $\Sigma, \Sigma', \Pi,\Pi'$ be sets of propositional meta-formulas such that $\Sigma, \Pi$ and $\Sigma', \Pi'$ are world-independent, then
  \[
   \Sigma_{\logic{BI}}, \Sigma, \Sigma' \proves \Pi, \Pi' \quad \text{iff} \quad \Sigma_{\logic{BI}}, \Sigma \proves \Pi \text{ or } \Sigma_{\logic{BI}}, \Sigma' \proves \Pi'
   \]
\end{Lemma}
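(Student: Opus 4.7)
The right-to-left direction is immediate by weakening: from a proof of $\Sigma_{\logic{BI}}, \Sigma \proves \Pi$ (the other case is symmetric), repeated applications of $\lrn\weak$ and $\rrn\weak$ introduce $\Sigma'$ on the left and $\Pi'$ on the right to yield $\Sigma_{\logic{BI}}, \Sigma, \Sigma' \proves \Pi, \Pi'$.

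For the left-to-right direction I would proceed by induction on the height of a cut-free meta-logic proof $\mathcal{D}$ of $\Sigma_{\logic{BI}}, \Sigma, \Sigma' \proves \Pi, \Pi'$ in $\system{DLJ}^{\mathsf{K}}$. The guiding observation is that, because $\Sigma, \Sigma', \Pi, \Pi'$ are propositional, every meta-formula appearing in $\mathcal{D}$ outside the universally closed axioms of $\Sigma_{\logic{BI}}$ has a definite set of free world-variables, which by hypothesis lies either entirely inside the variables of $\Sigma \cup \Pi$ (call this \emph{group~$1$}) or entirely inside those of $\Sigma' \cup \Pi'$ (\emph{group~$2$}), with constants such as $\square$ and $\hash$ being neutral. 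At an $\ax$-leaf $\Phi \proves \Phi$, the free world-variables of $\Phi$ place it in a single group $g$, and $\ax$ followed by weakening gives a proof of the group-$g$ smaller sequent; the $\rrn\square$ axiom is handled analogously.

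The inductive step is then a rule-by-rule case analysis. For single-premise rules (the structural rules, $\lrn\land$, $\rrn\to$, $\lrn\exists$, $\lrn\forall$, etc.)\ the IH supplies a proof of one of the two smaller sequents, and this is either already the target or becomes so after re-applying the rule in the appropriate group. For two-premise rules such as $\rrn\land$, $\lrn\lor$, $\lrn\to$, and in particular the $\lrn\Rightarrow$ step of a resolution against an instantiated clause of $\Sigma_{\logic{BI}}$, the active formula, once fully instantiated, belongs to a single group $g$; the four-way case analysis on the two IH-applications then collapses, because if both premises yield the group-$g$ smaller sequent we recombine them via the rule, while in every other case at least one premise already delivers the smaller sequent for the other group, which is exactly what we need.

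The delicate point I expect is the treatment of $\lrn\exists$ when unpacking existentials in $\Sigma_{\logic{BI}}$, notably those arising from the associativity clause: the fresh eigenvariable belongs a priori to neither group. The plan is to argue that in a cut-free proof the eigenvariable can be used only alongside variables of a single group, since any matching $\ax$-leaf further up the proof requires a coherent variable assignment; one then assigns the eigenvariable to the group of the other free world-variables in the witnessing existential formula, and verifies that each subsequent rule application preserves this assignment. Once this bookkeeping is in place, the induction goes through uniformly.
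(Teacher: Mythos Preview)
Your right-to-left direction matches the paper's, and your overall inductive strategy (on proof height rather than on the number of resolutions, which is the paper's measure) is perfectly reasonable. But there is a real gap, and it is not where you think it is.

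Your ``guiding observation'' --- that every meta-formula appearing in $\mathcal{D}$ outside the closed axioms of $\Sigma_{\logic{BI}}$ has its free world-variables lying entirely in one group --- is precisely what is at issue, not something you may assume. The culprit is not $\lrn\exists$ but $\lrn\forall$, which you list among the easy single-premiss cases. When a clause of $\Sigma_{\logic{BI}}$ is instantiated, nothing in the calculus forces all the instantiating worlds to come from the same group. Concretely, if $w \satisfies \phi \to \psi$ lies in $\Sigma$ (group~$1$), resolving with the $\to$-clause and instantiating the bound world-variable by some $u$ from group~$2$ produces the atom $w \preceq u$, which belongs to neither group. Your assertion that ``the active formula, once fully instantiated, belongs to a single group~$g$'' simply fails here, and with it the four-way collapse you rely on for the subsequent $\lrn\Rightarrow$ step.

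The paper's proof confronts exactly this case. It observes that a cross-group atom such as $w \preceq u$, once it appears on the right of a premiss, can never be discharged: the only way it could be used is by matching an identical occurrence on the left, and no such occurrence can arise because the propositional hypotheses are world-independent and the frame axioms of $\Sigma_{\logic{BI}}$ do not manufacture $\preceq$-facts linking the two groups. Hence the mixed atom may be weakened away without loss, restoring the world-independent shape and allowing the induction hypothesis to apply. This is the missing idea in your sketch. Your worry about $\lrn\exists$ and fresh eigenvariables is in fact secondary: once mixed $\lrn\forall$-instantiations are shown to be eliminable, every existential witness is introduced inside a formula whose other free world-variables already lie in a single group, and the eigenvariable inherits that group --- exactly the bookkeeping you propose.
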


Recall that $\Sigma \proves \Pi$ holds iff there there is a $\system{DLJ}^\mathsf{K}$-proof of $\Sigma:\Pi$, so the lemma may be proved by induction on $\system{DLJ}^\mathsf{K}$-proofs.

\begin{proof}
The \emph{if} direction follows immediately by $\lrn{\weak}$ and $\rrn{\weak}$. For the $\emph{only if}$ direction suppose $\Sigma_{\logic{BI}}, \Sigma, \Sigma' \proves \Pi, \Pi'$, then there is a $\system{DLJ}^\mathsf{K}$-proof $\mathcal{D}$ of it. We proceed by induction the number of resolutions in such a proof.

\textsc{Base Case.} If $\mathcal{D}$ contains no resolutions, then $\Sigma_{\logic{BI}}, \Sigma, \Sigma' \proves \Pi, \Pi'$ is proved by $\ax$ together with the rules for the meta-connectives. But then there are proofs for $\Sigma_{\logic{BI}}, \Sigma \proves \Pi$ or $\Sigma_{\logic{BI}}, \Sigma' \proves \Pi'$ since the rules for the connectives cannot affect what world- or formula-variables.

\textsc{Induction Step.} If any resolution of  $\Sigma_{\logic{BI}}, \Sigma, \Sigma' \proves \Pi, \Pi'$ yields a meta-sequent of the form of the statement of the lemma, then the lemma follows immediately from the induction hypothesis. We show that this is the case.

The only non-obvious case is in the case of a closed resolution using the $\to$-clause or $\wand$-clause in the extract because they have universal quantifiers that would allow one to produce a meta-atom in the extract that contains both a world from $\Sigma, \Pi$ and $\Sigma', \Pi'$ simultaneously, thereby breaking world-independence. We show the $\to$-case, the other being similar. 

Let $\Sigma = \Sigma'', w \satisfies \phi \to \psi$ and suppose $u$ is a world variable appearing in $\Sigma', \Pi'$, then we have the following computation in which $\Sigma_{\logic{BI}}$ has been suppressed in the context:

\begin{scprooftree}{0.95}
\AxiomC{$ \Sigma'', \Sigma' \proves \Pi, \Pi', w \prec u$}
\AxiomC{$ \Sigma'', \Sigma' \proves \Pi, \Pi', (u \satisfies \phi) $}
\AxiomC{$ \Sigma'', \Sigma', (u \satisfies\psi) \proves \Pi, \Pi' $}
\doubleLine
\RightLabel{$\lrn{\Rightarrow}^\mathsf{K}$}
\TrinaryInfC{$ \Sigma'', (w \prec u \Rightarrow (u \satisfies \phi \Rightarrow u \satisfies \psi)), \Sigma' \proves \Pi, \Pi'$}
\RightLabel{$\lrn \forall$}
\UnaryInfC{$\Sigma'', \forall x(w \prec x \Rightarrow (x \satisfies \phi \Rightarrow x \satisfies \psi)), \Sigma' \proves \Pi, \Pi'$}
\RightLabel{$\to$-clause}
\UnaryInfC{$\Sigma'', (w \satisfies \phi \to \psi), \Sigma' \proves \Pi, \Pi'$}
\end{scprooftree}

The $w \prec u$ may be removed from the leftmost premiss because the only way for the meta-atom to be used in the remainder of the proof is if $w \prec u$ appears in the context, but this is impossible. Hence, without loss of generality, $\mathcal{D}$ applies $\rrn \weak$ to the branch, yielding $\Sigma'', \Sigma' \proves \Pi, \Pi'$. The result follows by induction hypothesis. 

\end{proof}

To prove that \system{DLJ} is adequate for proofs of CVSs it only remains to argue that the change-of-world guard is implemented whenever it is required, and that it indeed results in a world-independent situation.
 
\begin{Lemma}\label{lem:admissabilityofclassicalimplication}
   The  $\rrn{\forall^\mathsf{K}}$ and $\rrn{\Rightarrow^\rn{K}}$ rules are admissible for \system{DLJ}-proofs of CVSs:
\end{Lemma}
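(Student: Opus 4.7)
The plan is to argue by induction on the height of a $\system{DLJ}^\mathsf{K}$-proof of a CVS that every application of $\rrn{\forall^\mathsf{K}}$ or $\rrn{\Rightarrow^\mathsf{K}}$ can be replaced by its intuitionistic counterpart plus $\rrn{\weak}$. The non-trivial cases are the two classical rules themselves; for every other rule in $\system{DLJ}^\mathsf{K}$ the rule already lies in $\system{DLJ}$, so one applies the induction hypothesis to the premisses and reassembles the derivation.

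For the critical cases, I would exploit the observation motivating the lemma: inside a CVS, the only meta-formulas involving $\Rightarrow$ or $\forall$ are the clauses of $\Sigma_{\logic{BI}}$, so the formula newly introduced by $\rrn{\Rightarrow^\mathsf{K}}$ or $\rrn{\forall^\mathsf{K}}$ must have arisen (possibly after intermediate propositional steps that commute with it) from a resolution that instantiates one of the clauses for $\to$ or $\wand$ in the extract. Such a resolution reduces $w \satisfies \phi \to \psi$ (resp.\ $w \satisfies \phi \wand \psi$) to a universally quantified implication whose bound variable $u$ must be chosen fresh by $\rrn{\forall^\mathsf{K}}$; after pushing $w \preceq u$ and $u \satisfies \phi$ (resp.\ $R(w',w,u)$ and $u \satisfies \phi$) left via $\rrn{\Rightarrow^\mathsf{K}}$, the resulting premiss has the shape $\Sigma_{\logic{BI}}, \Sigma, \Sigma_u : \Pi, (u \satisfies \psi)$, where $\Sigma_u$ is the block introduced by the resolution and $\Sigma, \Pi$ do not mention $u$. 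Hence the two halves are world-independent in the sense of Definition/Lemma \ref{lem:worldindependence}, which splits the premiss into either $\Sigma_{\logic{BI}}, \Sigma : \Pi$ or $\Sigma_{\logic{BI}}, \Sigma_u : (u \satisfies \psi)$ being derivable. In the first case we build the conclusion by $\rrn{\weak}$ alone; in the second, we apply the intuitionistic $\rrn{\Rightarrow}$ and $\rrn{\forall}$ to $\Sigma_u : (u \satisfies \psi)$ to reconstruct $\forall u(w \preceq u \Rightarrow (u \satisfies \phi \Rightarrow u \satisfies \psi))$, then weaken in $\Pi$ and close the resolution.

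To make the induction go through cleanly I would actually strengthen the statement to say: whenever $\Sigma_{\logic{BI}}, \bar{\Sigma} : \bar{\Pi}$ is derivable in $\system{DLJ}^\mathsf{K}$ with $\bar{\Sigma},\bar{\Pi}$ sets of assertions, it is already derivable in $\system{DLJ}$. The admissibility of $\lrn\cont$ and $\rrn\cont$ from Lemma \ref{lem:admissibilityofcontractionright} is needed to justify treating $\bar{\Sigma}$ and $\bar{\Pi}$ as sets throughout. For the resolutions of $\land$- and $\ast$-clauses the unpacking/packing discipline of Lemma \ref{lem:packing} ensures that the invariant that left- and right-hand sides are sets of assertions is preserved after the outer propositional rules fire.

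The principal obstacle is showing that the world variable bound by $\rrn{\forall^\mathsf{K}}$ can always be taken fresh with respect to all other assertions in the sequent, so that the world-independence hypothesis of Lemma \ref{lem:worldindependence} is genuinely available. This amounts to a standard but careful eigenvariable-renaming argument: one has to verify that resolutions performed higher in the derivation never force a non-fresh choice, which in turn requires arguing that the $\Sigma_{\logic{BI}}$-clauses are closed (their only free variables are bound by their outer universal quantifiers) and that renaming inside a $\system{DLJ}^\mathsf{K}$-proof is height-preserving. Once this technical bookkeeping is in place, the induction step is routine.
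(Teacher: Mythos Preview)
Your proposal is correct and follows essentially the same route as the paper. The paper's proof is a terse case analysis observing that the only way a $\forall$- or $\Rightarrow$-formula can appear on the right in a proof of a CVS is via a closed resolution of a $\to$- or $\wand$-clause; it then displays the two computations, notes that the eigenvariable condition makes the resulting premiss world-independent, and invokes Lemma~\ref{lem:worldindependence} to conclude admissibility. Your induction on proof height with the explicit two-case split after applying Lemma~\ref{lem:worldindependence} (either weaken to the conclusion, or apply the intuitionistic $\rrn\Rightarrow$/$\rrn\forall$ and then weaken) spells out what the paper leaves implicit, and your remarks about strengthening the invariant and handling eigenvariable freshness fill in bookkeeping the paper elides, but the key mechanism is identical.
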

\begin{proof}
By case analysis on $\Sigma_{\logic{BI}}$, the only way for either conclusion to arise in a proof of a CVS is as a closed resolution of an implicational assertion (i.e., an assertion $w \satisfies \phi \multimap \psi$ in which either $\multimap \in \{\to, \wand\}$). In the case of $\to$-clause, without loss of generality, the resolution may be taken to be required for the proof such that persistence is applied eventually to $w \prec u$.  By permuting resolutions, we may assume that it is used immediately. Thus, by Lemma \ref{lem:packing}, one has the following computation: 

\begin{prooftree}
\AxiomC{$ \bar{\Sigma}, (w \satisfies \Gamma), (u \satisfies \Gamma \fatsemi \phi) \proves \bar{\Pi}, (u \satisfies \psi)$}
\RightLabel{$\land$-clause}
\UnaryInfC{$\bar{\Sigma}, (w \satisfies \Gamma), (u \satisfies \Gamma), (u \satisfies \phi) \proves \bar{\Pi}, (u \satisfies \psi)$}
\RightLabel{$\rn{pers.}$}
\UnaryInfC{$\bar{\Sigma}, (w \satisfies \Gamma) , w \preceq u, (u \satisfies \phi) \proves \bar{\Pi}, (u \satisfies \psi)$}
\doubleLine
\RightLabel{$\rrn{\Rightarrow^\mathsf{K}}$}
\UnaryInfC{$\bar{\Sigma}, (w \satisfies \Gamma ) \proves \bar{\Pi}, (w \preceq u \Rightarrow (u \satisfies \phi \Rightarrow u \satisfies \psi)$}
\RightLabel{$\rrn{\forall^\mathsf{K}}$}
\UnaryInfC{$\bar{\Sigma}, (w \satisfies \Gamma)  \proves \bar{\Pi}, (w \preceq u \Rightarrow (u \satisfies \phi \Rightarrow u \satisfies \psi))$}
\end{prooftree}\vspace{0.2cm}

\noindent In the case of the $\wand$-clause, by Lemma \ref{lem:packing}, one has the following derivation

\begin{prooftree}
\AxiomC{$\bar{\Sigma}, (w \satisfies \Gamma ), (w' \satisfies \Gamma \fatcomma \psi) \proves \bar{\Pi}, (w' \satisfies \psi)$}
\RightLabel{$\ast$-clause}
\UnaryInfC{$\bar{\Sigma}, (w \satisfies \Gamma ), R(w',w,u), u \satisfies \phi \proves \bar{\Pi}, (w' \satisfies \psi)$}
\doubleLine
\RightLabel{$\rrn{\Rightarrow^\mathsf{K}}$}
\UnaryInfC{$\bar{\Sigma}, (w \satisfies \Gamma) \proves \bar{\Pi} ,(R(w',w,u) \Rightarrow (u\satisfies \phi \Rightarrow w' \satisfies \psi)) $}
\RightLabel{$\rrn{\forall^\mathsf{K}}$}
\UnaryInfC{$ \bar{\Sigma}, (w \satisfies \Gamma) \proves \bar{\Pi},\forall w', u (R(w',w,u) \Rightarrow (u\satisfies \phi \Rightarrow w' \satisfies \psi)) $}
\end{prooftree}\vspace{0.2cm}

In either case, by the eigenvariable condition on universal instantiations, the premiss is a meta-sequent of the form $\Sigma_{\logic{BI}}, \Sigma ,\Sigma' \proves \Pi, \Pi'$ in which $\Sigma, \Pi$ and $\Sigma', \Pi'$ are world-independent. Hence, by Lemma \ref{lem:worldindependence}, the rules are admissible. 
\end{proof}

\begin{Lemma} \label{lem:compdlj}
A CVS holds iff it admits a \system{DLJ}-proof.
\end{Lemma}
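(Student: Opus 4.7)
The plan is to prove the two directions separately, with the easy direction being soundness and the substantive direction being completeness; the latter will be essentially a bookkeeping exercise built on the preceding admissibility lemmas.

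For the \emph{if} direction (soundness), I would observe that $\system{DLJ}$ is a subsystem of a sound sequent calculus for classical first-order logic (indeed, it is Dummett's multiple-conclusioned intuitionistic calculus, which is classically sound). Consequently, if a CVS admits a $\system{DLJ}$-proof, then it is classically valid; unpacking the definition of validity and of $\Sigma_{\logic{BI}}$, this is exactly the statement that the CVS holds in every BI-model.

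For the \emph{only if} direction (completeness), I would begin with the observation that if the CVS $\langle \Sigma_{\logic{BI}}, \bar{\Sigma} : \bar{\Pi} \rangle$ holds, then, by classical completeness of the meta-logic, it admits a $\system{DLJ}^\mathsf{K}$-proof --- recall $\system{DLJ}^\mathsf{K} = \system{DLJ} \cup \{\rrn{\forall^\mathsf{K}}, \rrn{\Rightarrow^\mathsf{K}}, \rrn{\cont}, \lrn{\cont}\}$ is a full classical calculus. The goal is then to eliminate each of these four extra rules in favour of $\system{DLJ}$-derivations, preserving the CVS-character of all the meta-sequents involved.

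The key observation is that each of the extra rules is known to be admissible in $\system{DLJ}$ for proofs of CVSs: $\rrn{\cont}$ and $\lrn{\cont}$ by Lemma \ref{lem:admissibilityofcontractionright}, and $\rrn{\forall^\mathsf{K}}$ and $\rrn{\Rightarrow^\mathsf{K}}$ by Lemma \ref{lem:admissabilityofclassicalimplication}. I would proceed by induction on the number of applications of these rules in the $\system{DLJ}^\mathsf{K}$-proof, peeling off a topmost occurrence and replacing it with the $\system{DLJ}$-derivation supplied by the appropriate admissibility lemma.

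The main subtlety --- and the place where care is required --- is that the admissibility lemmas apply only to proofs of CVSs, so one must check that every intermediate meta-sequent encountered during the rewriting procedure is itself (up to permutation of the context into theory $\Sigma_{\logic{BI}}$ and assertions $\bar{\Sigma}$, $\bar{\Pi}$) a CVS. This is essentially guaranteed by the shape of $\Sigma_{\logic{BI}}$ and by the fact that the only non-assertion meta-formulas that can appear during a reasonable $\system{DLJ}^\mathsf{K}$-proof are instantiations of clauses from $\Sigma_{\satisfies}$ or $\Sigma_{\mathfrak{M}}$, which by the analysis in the proof of Lemma \ref{lem:admissabilityofclassicalimplication} arise only in closed resolutions or in transient positions immediately discharged by $\lrn{\Rightarrow}$, $\lrn{\forall}$, or $\lrn{\with}$. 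Once this invariant is checked, the induction goes through and produces the desired $\system{DLJ}$-proof of the original CVS.
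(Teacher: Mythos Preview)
Your proposal is correct and takes essentially the same approach as the paper, which simply writes ``Immediate by Lemma~\ref{lem:admissabilityofclassicalimplication} and Lemma~\ref{lem:admissibilityofcontractionright}.'' You have unpacked what ``immediate'' means here --- classical completeness gives a $\system{DLJ}^\mathsf{K}$-proof, and the two admissibility lemmas let you eliminate the four extra rules --- and you have correctly flagged the CVS invariant that must be maintained during the elimination, a point the paper leaves implicit.
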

\begin{proof}
Immediate by Lemma \ref{lem:admissabilityofclassicalimplication} and Lemma \ref{lem:admissibilityofcontractionright}.
\end{proof}

There remains a particular behaviour that is useful to eliminate from the calculus: in \system{DLJ} one may begin with a CVS and instantiate a meta-formula in $\Sigma_{\logic{BI}}$ with a world not present in meta-sequent, but such a world-variable represents an arbitrary world alien to information about models available in the sequent and therefore, intuitively, it cannot be a required part of the reasoning used to establish or refute the veracity of the sequent.  

\begin{Example} \label{ex:world-labourious}
The following derivation is a reduction of a BVS that begins with a resolution introducing a world alien to the original meta-sequent,
\begin{prooftree}
\AxiomC{$\Sigma_{\logic{BI}}, (w\satisfies \rm p \land \rm q) \proves (u \satisfies \top) $}
\AxiomC{$\Sigma_{\logic{BI}}, (w \satisfies \rm p \land \rm q) \proves (w \satisfies \rm p \lor \rm q) $}
\RightLabel{$\lrn \Rightarrow$}
   \BinaryInfC{$\Sigma_{\logic{BI}}, (u \satisfies \top) \Rightarrow \square, (w \satisfies \rm p \land \rm q) \proves (w \satisfies \rm p \lor \rm q)$}
   \RightLabel{$\lrn \forall$}
   \UnaryInfC{$\Sigma_{\logic{BI}}, (w \satisfies \rm p \land \rm q) \proves (w \satisfies \rm p \lor \rm q)$}
\end{prooftree}
\end{Example}

We eliminate computation such as in Example \ref{ex:world-labourious} so that after resolutions way may always interpret meta-sequents as BI-sequents (see Section \ref{subsec:reasoning}, below).

\begin{Definition}[World-conservative]
    A $\system{DLJ}$-proof of a CVS is said to be world-conser\-vative if in any instance of $\lrn \forall$ or $\rrn \exists$, every world-variable occurring in the premiss occurs in the conclusion.
\end{Definition}

\begin{Lemma} \label{lem:compdljwc}
If a CVS is holds iff it admits a world-conservative $\system{DLJ}$-proof.
\end{Lemma}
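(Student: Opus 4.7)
The $(\Leftarrow)$ direction is immediate: every world-conservative $\system{DLJ}$-proof is in particular a $\system{DLJ}$-proof, so by Lemma \ref{lem:compdlj} the CVS holds.

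For $(\Rightarrow)$, let $S$ be a CVS that holds; by Lemma \ref{lem:compdlj} there is a $\system{DLJ}$-proof $\mathcal{D}$ of $S$. The plan is to show, by induction on the number of non-conservative instances of $\lrn \forall$ and $\rrn \exists$ in $\mathcal{D}$, that $S$ admits a world-conservative proof. The base case is trivial. For the inductive step, pick a bottom-most non-conservative instance $\iota$ --- say, an $\lrn \forall$ that instantiates a universally-quantified meta-formula of $\Sigma_{\logic{BI}}$ with an alien world-variable $u$ (the $\rrn \exists$ case is symmetric, and $\rrn \forall^{\mathsf{K}}$/$\rrn \Rightarrow^{\mathsf{K}}$ have already been shown to be confined to implicational assertions by Lemma \ref{lem:admissabilityofclassicalimplication}).

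The premise of $\iota$ has the form $\Sigma_{\logic{BI}}, \Phi[u/X], \Sigma \proves \Pi$, where $\Sigma, \Pi$ do not mention $u$. All $u$-containing meta-atoms originate from $\Phi[u/X]$, so they are world-independent of $\Sigma, \Pi$. Propagating the invertible decompositions of $\lrn \with$ and $\lrn \exists$ (Lemma \ref{lem:packing}) through the subderivation above $\iota$ reduces the premise to a propositional meta-sequent in which the $u$-atoms form a self-contained block; Lemma \ref{lem:worldindependence} then splits provability of this premise into either provability of a $u$-only fragment or provability of the desired sequent $\Sigma_{\logic{BI}}, \Sigma \proves \Pi$. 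The former alternative would entail inconsistency of $\Sigma_{\logic{BI}}$, since $\Phi[u/X]$ is itself an instance of an axiom of $\Sigma_{\logic{BI}}$; but this is ruled out by the existence of concrete BI-frames witnessing the consistency of the theory. Hence the latter alternative holds, providing a proof of the conclusion of $\iota$ that bypasses the non-conservative instantiation; grafting it into $\mathcal{D}$ in place of the subderivation rooted at $\iota$ yields a proof of $S$ with strictly fewer non-conservative instances, and the induction hypothesis finishes the argument.

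The main obstacle is that Lemma \ref{lem:worldindependence} is proved existentially, so to certify that the induction measure actually decreases one must inspect its proof and extract a constructive splitting procedure acting on the subderivation $\mathcal{D}_\iota$ above $\iota$. The key structural fact that makes this possible is that a $u$-atom can only close an axiom against another $u$-atom, and every $u$-atom traces back to $\Phi[u/X]$; thus $\mathcal{D}_\iota$ decomposes cleanly into a $u$-free component, which becomes the replacement subproof, and an alien component, which is discarded (being derivable from $\Sigma_{\logic{BI}}$ alone by a simple axiom-closure on $\Phi[u/X]$). Taking the induction measure to be the lexicographic pair (number of non-conservative instances in $\mathcal{D}$, size of $\mathcal{D}$) then secures well-foundedness.
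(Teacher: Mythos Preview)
Your argument has a genuine gap in the world-independence step. You claim that the $u$-containing meta-atoms arising from $\Phi[u/X]$ are world-independent of $\Sigma,\Pi$, but this is not so in general. A single non-conservative $\lrn\forall$ strips one quantifier; the remaining variables of the clause may already have been instantiated with variables that \emph{do} occur in $\Sigma,\Pi$. Concretely, take the persistence clause and instantiate the first variable with an existing $w$ and the second with the alien $u$: the resulting body contains the atom $w \preceq u$, which mentions both $w$ and $u$ and is therefore not world-independent of $\Sigma,\Pi$. Lemma~\ref{lem:worldindependence} does not apply to such mixed atoms, so the split you rely on is not available. Your later tracing remark (``a $u$-atom can only close an axiom against another $u$-atom'') gestures at a repair, but mixed atoms like $w\preceq u$ are neither pure $u$-atoms nor $u$-free, and you would need a substantially more careful case analysis to push the argument through.

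The paper's proof sidesteps all of this with a one-line observation you overlooked: $\lrn\forall$ carries no side-condition on the substituted term, so if an instance substitutes an alien variable $x$, one may simply replace $x$ by any variable $y$ already present in the conclusion. Since $x$ does not occur in the conclusion, the substitution $[y/x]$ applied uniformly to the subderivation above (with the usual renaming of eigenvariables to avoid capture) yields a valid proof of the same conclusion in which this instance is now conservative. No appeal to world-independence, consistency of $\Sigma_{\logic{BI}}$, or induction on a lexicographic measure is needed; the whole lemma reduces to the standard free-variable substitution property of sequent proofs.
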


\begin{proof}
Since $\lrn \forall$ has no pre-conditions, the result follows by renaming variable; that is, if an inference 
\begin{prooftree}
\AxiomC{$ \Sigma_{\logic{BI}}, \Sigma, \Psi\theta_u \proves \Pi $}
   \UnaryInfC{$ \Sigma_{\logic{BI}}, \Sigma, \forall u \Psi \proves \Pi $}
\end{prooftree}
is not world-conservative (i.e., $\theta_u : u \mapsto x$ and $x$ does not appear in $\Sigma$ or $\Pi$), it can be made world-conservative by replacing all hereditary occurrences of $x$ in the proof by a world-variable $y$ that does appear in either $\Sigma$ or $\Pi$ --- for example, the above inference becomes the following, where $\theta_u:u \mapsto y$:
\begin{prooftree}
\AxiomC{$ \Sigma_{\logic{BI}}, \Sigma, \Psi\theta_u' \proves \Pi $}
   \UnaryInfC{$ \Sigma_{\logic{BI}}, \Sigma, \forall u \Psi \proves \Pi $}
\end{prooftree}

\end{proof}

\subsection{Reasoning about BI-validity} \label{subsec:reasoning}
The approach to completeness in this paper proceeds by witnessing a correspondence between reasoning about provability using BI's proof theory and reasoning about validity by using BI's model theory, both characterized as reductive proof-search. In Section \ref{subsec:metaprooftheory}, we restricted the proof-search space to enable the comparison. In this section, we introduce a calculus that embodies these restrictions, thereby enabling a clearer comparison between proof-search for BI-provability and BI-validity.

Observe that the world-variables in BVSs do not stand for particular worlds, but rather are generic representatives of worlds. These may be called \emph{eigenworlds}.  The context of the meta-sequent instantiates some basic information that restricts the set of BI-models in which the extract is supposed to holds; in particular, $\Sigma_{\logic{BI}}$ captures the \emph{concept} of a BI-frames, rather than providing an \emph{actual} frame. Consequently, when working with meta-sequents one is working with validity directly, bypassing truth-in-a-model.

\begin{Example} 
Suppose one has the meta-sequent $w \satisfies \rm r \proves w \satisfies \rm p * \rm q$ in $\rm p, \rm q,$ and $\rm r$ are propositional letters of BI. The relevant half of the $*$-clause is, including all quantifiers, the following:
\[
\forall \phi, \psi \forall x (\exists y,z (R(x,y,z) \with x \satisfies \phi \with y \satisfies \psi \with z \satisfies \psi) \Rightarrow  x \satisfies \phi*\psi)
\]

Resolving with this clause produces the following meta-sequent:
\[
w \satisfies \rm r \Rightarrow \exists y,z (R(w,y,z) \with y \satisfies \rm p \with z \satisfies \rm q) 
\]

In the absence of any specific worlds, one introduces eigenworlds $u$ and $v$ to eliminate the existential quantifiers for $y$ and $z$, respectively, yielding the following:
\[
w \satisfies \rm r \Rightarrow R(w,u,v) \with u \satisfies \rm p \with v \satisfies \rm q 
\]

All of this work has been done generically since $w$, $u$ and $v$ are devoid of features, and thus we know that the same reasoning can take place at any world in any model. Explicitly, suppose one were given an actual model $\model$, then the above shows that if it holds for actual worlds $a,b,c$  in $\model$ that $R(c, a, b)$, $a \satisfies \rm p$, and $b \satisfies \rm q$ hold, then \emph{necessarily} $c \satisfies \rm p * \rm q$ holds in $\model$. 
\end{Example}

With the concept of eigenworlds, the format of the calculus for validity in BI below can be understood. Each rule in it is a rule that captures precisely how a clause affects validity.

\begin{Definition}[System \system{VBI}]
    System $\system{VBI}$ is composed of the rules in Figure \ref{fig:vbi} in which the theory $\Sigma_{\logic{BI}}$ has been suppressed in the context and $\cl{\asso}$ is invertable.
\end{Definition}

\begin{figure}[t]
    \fbox{
    \scalebox{0.9}{
	\begin{minipage}{\textwidth}
		\centering 
		\vspace{0.2cm}
		$
		 \infer[\lcl \land]{ w \satisfies \Gamma(\phi\land\psi)  : w \satisfies \chi}{w \satisfies \Gamma(\phi\fatsemi\psi) : w \satisfies \chi}
		 \quad
		\infer[\rcl \land]{w \satisfies \Gamma : w \satisfies \phi \land \psi}{w \satisfies \Gamma : w \satisfies \phi  & w \satisfies \Gamma : w \satisfies \psi}
		$
		\\[1.5ex]
		 $    
		 \infer[\lcl \ast]{ w \satisfies \Gamma(\phi*\psi) : w \satisfies \chi }{ w \satisfies \Gamma(\phi\fatcomma\psi) : w \satisfies \chi}
		 \quad
		     \infer[\rcl \ast]{w \satisfies \Gamma \fatsemi (\Delta_1 \fatcomma \Delta_2) : w \satisfies \phi_1 * \phi_2}{w \satisfies \Delta_1 : w \satisfies \phi_1  & w \satisfies \Delta_2 : w \satisfies \phi_2}
		  $   
		  \\[1.5ex]
		 $ 
		  \infer[\lcl \lor]{ w \satisfies \Gamma(\phi\lor\psi) : w \satisfies \chi}{w \satisfies \Gamma(\phi) : w \satisfies \chi & w \satisfies \Gamma(\psi) : w \satisfies \chi}
		  \quad
		 	\infer[\rcl \lor]{w \satisfies \Gamma : w \satisfies \phi_1 \lor \phi_2}{w \satisfies \Gamma : w \satisfies \phi_i}
		 $
		 \\[1.5ex]
		 $
		 \infer[\lcl \to]{w \satisfies \Delta \fatsemi \phi \to \psi : w\satisfies \chi}{w \satisfies \Delta : w \satisfies \phi & w \satisfies \Gamma(\Delta,\psi) : w \satisfies \chi }
		 \quad
		  \infer[\rcl \to]{w \satisfies \Gamma : w \satisfies \phi \to \psi}{w \satisfies \Gamma \fatsemi \phi: w \satisfies \psi}
		  $
		  \\[1.5ex]
        $
\infer[\lcl \wand]{ w \satisfies \Gamma(\Delta_1 \fatcomma \Delta_2, \phi\wand\psi) : w \satisfies \chi }{ w \satisfies \Delta_2 : w \satisfies \phi & w \satisfies \Gamma(\Delta_1 \fatcomma \psi) : w \satisfies \chi}
\quad
       \infer[\rcl \wand]{w \satisfies \Gamma : w \satisfies \phi \wand \psi}{w \satisfies \Gamma \fatcomma \phi : w \satisfies \psi}
        $
		 \\[1.5ex]
		 $
		 \infer[\lcl \I]{w \satisfies \Gamma(\Delta \fatcomma \munit) : w\satisfies \chi }{w \satisfies \Gamma(\Delta) : w\satisfies \chi}
                \quad
                 \infer[\lcl \I]{w \satisfies \Gamma(\Delta) : w \satisfies\chi }{ w \satisfies \Gamma(\Delta \fatcomma \munit) : w \satisfies \chi}
                   $
		 \\[1.5ex]
		 $ 
		 \infer[\lcl \top]{w \satisfies \Gamma(\Delta) : w \satisfies \chi }{w \satisfies \Gamma(\Delta\fatsemi \aunit) : w \satisfies \chi}
		 \quad
		       \infer[\rcl \top]{w \satisfies \Gamma : w \satisfies \top}{\square}
      $
      \\[1.5ex]
      $
                \infer[\lcl \bot]{w \satisfies \Gamma(\bot) : w\satisfies \chi }{ w \satisfies \Gamma(\phi) : w\satisfies \chi }   
                \quad
                 \infer[\ax]{w \satisfies \Gamma\fatsemi \phi : w\satisfies \phi}{ \square }
        $
        \\[1.5ex]
        $
        \infer[\cl \comm]{w \satisfies \Gamma(\Delta_1 \fatcomma \Delta_2) : w \satisfies \chi}{ w \satisfies \Gamma(\Delta_2 \fatcomma \Delta_1) : w \satisfies \chi }
        \quad 
        \infer[\cl \asso]{w \satisfies \Gamma(\Delta_1 \fatcomma (\Delta_2 \fatcomma \Delta_3)) : w \satisfies \chi}{ w \satisfies \Gamma((\Delta_1 \fatcomma (\Delta_2) \fatcomma \Delta_3) : w \satisfies \chi }
        $
        \\[1.5ex]
        $
        \infer[\exch_\rn{1}]{w \satisfies \Gamma(\Delta_1 \fatsemi \Delta_2) : w \satisfies \chi}{ w \satisfies \Gamma(\Delta_2 \fatsemi \Delta_1) : w \satisfies \chi }
        \quad 
        \infer[\exch_\rn{2}]{w \satisfies \Gamma(\Delta_1 \fatsemi (\Delta_2 \fatsemi \Delta_3)) : w \satisfies \chi}{ w \satisfies \Gamma((\Delta_1 \fatsemi \Delta_2) \fatsemi \Delta_3) : w \satisfies \chi }
        $
        \\[1.5ex]
         $
        \infer[\cont]{w \satisfies \Gamma(\Delta) : w \satisfies\chi }{ w \satisfies \Gamma(\Delta \fatsemi \Delta) : w \satisfies \chi}
        \quad
        \infer[\weak]{w \satisfies \Gamma(\Delta \fatsemi \Sigma) : w \satisfies \chi}{ w \satisfies \Gamma(\Delta) : w \satisfies \chi }
        $
        \\[1.5ex]
        $
        \infer[\cut]{w \satisfies \Gamma(\Delta) : w \satisfies\chi }{ w \satisfies \Delta : w \satisfies \phi & w \satisfies \Gamma(\phi) : w \satisfies \chi}
        $
        
	    \vspace{0.2cm}
    \end{minipage}
    }
    }
    \caption{System \system{VBI}}
    \label{fig:vbi}
\end{figure}

\begin{Theorem} \label{thm:vbi}
 A BVS is valid iff it admits a $\system{VBI}$-proof.
\end{Theorem}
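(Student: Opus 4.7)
The plan is to reduce the theorem to Lemma~\ref{lem:compdljwc}, which tells us that a BVS (indeed, any CVS) is valid iff it admits a world-conservative \system{DLJ}-proof. It then suffices to exhibit a structural correspondence between \system{VBI}-proofs and world-conservative \system{DLJ}-proofs of BVSs: each rule of \system{VBI} will be shown to be a derivable macro in \system{DLJ}${}+\Sigma_{\logic{BI}}$, and conversely every world-conservative proof can be normalized into a pattern where those macros are the only operational steps.

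For the \emph{if} direction (soundness of \system{VBI}), I would proceed by induction on the height of a \system{VBI}-proof, exhibiting for each rule a derivation in \system{DLJ}${}+\Sigma_{\logic{BI}}$ of its conclusion from its premisses. The connective rules $\rcl{\land}, \lcl{\land}, \rcl{\lor}, \lcl{\lor}, \rcl{\ast}, \lcl{\ast}, \rcl{\to}, \lcl{\to}, \rcl{\wand}, \lcl{\wand}, \rcl{\top}, \lcl{\top}, \lcl{\I}, \lcl{\bot}$ are each realized by a closed resolution on the corresponding clause from $\Sigma_{\satisfies}$, surrounded by applications of $\lrn{\with}, \rrn{\with}, \lrn{\exists}, \rrn{\exists}$ and a packing/unpacking (Lemma~\ref{lem:packing}); the implicational cases additionally use Lemma~\ref{lem:admissabilityofclassicalimplication} to handle the change-of-world guard. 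The structural rules $\cl{\comm}, \cl{\asso}, \exch_1, \exch_2, \cont, \weak$ follow directly from the axioms of $\Sigma_{\mathfrak{M}}$ for commutativity/associativity of $R$ and persistence, together with the $\land$- and $\ast$-clauses that interpret the context-formers. The $\ax$ and $\cut$ rules are immediate from meta-level $\ax$ and $\cut$ (admissibility of meta-cut being standard for \system{DLJ}). Concluding with Lemma~\ref{lem:compdlj} gives validity.

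For the \emph{only if} direction, I would start from a world-conservative \system{DLJ}-proof of the BVS (by Lemma~\ref{lem:compdljwc}) and produce a \system{VBI}-proof by a normalization procedure. The idea is that, up to permutation of rules, each resolution on a clause can be taken to be closed and flanked by maximal unpacking/packing sequences and quantifier rules, so these blocks can be collapsed into a single \system{VBI}-rule firing. Formally, one proves by induction on the number of resolutions that the proof is equivalent to one in which every resolution is of the form captured by some \system{VBI}-rule: invertibility of $\lrn{\with}, \lrn{\exists}$ (Lemma~\ref{lem:packing}) lets us pull unpackings down to the active clause; the eigenvariable discipline together with world-conservativity ensures no alien worlds appear; and Lemma~\ref{lem:admissabilityofclassicalimplication}, together with the world-independence provided by Lemma~\ref{lem:worldindependence}, justifies the shape of the premisses produced by the $\to$- and $\wand$-rules. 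Once in this normal form, reading off a \system{VBI}-proof is immediate.

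The main obstacle is this normalization step in the \emph{only if} direction, and in particular the treatment of $\lcl{\to}$ and $\lcl{\wand}$, whose \system{VBI}-form produces one premiss of the shape $w\satisfies \Delta : w\satisfies \phi$ (a change-of-assertion reflecting persistence or the existential witnesses in the $\ast$-clause) alongside the continuation $w\satisfies \Gamma(\Delta,\psi) : w\satisfies \chi$. Getting the corresponding \system{DLJ}-derivation to factor cleanly into exactly these two premisses requires a careful permutation argument showing that the persistence step $w\preceq u$ (resp. the $R$-witnessing step) can always be replaced by the assertion $u\satisfies\Delta$ (resp. absorbed into the bunch context), using absurdity and the stability of $\Sigma_{\logic{BI}}$ under $\rrn{\weak}$. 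Once this factorization is established uniformly across the cases, the remainder is bookkeeping. As a corollary of the normalization, the absence of an $\I$-clause causes no loss: any use of an assertion $w\satisfies\I$ can be derived by $\lcl{\I}$ applied to a $\rcl{\top}$-style closing, matching the deliberate omission discussed after Figure~\ref{fig:sat:BI:symbolic}.
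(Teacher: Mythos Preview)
Your proposal is correct and follows essentially the same route as the paper: soundness by showing each \system{VBI}-rule is a derivable macro in \system{DLJ}${}+\Sigma_{\logic{BI}}$, and completeness by starting from a world-conservative \system{DLJ}-proof (Lemma~\ref{lem:compdljwc}) and normalizing into blocks that read off as \system{VBI}-rules. The paper organizes the completeness direction as an exhaustive six-way case analysis (axioms, open resolutions, extract-closed resolutions, context-closed resolutions, frame laws, structural rules) rather than as a single ``normalization procedure'', and it first passes from \system{DLJ} to the single-conclusioned variant \system{GLJ} so that every intermediate sequent already has the shape of a BVS; you should make this move explicit, since without it your packing/unpacking steps must contend with multi-conclusion extracts. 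One further point: the paper treats \emph{open} resolutions not by closing them but by showing they are simulated by $\cont$ followed by the \system{VBI} $\cut$-rule, so your phrase ``each resolution on a clause can be taken to be closed'' is slightly misleading --- the correct claim is that each resolution is captured by \emph{some} \system{VBI}-rule, with open resolutions landing on $\cut$.
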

\begin{proof}
The soundness of \system{VBI} is immediate by observing that each rule follows as the application of a meta-formula in $\Sigma_{\logic{BI}}$; for example, the admissibility of $\rcl{\land}$ is witnessed in this way in Example \ref{ex:instimpl}.

It remains to argue for the completeness of $\system{VBI}$. By Lemma \ref{lem:compdljwc}, a BVS holds only if it admits a world-conservative $\system{DLJ}$-proof. But since $\system{DLJ}$ is an intuitionistic calculus, we have the same result for the single-conlusioned variant $\system{GLJ}$ (i.e., Gentzen's \cite{Gentzen} sequent calculus for intuitionstic logic). We proceed by case analysis on the possible proof-searches for the BVS in $\system{GLJ}$.

A proof-search  for a BVS in $\system{GLJ}$ is composed of steps from six classes: using an axiom, open resolutions, extract-closed resolutions, context-closed resolutions, using a frame law, and using a structural rules. We show that to each step there is a corresponding  reduction in $\system{VBI}$ with the same start and end meta-sequents. Hence, proof-searches in $\system{GLJ}$ as a whole correspond to proof-searches in $\system{VBI}$.

 Without loss of generality, each proof-search  begins with an unpacking of the BVS. We may write $\Pi_{w , \Gamma(\Delta), x}, x \satisfies \Delta$ to denote a theory $\Sigma_{w , \Gamma(\Delta),x\satisfies \Delta}$. Moreover, in the closed-resolution cases, we assume that the resolvant is immediately decomposed (i.e., it is principal in the next reduction), as otherwise the resolution could have been postponed until this is the case. We denote that there is a reduction taking $C$ to $P_1,...,P_n$ as follows:
 \[
 \infer[\Uparrow]{C}{P_1 & ... & P_n}
 \]
 In particular, if the proof-search continues by taking $P_1$,...$P_n$ to $P_1',...,P_n'$ respectively, the total reduction is captured as follows:
 \[
 \infer[\Uparrow]{C}{P_1' & ... & P_n'}
 \]

 By Lemma \ref{lem:packing}, in each case we apply a packing eagerly (i.e., whenever packing is applicable and results in a sequent different from the original). Of course, there are more than one possible ways to pack a sequent according to the order in which meta-atoms are combined. Reductions that begin with unpacking a sequent and then packing it in a different order are as follows:
 \[
 \infer[\Uparrow]{w \satisfies\Gamma(\Delta_1 \fatsemi (\Delta_2 \fatsemi \Delta_2)) : w \satisfies \chi}{
    \infer[\Uparrow]{\Pi_{w ,\Gamma(\Delta_1 \fatsemi (\Delta_2 \fatsemi \Delta_2),x}, (x \satisfies \Delta_1), (x \satisfies \Delta_2), (x \satisfies \Delta_3) : w \satisfies \chi}{
        w \satisfies\Gamma((\Delta_1 \fatsemi \Delta_2) \fatsemi \Delta_2 : w \satisfies \chi
        }
    }
 \]
 Indeed, the computation is invertable. These reductions are captured by $\system{VBI}$ as an instance of  $\exch_\mathsf{2}$.
 
\bigskip \noindent \textsc{Axiom} \newline
    System $\system{GLJ}$ contains two axioms:  $\ax$ and $\metatop$. Only of them is applicable to the unpacking of a BVS --- viz. $\ax$. If the proof-search used $\ax$, then the unpacking of the BVS was of the form $ \Sigma_{w , \Gamma}, w \satisfies \phi : w \satisfies \phi \rangle$. This is only possible if the BVS was of the form $w \satisfies \Gamma,\phi : w\satisfies\phi$. These reductions are captured by $\system{VBI}$ as an instance of $\rn{id}$.

\bigskip \noindent \textsc{Open Resolutions} \newline
The sequent $w \satisfies \Gamma(\Delta): w \satisfies \phi$ is unpacked to $\Sigma_{w , \Gamma(\Delta), x \satisfies \Delta } : w \satisfies \phi$.  The open resolution gives the following reduction in which either $\Phi = (x \satisfies \chi)$ or $\Psi = (x \satisfies \chi)$:
\[
\infer[\Uparrow]{\Sigma_{w , \Gamma(\Delta), x \satisfies \Delta } : w \satisfies \phi}{\Sigma_{w , \Gamma(\Delta), x \satisfies \Delta} : \Phi & \Sigma_{w , \Gamma(\Delta), x \satisfies \Delta}, \Psi : w \satisfies \phi }
\]
Without loss of generality, by the invertability of the requisite resolutions, each branch is continued with a closed-resolution such that both $\Phi$ and $\Psi$ become $(x \satisfies \chi)$. 
\[
\infer[\Uparrow]{\Sigma_{w , \Gamma(\Delta), x \satisfies \Delta } : w \satisfies \phi}{\Sigma_{w , \Gamma(\Delta), x \satisfies \Delta} : x \satisfies \chi & \Sigma_{w , \Gamma(\Delta), x \satisfies \Delta}, x \satisfies \chi : w \satisfies \phi }
\]
Without loss of generality, by Lemma \ref{lem:worldindependence} and by Lemma \ref{lem:packing}, each branch is then weakened and packed so that the reduction from the original BVS is as follows:
\[
\infer[\Uparrow]{ w \satisfies \Gamma(\Delta) : w \satisfies \phi }{x \satisfies \Delta : x \satisfies \chi & w \satisfies \Gamma(\Delta\fatsemi \chi) : w \satisfies \phi}
\]
These reductions are captured by $\system{VBI}$ as an instance of  $\cont$ followed by $\cut$.

\bigskip \noindent \textsc{Extract-closed Resolutions.}
Though stipulated that reductions begin by unpacking the context, the unpacking is left trivial (i.e., the empty unpacking) in the following case analysis when it is not required.

\begin{itemize}
    \item[$\land$ -] Reductions beginning with the $\land$-clause are as follows:
    \[
    \infer[\Uparrow]{w \satisfies \Gamma : w \satisfies \phi \land \psi }{
        \infer[\rrn \with]{
        w \satisfies \Gamma : (w \satisfies \phi) \with (w \satisfies \psi) 
        }{
        w \satisfies \Gamma : w \satisfies \phi & w \satisfies \Gamma : w \satisfies \psi
        }
    }
    \]
    These reductions are captured by $\system{VBI}$ as  $\rcl \land$.

    \item[$\lor$ -] Reduction beginning with the $\lor$-clause are as follows:
    \[
    \infer[\Uparrow]{w \satisfies \Gamma : w \satisfies \phi \lor \psi }{
        \infer[\rrn \with]{
        w \satisfies \Gamma : (w \satisfies \phi_1) \parr (w \satisfies \phi_2) 
        }{
        w \satisfies \Gamma : w \satisfies \phi_i
        }
    }
    \]
    These reductions are captured by $\system{VBI}$ as  $\rcl \lor$.

     \item[$\to$ -] Reductions beginning with the $\to$-clause are as follows:
     \[
     \infer[\Uparrow]{ w \satisfies \Gamma : w \satisfies \phi \to \psi}{ w \satisfies \Gamma : w \preceq u \Rightarrow (u \satisfies \phi \Rightarrow u \satisfies \psi) }
     \]
     By the invertability of $\rrn \Rightarrow$, this is continued to yield the following:
     \[
    \infer[\Uparrow]{w \satisfies \Gamma : w \satisfies \phi \to \psi}{ w \satisfies \Gamma, w \preceq u , u \satisfies \phi : u \satisfies \psi  }
     \]
    Without loss of generality, this reduction is continued by persistence. This follows by Lemma \ref{lem:worldindependence} as, if not, then $w \satisfies \Gamma$ and $w \prec u$ may be removed without loss of completeness, but this removal can still happen after persistence. Moreover, by Lemma \ref{lem:packing}, the reduction is thence continued by a packing. In total, the reduction is as follows:
         \[
    \infer[\Uparrow]{w \satisfies \Gamma : w \satisfies \phi \to \psi}{ w \satisfies \Gamma \fatsemi \phi : u \satisfies \psi  }
     \]
     These reductions are captured by $\system{VBI}$ as  $\rcl \to$.

   \item[$\top$ -] Reductions beginning with the $\top$-clause are as follows:
   \[
   \infer[\Uparrow]{w \satisfies \Gamma : w \satisfies \top }{w \satisfies \Gamma : \square}
   \]
   Without loss of generality, the proof-search terminates by $\rrn \square$-axiom. These reductions are captured in $\system{VBI}$ as $\rcl \top$.

         \item[$\bot$ -] Reductions beginning with the $\bot$-clause are as follows:
         \[
         \infer[\Uparrow]{ w \satisfies \Gamma: w \satisfies \bot}{w \satisfies \Gamma : w = \bot}
         \]
         Without loss of generality this is continued by the same reduction in reverse. But this is equivalent to doing no reduction at all. 
    
    \item[$*$ -] Reductions beginning with the $*$-clause are as follows:
    \[
    \infer[\Uparrow]{\Sigma_{w , \Gamma} : w \satisfies \phi*\psi}{
        \infer[\Uparrow]{\Sigma_{w , \Gamma} :  R(w,u,v) \with (u \satisfies \phi) \with (v \satisfies \psi)}{
            \Sigma_{w , \Gamma} :  R(w,u,v) 
            &
            \Sigma_{w , \Gamma} : u \satisfies \phi 
            &
            \Sigma_{w , \Gamma} :  v \satisfies \psi
        }
    }
    \]
    This can only lead to a proof if there were $R(w,u,v), u \satisfies \phi, v \satisfies \psi \in \Sigma_{w , \Gamma}$, in which case $\Gamma = \Gamma' \fatsemi (\Delta \fatcomma \Delta')$. But then, without loss of generality, $\ax$ is applied to one branch and Lemma \ref{lem:worldindependence} to the others, so that the reduction yields the following:
      \[
    \infer[\Uparrow]{\Sigma_{w , \Gamma' \fatsemi (\Delta_1 \fatcomma \Delta_2)} : w \satisfies \phi_1*\phi_2}{
       \Sigma_{u \satisfies \Delta_1} : u\satisfies \phi_1
       &
       \Sigma_{v \satisfies \Delta_2} : v\satisfies \phi_2
    }
    \]
    Without loss of generality, by Lemma \ref{lem:packing}, the reduction is continued by packing. These reductions are captured in $\system{VBI}$ as $\rcl \ast$.

    \item[$\wand$ -] Reductions beginning with the $\wand$-clause are as follows:
    \[
    \infer[\Uparrow]{ w \satisfies \Gamma : w \satisfies \phi \wand \psi}{
       w \satisfies \Gamma : R(w',w,u) \with u \satisfies \phi \Rightarrow w' \satisfies\psi        }
    \]
    By invertability of $\rrn \Rightarrow$ and $\lrn \with$, this is continued to yield the following:
    \[
     \infer[\Uparrow]{ w \satisfies \Gamma : w \satisfies \phi \wand \psi}{w \satisfies \Gamma,  R(w',w,u) , u \satisfies \phi : w' \satisfies\psi}
    \]
    Without loss of generality, by Lemma \ref{lem:packing}, this is continued with a packing. These reductions are captured in $\system{VBI}$ as $\rcl \wand$.
     \end{itemize}
     
    \textsc{Clauses Applied to the Context.} Each case begins with an unpacking that produces some assertion $x \satisfies \chi$ on which the clause defining the case is applied. 
    
    \begin{itemize}
    \item[$\land$ - ]  Reductions beginning with the $\land$-clause are as follows:
    \[
    \infer[\Uparrow]{\Pi_{w , \Gamma(\phi \land \psi)}, (x \satisfies \phi \land \psi) : w \satisfies \chi}{\Pi_{w , \Gamma(\phi \land \psi)} ,  (x \satisfies \phi), (x \satisfies \psi) : w \satisfies \chi}
    \]
    Without loss of generality, by Lemma \ref{lem:packing}, it is continued by a packing. These reductions are captured in $\system{VBI}$ as $\lcl \land$.

    \item[$\lor$ -]  Reductions beginning with the $\lor$-clause are as follows:
    \[
    \infer[\Uparrow]{\Pi_{\Gamma(\phi \lor \psi),x}, (x \satisfies \phi \lor \psi) : w \satisfies \chi}{
        \infer[\lrn \with]{
    \Pi_{\Gamma(\phi \lor \psi),x}, (x \satisfies \phi) \parr (x \satisfies \psi) : w \satisfies \chi
    }{
    \Pi_{\Gamma(\phi \lor \psi),x}, (x \satisfies \phi)  : w \satisfies \chi 
    &
    \Pi_{\Gamma(\phi \lor \psi),x}, (x \satisfies \phi) : w \satisfies \chi
    }
    }
    \]
   Without loss of generality, by Lemma \ref{lem:packing}, each branch is continued by a packing. These reductions are captured in $\system{VBI}$ as $\lcl \lor$.

    \item[$\to$ -] Reductions beginning with the $\to$-clause are as follows:
    \[
    \infer[\Uparrow]{\Pi_{w , \Gamma(\Delta \fatsemi \phi \to \psi), x} , (x \satisfies \Delta), (x \satisfies \phi \to \psi) : w \satisfies \chi}{\Pi_{w , \Gamma(\Delta \fatsemi \phi \to \psi), x} , (x \satisfies \Delta), \forall y (x \preceq y \Rightarrow (y \satisfies \phi \Rightarrow y \satisfies \psi) : w \satisfies \chi}
    \]
    The only choice of instantiation that can terminate in a proof is to instantiate the quantified world-variable as $x$. At this point the resulting sub-formula can be decomposed or else the resolution could be permuted with the next resolution. Hence, the reduction is continued as follows:
    \[
    \infer[\Uparrow]{\Pi_{w , \Gamma(\Delta \fatsemi \phi \to \psi), x} , (x \satisfies \Delta), (x \satisfies \phi \to \psi) : w \satisfies \chi}{
        \infer[\lrn\Rightarrow]{ \Pi_{w , \Gamma(\Delta \fatsemi \phi \to \psi), x} , (x \satisfies \Delta), (x \preceq x \Rightarrow (x \satisfies \phi \Rightarrow x \satisfies \psi) : w \satisfies \chi}{
            \Pi_{w , \Gamma(\Delta \fatsemi \phi \to \psi), x}, (x \satisfies \Delta) : x \satisfies \phi
            &
            \Pi_{w , \Gamma(\Delta \fatsemi \phi \to \psi), x}, (x \satisfies \Delta), (x \satisfies \psi) : w \satisfies \chi
        }
    }
    \]
     Without loss of generality, by Lemma \ref{lem:packing}, each branch is continued by a packing. These reductions are captured in $\system{VBI}$ as $\lcl \to$.

    \item[$\top$ - ] There are two possible reduction patterns beginning with the $\top$-clause. First, one may have the following:
    \[
    \infer[\Uparrow]{\Pi_{w , \Gamma(\Delta \fatsemi \aunit),x }, (x \satisfies \Delta), (x \satisfies \aunit) : w \satisfies \chi}{\Pi_{w , \Gamma(\Delta \fatsemi \aunit),x }, (x \satisfies \Delta) : w \satisfies \chi}
    \]
     Without loss of generality, by Lemma \ref{lem:packing}, each branch is continued by a packing. These reductions are captured in $\system{VBI}$ as $\weak$.
    Second, one may have the following:
    \[
    \infer[\Uparrow]{\Pi_{w , \Gamma(\Delta),x}, (x \satisfies \Delta) : w \satisfies \chi}{\Pi_{w , \Gamma(\Delta),x}, (x \satisfies \Delta), (x \satisfies \aunit) : w \satisfies \chi}
    \]
    Without loss of generality, by Lemma \ref{lem:packing}, each branch is continued by a packing. These reductions are captured in $\system{VBI}$ as $\lcl{\top}$.

    \item[$\bot$ -] Reductions beginning with the $\bot$-clause are as follows:
    \[
    \infer[\Uparrow]{\Pi_{w , \Gamma(\bot), x}, x \satisfies \bot : w \satisfies \chi}{\Pi_{w , \Gamma(\bot), x}, (x = \pi) : w \satisfies \chi}
    \]
    If another resolution is made then the the two resolution could have been permuted, unless the resolution was with the absurdity law, in which case the reduction continued to yield the following:
    \[
    \infer[\Uparrow]{\Pi_{w , \Gamma(\bot), x}, (x \satisfies \bot) : w \satisfies \chi}{\Pi_{w , \Gamma(\bot), x}, (x \satisfies \phi): w \satisfies \chi}
    \]
      Without loss of generality, by Lemma \ref{lem:packing}, each branch is continued by a packing. These reductions are captured in $\system{VBI}$ as $\lcl{\bot}$.

\item[$*$ -] There are two possible reduction patterns beginning with the $\top$-clause. First, one may have the following:
\[
\infer[\Uparrow]{\Pi_{w , \Gamma(\phi * \psi), x}, x  \satisfies \phi * \psi) : w \satisfies \chi }{\Pi_{w , \Gamma(\phi * \psi), x}, R(x,u,v), u \satisfies \phi, v \satisfies \psi : w \satisfies \chi}
\]
 Without loss of generality, by Lemma \ref{lem:packing}, each branch is continued by a packing. These reductions are captured in $\system{VBI}$ as $\lcl{\ast}^\mathsf{1}$.
Second, one may have the following:
    \[
    \infer[\Uparrow]{\Pi_{w , \Gamma(\Delta \fatcomma \I), x}, x  \satisfies \Delta * \I) , x \satisfies \Delta \fatcomma \I : w \satisfies \chi}{\Pi_{w , \Gamma(\Delta \fatcomma \I), x}, x  \satisfies \Delta * \I) , R(x,x,e), x \satisfies \Delta, e \satisfies \I : w \satisfies \chi}
    \]
   Without loss of generality, by Lemma \ref{lem:worldindependence} and Lemma \ref{lem:packing}, this is continued to yield the following:
    \[
    \infer[\Uparrow]{\Pi_{w , \Gamma(\Delta \fatcomma \I), x}, x  \satisfies \Delta * \I) , x \satisfies \Delta \fatcomma \I : w \satisfies \chi}{ w \satisfies \Gamma(\Delta) : w \satisfies \chi}
    \]
These reductions are captured in $\system{VBI}$ as $\lcl{\ast}^\mathsf{2}$.

\item[$\wand$ -] Reductions beginning with the $\wand$-clause are as follows:
\[
\infer[\Uparrow]{\Pi_{w , \Gamma(\Delta \fatcomma \Delta' \fatcomma \phi \wand \psi),x}, \Sigma, \Psi y \satisfies \Delta, u \satisfies \Delta', v \satisfies \phi \wand \psi : w \satisfies \chi}{ \Pi_{w , \Gamma(\Delta \fatcomma \Delta' \fatcomma \phi \wand \psi),x}, y \satisfies \Delta, u \satisfies \Delta', \Sigma : w \satisfies \chi}
\]
where $\Sigma := \{ R(x,y,z), R(z,u,v) \}$ and 
\[
\Psi := \forall a,b (R(b,v,a) \Rightarrow (a \satisfies \phi \Rightarrow b \satisfies \psi)) \}
\]
There is only one choice of instantiation for $a$ and $b$ that can terminate in a proof, which yields the the following reduction pattern:
\[
\infer[\Uparrow]{\Pi_{w , \Gamma(\Delta \fatcomma \Delta' \fatcomma \phi \wand \psi),x}, \Sigma, (y \satisfies \Delta), (u \satisfies \Delta'), (v \satisfies \phi \wand \psi) : w \satisfies \chi}{ \Pi_{w , \Gamma(\Delta \fatcomma \Delta' \fatcomma \phi \wand \psi),x}, \Sigma, (y \satisfies \Delta), (u \satisfies \Delta'), \Psi' : w \satisfies \chi }
\]
where 
\[
\Psi' :=  R(x,y,z), R(z,u,v), R(z,v,u) \Rightarrow (u \satisfies \phi \Rightarrow z \satisfies \psi)
\]
The sub-formula is immediately decomposed or else this resolution and the next could have been permuted. Hence, the reduction continues to yield sub-goals 
\[
\Pi_{w , \Gamma(\Delta \fatcomma \Delta' \fatcomma \phi \wand \psi),x}, R(x,y,z), R(z,u,v), (y \satisfies \Delta), (u \satisfies \Delta') : u \satisfies \phi
\]
and 
\[\Pi_{w , \Gamma(\Delta \fatcomma \Delta' \fatcomma \phi \wand \psi),x}, R(x,y,z), R(z,u,v), (y \satisfies \Delta), (u \satisfies \Delta') : w \satisfies \chi
\]
Without loss of generality, by Lemma \ref{lem:packing}, each branch is continued by a packing. These reductions are captured in $\system{VBI}$ as $\lcl{\wand}$.
\end{itemize}
    
    \textsc{Case Analysis on the Frame Laws.} The frame laws are unitality of $e$, commutative of $R$, associativity of $R$, persistence of $\prec$, dominance of $\prec$ and the absurdity of $\pi$. Except for the first three frame laws, the clauses can only be used after a particular resolution has occurred that introduces the appropriate atom, and these cases have been considered above; for exmple, persistence requires $w \prec u$ to appear in the context, which can only happen if $w \satisfies \phi \to \psi$ was resolved in the extract. We consider here the remaining cases. 
   
    \begin{itemize}
\item[Unit. - ] Reductions beginning with unitality are as follows:
\[
\infer[\Uparrow]{ \Sigma_{\Gamma(\Delta),x }, x \satisfies \Delta : w \satisfies \chi}{\Sigma_{\Gamma(\Delta),x } x \satisfies \Delta, R(x,x,e) : w \satisfies \chi}
\]
Without loss of generality, by Lemma \ref{lem:packing}, the reduction is continued with a packing. But, this simply yields the original sequent. Otherwise, it may be that a weakening  on $x \satisfies \Delta$ and $R(x,x,e)$  is performed and then the packing occurs. These reductions are captured in $\system{VBI}$ as $\lcl{\I}$.

\item[Comm. -]  Reductions beginning with commutativity of $R$ are as follows:
\[
\infer[\Uparrow]{\Pi_{\Gamma(\Delta \fatcomma \Delta'),x}, R(x,u,v), u \satisfies \Delta, v \satisfies \Delta' : w \satisfies \chi}{ \Pi_{\Gamma(\Delta \fatcomma \Delta'),x}, R(x,v,u), u \satisfies \Delta, v \satisfies \Delta' : w \satisfies \chi}
\]
Without loss of generality, by Lemma \ref{lem:packing}, this is continued by a packing. These reductions are captured in $\system{VBI}$ as $\lcl{\comm}$.

\item[Asso. - ] Reductions beginning with associativity of $R$ are as follows:
\[
\infer[\Uparrow]{\Pi_{\Gamma(\Delta\fatcomma(\Delta'\fatcomma \Delta''))}, R(x,y,z), y \satisfies \Delta, R(z,u,v), u \satisfies \Delta', v\satisfies \Delta'' : w \satisfies \chi }{\Pi_{\Gamma(\Delta\fatcomma(\Delta'\fatcomma \Delta''))}, R(x,a,v), y \satisfies \Delta, R(a,z,u), u \satisfies \Delta', v\satisfies \Delta'' : w \satisfies \chi}
\]
Without loss of generality, by Lemma \ref{lem:packing}, this is continued by a packing. These reductions are captured in $\system{VBI}$ as $\lcl{\asso}$.
    \end{itemize}

 \textsc{Case Analysis of the Structural Rules.}  There are instances of the structural rules that do not result in a change of sequent after packing; for example, permuting meta-atoms that are not assertions is without effect. In the following we restrict attention to the cases where the use of the structural rule affects the packing of the sequent.

\begin{itemize}
\item[\rn{exch} - ] Reductions beginning with an exchange are as follows:
\[
\infer[\Uparrow]{\Pi_{\Gamma(\Delta\fatsemi\Delta'), x}, x  \satisfies \Delta, x \satisfies \Delta': w \satisfies \chi}{\Pi_{\Gamma(\Delta\fatsemi\Delta'), x}, (x  \satisfies \Delta'), (x \satisfies \Delta'): w \satisfies \chi}
\]
Without loss of generality, by Lemma \ref{lem:packing}, this is continued by a packing. These reductions are captured in $\system{VBI}$ as $\exch_\mathsf{1}$.

\item[\rn{cont} - ] Reductions beginning with contractions are as follows:
\[
\infer[\Uparrow]{ \Pi_{\Gamma(\Delta), x}, x  \satisfies \Delta : w \satisfies \chi}{\Pi_{\Gamma(\Delta), x}, x \satisfies \Delta, x \satisfies \Delta : w \satisfies \chi}
\]
Without loss of generality, by Lemma \ref{lem:packing}, this is continued by a packing. These reductions are captured in $\system{VBI}$ as $\cont$.

\item[\rn{weak} -] Reductions beginning with weakening are as follows:
\[
\infer[\Uparrow]{\Pi_{\Gamma(\Delta \fatsemi \Delta'), x}, x  \satisfies \Delta x , x \satisfies \Delta': w \satisfies \chi}{\Pi_{\Gamma(\Delta), x}, x \satisfies \Delta : w \satisfies \chi}
\]
Without loss of generality, by Lemma \ref{lem:packing}, this is continued by a packing. These reductions are captured in $\system{VBI}$ as $\cont$.
\end{itemize}
This completes the proof.

\end{proof}

It useful to make precise how to read BI content from a BVS, which is understood as its \emph{state}.

\begin{Definition}[State]
    Let $\langle \Sigma_{\logic{BI}}, (w \satisfies \Gamma) : (w \satisfies \phi) \rangle$ be a BVS, its state is given by the following: $\Gamma : \phi $.
\end{Definition}

Hence each rule in \system{VBI} can be understood directly as a rule about sates. Since it is sound and complete for BVSs, it is then a calculus of entailment; for example, the $\rcl{\land}$-rule captures the following action on states:
\[
\infer{\Gamma \models \phi \land \psi}{\Gamma \models \phi & \Gamma \models \psi}
\]
We return to this observation in Section \ref{sec:snc}, below.

To complete this section, we return to an earlier claim made in Section \ref{subsec:encoding}: the $\I$-clause of satisfaction may be dropped without loss of generality. The class of frames encoded by $\Sigma_{\logic{BI}}$ is actually more general than the class of BI-frames, but the same BVSs hold and it suffices to demonstrate the approach to completeness. Let $\Phi_{\I} := \forall x(x\satisfies I \iff e \preceq x)$, we claim $\Sigma_{\logic{BI}}, (w \satisfies \Gamma) \proves (w  \satisfies \I)$ iff $\Sigma_{\logic{BI}}, \Phi_I, (w \satisfies \Gamma) \proves (w  \satisfies \I)$.  This follows from the fact that $\Sigma_{\logic{BI}}, \Phi_{\I}, (w \satisfies \Gamma) \vdash (w \satisfies \I)$ iff $\Gamma \proves \I$ which is what we would expect for a model of BI, in which case we already have $\Sigma_{\logic{BI}}, (w \satisfies \Gamma) \proves (w  \satisfies \I)$.  In short, the $\I$-clause can be removed from BI-frames without loss of generality when encoding in the meta-logic because the sequent calculus rule governing $\I$ requires that $\I$ is already part of the context --- indeed, this is the same reason satisfaction of atoms could be eliminated with impunity in Section \ref{subsec:encoding}. The other atomic rules, such as $\top$ and $\bot$ do not satisfy this condition, therefore their clauses are required.

\section{Soundness and Completeness} \label{sec:snc}

In Section \ref{sec:bi} we provided the sequent calculus \system{sLBI} for BI-provability; and, in Section \ref{sec:modeltheoryquaclassicaltheory}  we provided the sequent calculus $\system{VBI}$ for BI-validity. In Section \ref{sec:proofsearch}, we discussed the reductive reading of sequent calculi from which a notion of computation is inherited --- viz. proof-search. This notion of computation may be regarded as a \emph{transition system} on sequents. In this section we study the equivalence of the transition system for provability and the transition system for validity. 

There are many notions of equivalence between transition system. Here we are concerned with the subset that pertain to \emph{behavioural} equivalence; that is, how transitions in one system may be understood as transitions in the other. The finest notion of behavioural equivalence is \emph{bisimulation}.

\begin{Definition}[Bisimulation of Transition Systems]
   Let $\mathfrak{T}_1:=\langle \mathbb{S}_1, \rightsquigarrow_1 \rangle$ and  $\mathfrak{T}_n:=\langle \mathbb{S}_2, \rightsquigarrow_2 \rangle$ be transition systems. A relation $R \subseteq \mathbb{S}_1 \times \mathbb{S}_2$ is a bisimulation between $\mathfrak{T}_1$ and $\mathfrak{T}_2$ iff, for any $p \in \mathbb{S}_1$ and $q \in \mathbb{S}_2$ such that $pRq$,
   \begin{itemize}
       \item if there is $p' \in \mathbb{S}_1$ such that $p \rightsquigarrow_1 p'$, then there is $q \in \mathbb{S}_2$ such that $q \rightsquigarrow_2 q'$
       \item  if there is $q' \in \mathbb{S}_2$ such that $q \rightsquigarrow_2 q'$, then there is $p \in \mathbb{S}_1$ such that $p \rightsquigarrow_1 p'$.
   \end{itemize}
   The transition systems are bisimilar iff there is a bisimulation between them.
\end{Definition}

As in Section \ref{sec:proofsearch}, a proof system cannonically determines a transition system. 

\begin{Theorem}\label{thm:bisimilar}
    Provability is bisimilar to validity,
    \[
\Gamma \proves \phi \text{ bfs } \Gamma \models \phi
\]
\end{Theorem}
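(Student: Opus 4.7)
The plan is to exhibit an explicit bisimulation between the two transition systems induced by viewing $\system{sLBI}$ and $\system{VBI}$ as reduction operators in the sense of Section~\ref{sec:proofsearch}. Concretely, let $\mathfrak{T}_{\proves}$ be the transition system $\langle \mathbb{S}_{\text{BI}}, \redop[sLBI] \rangle$ whose states are BI-sequents and whose transitions are the sets of premiss-sequents determined by $\system{sLBI}$-rules, and let $\mathfrak{T}_{\models}$ be the analogous system $\langle \mathbb{S}_{\text{BVS}}, \redop[VBI]\rangle$ of BVSs and $\system{VBI}$-reductions. The candidate bisimulation is the state-relation $R$ from Section~\ref{subsec:reasoning}:
\[
(\Gamma : \phi) \; R \; (w \satisfies \Gamma : w \satisfies \phi) \quad \text{for any world-variable } w.
\]

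First, I would confirm that $R$ really gives a bisimulation of transition systems by direct inspection of Figure~\ref{fig:slbi} and Figure~\ref{fig:vbi}: the two calculi were engineered to be in rule-for-rule correspondence, with each $\system{sLBI}$-rule $\rn{r}$ matched by a rule $\cl{\rn{r}}$ of $\system{VBI}$ acting on the state-components in exactly the same way. Thus for the forward direction, if $\Gamma : \phi \rightsquigarrow_\proves \{\Gamma_1:\phi_1,\dots,\Gamma_n:\phi_n\}$ via $\rn{r}$, then the corresponding $\cl{\rn{r}}$ furnishes a transition $w\satisfies\Gamma : w\satisfies\phi \rightsquigarrow_\models \{w\satisfies\Gamma_i : w\satisfies\phi_i\}_{i\leq n}$ whose children stand in $R$ with those of the original reduction; the backwards direction is symmetric. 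This part is a routine table-check and can be presented by a single case analysis, invoking the pairing that has been made explicit throughout Section~\ref{subsec:reasoning}.

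Second, one needs to check that the transition systems are correctly identified with the provability and validity relations. For $\mathfrak{T}_{\proves}$, Lemma~\ref{lem:reduction} together with Lemma~\ref{lem:slbi} ensures that $\Gamma \proves \phi$ iff $\Gamma : \phi$ admits a successful reduction sequence in $\redop[sLBI]$. For $\mathfrak{T}_{\models}$, Theorem~\ref{thm:vbi} gives that a BVS is valid iff it admits a $\system{VBI}$-proof; combined with Lemma~\ref{lem:reduction} applied to $\system{VBI}$, this identifies the $\rightsquigarrow_\models$-reachable successful reductions with validity. Since $R$ preserves the axioms (an $\system{sLBI}$ $\rn{id}$-instance at $\Gamma\fatsemi\phi:\phi$ corresponds to a $\system{VBI}$ $\rn{id}$-instance at $w\satisfies\Gamma\fatsemi\phi:w\satisfies\phi$, and similarly for $\rcl{\top}$ and $\lcl{\bot}$), the successful reductions are in bijection across $R$.

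The main obstacle, and where care is required, is structural: the transition systems are not ordinary LTSs but branching ones whose transitions produce \emph{sets} of successor states, so the bisimulation condition has to be read multi-target-wise (a transition in one system must be matched by a transition in the other whose children stand pointwise in $R$). I would spell out this branching refinement of Definition~\ref{def:bisimulation} explicitly before the case analysis. A minor auxiliary point is that $R$ is independent of the choice of eigenworld $w$: any two BVSs differing only by $\alpha$-renaming of $w$ are interderivable in $\system{VBI}$ and thus bisimilar, so $R$ is well-defined as a relation on states (as defined in Section~\ref{subsec:reasoning}). With these remarks in place, the bisimulation witnesses the claimed behavioural equivalence, from which Corollary~\ref{cor:snc} (soundness and completeness) follows by observing that $R$-related states admit successful reductions iff their partners do.
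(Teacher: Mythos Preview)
Your proposal is correct and follows essentially the same approach as the paper: define the state-relation pairing $(\Gamma:\phi)$ with $(w\satisfies\Gamma:w\satisfies\phi)$ and verify bisimulation by the rule-for-rule symmetry between Figure~\ref{fig:slbi} and Figure~\ref{fig:vbi}. The paper's own proof is in fact terser than yours---it simply defines $\sim$ and appeals to the visible symmetry of the two rule sets---so your additional care about the branching (multi-target) reading of the bisimulation condition, the $\alpha$-independence of the eigenworld, and the explicit linking back to $\proves$ and $\models$ via Lemma~\ref{lem:slbi}, Lemma~\ref{lem:reduction}, and Theorem~\ref{thm:vbi} is welcome elaboration rather than a different route.
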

\begin{proof}
Let $\sim$ be the least relation satisfying the following:
\[
\{(\Gamma : \phi) \} \sim \{ \Sigma_{\logic{BI}}, w \satisfies \Gamma: w \satisfies \phi \}
\]
By observing the symmetry of the rules in Figure \ref{fig:slbi} and Figure \ref{fig:vbi}, we see that $\sim$ is a bisimulation.
\end{proof}

By unpacking the soundness proof of Theorem \ref{thm:vbi} within the proof of Theorem \ref{thm:bisimilar}, one recovers the usual inductive proof of soundness --- in the above style, the proof is a s\emph{simulation}. The contribution of this paper is to demonstrate an analogous technique for proving completeness. In this case, unpacking the completeness proof of Theorem \ref{thm:vbi} within the proof of Theorem \ref{thm:bisimilar} one recovers a \emph{co-inductive} proof of completeness. This highlights the duality between soundness and completeness. 

\begin{Corollary} \label{cor:snc}
Provability is extensionally equivalent to validity,
\[
\Gamma \proves \phi \qquad \text{ iff } \qquad \Gamma \models \phi
\]
\end{Corollary}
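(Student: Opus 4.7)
The plan is to chain together the three results that have already been established in the preceding sections, treating the corollary as essentially a bookkeeping exercise on top of Theorem~\ref{thm:bisimilar}. Specifically, I would observe that the corollary is nothing more than the functional-extensional shadow of the bisimulation; where bisimulation equates the \emph{ways} of proving a sequent, the corollary equates the mere \emph{existence} of a proof.

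First, I would rewrite both sides. By Lemma~\ref{lem:slbi}, the judgment $\Gamma \proves \phi$ is equivalent to the existence of an \system{sLBI}-proof of $\Gamma : \phi$, and by Lemma~\ref{lem:reduction} this is in turn equivalent to the existence of a successful reduction in $\redop[sLBI](\Gamma : \phi)$. On the semantic side, by Theorem~\ref{thm:vbi} the judgment $\Gamma \models \phi$ is equivalent to the existence of a $\system{VBI}$-proof of the BVS $\Sigma_{\logic{BI}}, w \satisfies \Gamma : w \satisfies \phi$, and again by Lemma~\ref{lem:reduction} to a successful reduction in the corresponding proof-search space. So it suffices to show that a successful reduction exists in $\redop[sLBI](\Gamma : \phi)$ iff one exists in $\redop[VBI](\Sigma_{\logic{BI}}, w \satisfies \Gamma : w \satisfies \phi)$.

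Second, I would invoke Theorem~\ref{thm:bisimilar}. The bisimulation $\sim$ there pairs each sequent $\Gamma : \phi$ with its corresponding BVS $\Sigma_{\logic{BI}}, w \satisfies \Gamma : w \satisfies \phi$, and matches reduction steps in one system with reduction steps in the other in both directions. The forward direction of the corollary then follows by induction on the height of the \system{sLBI}-proof: translate each rule instance into the correspondent \system{VBI}-rule under $\sim$, noting that the terminal axioms $\ax$ in \system{sLBI} are paired with the terminal axioms $\rn{id}$ in \system{VBI}, so successful reductions go to successful reductions. The reverse direction is symmetric: by co-induction on a successful reduction in \system{VBI}, the bisimulation produces a successful reduction in \system{sLBI}.

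The only step requiring genuine care, rather than routine translation, is verifying that the bisimulation is indeed successful-reduction preserving, not merely step preserving: one has to check that $\square$-leaves on one side correspond to $\square$-leaves on the other under $\sim$, and that any branching \emph{and}-node in an \system{sLBI}-reduction gives rise to a branching \emph{and}-node in the paired \system{VBI}-reduction of the same arity. Both points are immediate from the point-for-point agreement between the rules of Figure~\ref{fig:slbi} and Figure~\ref{fig:vbi}, which is exactly what Theorem~\ref{thm:bisimilar} already records. I do not expect any real obstacle here; the content has all been discharged in Theorem~\ref{thm:vbi} and Theorem~\ref{thm:bisimilar}, and the corollary is a direct consequence.
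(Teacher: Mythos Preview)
Your proposal is correct and takes essentially the same approach as the paper, which dispatches the corollary in one line by citing the bisimulation theorem together with the equivalence of \system{LBI} and \system{sLBI}. You have simply unpacked that one line into the explicit chain of reductions (Lemma~\ref{lem:slbi}, Lemma~\ref{lem:reduction}, Theorem~\ref{thm:vbi}, Theorem~\ref{thm:bisimilar}) and spelled out why the bisimulation carries successful reductions to successful reductions; this is exactly the content the paper leaves implicit.
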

\begin{proof}
Follows immediately from Lemma \ref{lem:lbiprime} and Theorem \ref{thm:bisimilar}.
\end{proof}

\section{Relationship to Other Semantics}  \label{sec:previous}

The model theory of BI has been a subject of study for a while, and in this section we survey some earlier results such as the monoidal semantics dicussed by O'Hearn and Pym \cite{OHearn1999}, the Grothendieck topological semantics by Pym et al. \cite{Pym2004bi}, and the uniform approach of Docherty and Pym~\cite{Docherty2018a,DP2018,Docherty2019},  the relational semantics of Galmiche~et ~al.~\cite{Galmiche2005}. 

Throughout we use the notation of the meta-logic as \emph{bona fide} notation of the ambient logic; for example, we will use $\Rightarrow$ to denote contingency, $\with$ for conjunction, and $\parr$ for disjunction, without further reference.

\subsection{Preordered Commutative Monoids}

The relation $R$ in BI-frames can seem a little bit obscure, but a particularly simple way of defining it is through a monoidal product that takes the notion of decomposition of a state literally.

\begin{Definition}[Preordered Commutative Monoid]
 A PCM is a structure $\model = \langle \uni, \preceq, \circ, e \rangle$ in which $\preceq$ is a preorder, $\circ$ is a commutative monoidal product on $\uni$ with unit $e$;  that is, a preordered commutative monoid.
\end{Definition}

The algebraic reading of BI with ordered monoids is entirely coherent with Gabbay's theory of fibration, which determines the bifunctoriality condition:
\[
m \preceq m' \with  n \preceq n' \Rightarrow m \circ m' \preceq n \circ n'
\]

Ordered monoids are a particular case of BI-frames that arise by inheriting the preorder --- viz. defining $R(w,u,v) \iff w = u \circ v$.

\begin{Definition}[Monoid Model] \label{def:resourcemodel}
    A monoid model is a pair $\langle \model, \ll - \rr \rangle$ in which $\model := \langle \mathbb{M},  \preceq, \circ, e \rangle$ is a PCM and $\ll - \rr: \mathbb{M} \to \Atoms$ is an interpretation that is bifunctorial and atomically persistent.
\end{Definition}

Let $\mathcal{M}$ be the set of monoid algebras, then define a restriction of the semantics so far studied as follows:
\[
\Gamma \models[M] \phi \iff \forall \model[R] \in \mathcal{M} \, \forall u \in \uni \, (w \satisfies \Gamma \Rightarrow w \satisfies \phi)
\]

The soundness of monoid semantics (i.e., $\Gamma \proves \phi \Rightarrow \Gamma \models[M] \phi$) has been known for a while (see, for example, Pym \cite{Pym2002}) and is easy to prove using familiar methods, but completeness has remained open. Under the provided encoding, the monoidal semantics is contained in the semantics of this paper.

\subsection{The Consistency Semantics}

Traditionally only consistent formulas are taken to have meaning; that is, one usually considers a variant of satisfaction that proscribes the satisfaction of absurdity (i.e., $\bot$). Let $\satisfies^\top$ be the relation determined by the clauses in Figure \ref{fig:sat:BI} replacing the $\bot$-clause with the following:
\[
w \satisfies^\top \bot \quad \text{ iff } \quad  w \not \in \uni
\]

As before, the satisfaction relation determines a semantics:
\[
\Gamma \models^\top \phi \iff \forall \model[R] \in \mathcal{M} \, \forall u \in \uni \, (w \satisfies^\top \Gamma \Rightarrow w \satisfies^\top \phi)
\]

A positive result regarding the completeness with respect to this specialization was previously known:

\begin{Theorem}[Pym et al. \cite{Pym2004bi}, Pym \cite{Pym2002}]
If $\Gamma \models^\top \phi$ without $\bot$, then $\Gamma \proves \phi$ without $\bot$.
\end{Theorem}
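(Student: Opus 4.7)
The plan is to prove the contrapositive by a term-style Lindenbaum construction, in the spirit of Pym~\cite{Pym2002}. Assuming $\Gamma \not\proves \phi$ with $\Gamma$ and $\phi$ both $\bot$-free, I would build a PCM $\model \in \mathcal{M}$ containing a world $w$ such that $w \satisfies^\top \Gamma$ yet $w \not\satisfies^\top \phi$. Note that Corollary~\ref{cor:snc} of this paper gives only a BI-frame counter-model, and BI-frames contain an absurd element $\pi$ that is incompatible with the consistency semantics, so the general completeness theorem cannot be invoked directly; a bespoke construction yielding a PCM is required.

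First, I would define the carrier $\mathbb{M}$ to be prime, $\bot$-free bunch-theories, i.e.\ deductively closed sets of sequents $\Delta \proves \chi$ closed under the disjunction property. The monoidal product $[\Delta] \circ [\Delta'] := [\Delta \fatcomma \Delta']$ descends to the quotient of $\B$ by provable-equivalence; the unit is $e := [\munit]$; the preorder is $[\Delta] \preceq [\Delta']$ iff $\Delta' \proves \lfloor \Delta \rfloor$; and the interpretation sets $\rm p \in \llbracket [\Delta] \rrbracket$ iff $\Delta \proves \rm p$. Commutativity, associativity and unitality of $\circ$ descend from coherent equivalence on $\B$; bifunctoriality and atomic persistence follow from weakening-admissibility and cut (Brotherston~\cite{Brotherston2012}); and the $\bot$-freeness of all formulas considered ensures the construction stays in the consistency semantics.

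Second, I would prove a truth lemma by induction on $\bot$-free $\phi$: $[\Delta] \satisfies^\top \phi$ iff $\Delta \proves \phi$. The additive cases ($\land, \to, \top$) use $\lrn \land, \rrn \land, \lrn \to, \rrn \to$ together with Lemma~\ref{lem:taut} and cut; the multiplicative cases ($*, \wand, \I$) reduce via the definition of $\circ$ and use $\lrn *, \rrn *, \lrn \wand, \rrn \wand$; the disjunction case is exactly where primeness is used, converting $[\Delta] \satisfies^\top \phi \lor \psi$ into a choice between $[\Delta] \satisfies^\top \phi$ and $[\Delta] \satisfies^\top \psi$.

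Third, I would invoke a Lindenbaum-style lemma: any $\bot$-free pair $\Delta \not\proves \chi$ extends to a prime $\bot$-free theory containing $\Delta$ but not $\chi$, by enumerating $\bot$-free formulas and saturating at each stage while preserving non-derivability of $\chi$. Applied to our assumed $\Gamma \not\proves \phi$ this yields the required world $w$, and the truth lemma then supplies the counter-model witnessing $\Gamma \not\models^\top \phi$. The main obstacle will be the Lindenbaum step itself: saturation for bunched logic is substantially more delicate than for IL because it must respect both context-formers $\fatsemi$ and $\fatcomma$ simultaneously, and the Zorn-style argument must be arranged so that the multiplicative quotient remains well-defined on prime classes (which in general fails for Kripke's clause on total deterministic monoids --- the very reason the restriction to the $\bot$-free fragment is needed, and why the alternative approaches of Beth's clause or partial/non-deterministic products, discussed elsewhere in Section~\ref{sec:previous}, are required for the full logic).
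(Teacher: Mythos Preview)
Your proposal is essentially the same approach the paper records: the paper does not give its own proof but simply attributes the result to Pym et al.~\cite{Pym2004bi} and Pym~\cite{Pym2002}, noting that it ``proceeds by the traditional method of a term-model construction,'' which is precisely the Lindenbaum-style PCM construction you sketch. You have also correctly isolated the delicate point---well-definedness of the monoidal product on prime classes---that confines this construction to the $\bot$-free fragment and motivates the alternative routes (Beth disjunction, partial/non-deterministic products) discussed later in Section~\ref{sec:previous}.
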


The proof proceeds by the traditional method of a term-model construction. A stronger statement with respect to this satisfaction relation cannot be made:

\begin{Lemma}[Pym et al. \cite{Pym2004bi}, Pym \cite{Pym2002}] \label{lem:consistentmodelsincomplete}
 Let $\phi$ and $\psi$ be valid and be such that $\phi *\psi$ are valid, then define the following:
 \[
 \Gamma := (\phi \wand \bot) \to \bot \fatsemi (\psi \wand \bot) \to \bot \qquad \chi := ((\phi * \psi) \wand \bot) \to \bot
 \]
 For any instance it is the case that $\Gamma \models^\top \chi$, but not the case that $\Gamma \proves \chi$.
\end{Lemma}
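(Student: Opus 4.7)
The plan is to split the claim into its two independent halves. The validity statement $\Gamma \models^\top \chi$ is near-vacuous: in any consistent model no world satisfies $\bot$, so for any $\alpha$ the formula $\alpha \wand \bot$ is satisfied at $v$ iff there is no pair $(u,w')$ with $R(w', v, u)$ and $u \satisfies^\top \alpha$. When $\alpha$ is $\models^\top$-valid, unitality $R(v,v,e)$ together with $e \satisfies^\top \alpha$ supplies such a witness, so $\alpha \wand \bot$ is nowhere satisfied, and hence $(\alpha \wand \bot) \to \bot$ holds vacuously at every world. Instantiating this observation with each of $\phi$, $\psi$, and $\phi * \psi$ shows that both conjuncts of $\Gamma$ and the extract $\chi$ are satisfied at every world of every consistent model, so $\Gamma \models^\top \chi$ follows at once.

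For the non-provability statement, the strategy will be to appeal to the soundness of $\proves$ with respect to the richer class of general BI-frames, furnished by Corollary~\ref{cor:snc}, and exhibit a general BI-model in which $\Gamma \not\models \chi$. This is exactly what the consistent semantics cannot expose: it is the absurd world $\pi$, required to satisfy every formula, that lets $(\phi*\psi) \wand \bot$ be satisfied at some non-$\pi$ world $v$ by arranging that every decomposition $R(w',v,u)$ against a $(\phi*\psi)$-witness lands at $\pi$. The construction must then be designed so that no analogous collapse occurs for $\phi \wand \bot$ or $\psi \wand \bot$ above the underlying $w$, thereby preserving $w \satisfies \Gamma$. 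This asymmetric behaviour is what separates the multiplicative conjunction from its component formulas and drives the gap between $\models^\top$ and $\proves$.

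The hard part will therefore be the design of this countermodel. A natural target is a small frame with distinguished worlds $w \preceq v$ and absurd $\pi$, with $R$ arranged so that every decomposition involving $v$ against a $(\phi*\psi)$-witness routes through $\pi$, whilst the $\phi$- and $\psi$-witnesses above $w$ individually stay clear of $\pi$. Constructions of exactly this flavour are given by Pym~\cite{Pym2002} and Pym et al.~\cite{Pym2004bi}; the plan is to adapt one such frame, recast in the relational vocabulary of Section~\ref{subsec:modeltheory}, verify unitality, commutativity, associativity, persistence, and bifunctoriality, and then invoke Corollary~\ref{cor:snc} to convert $\Gamma \not\models \chi$ into the desired $\Gamma \not\proves \chi$. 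The frame-law verification is routine once the construction is fixed; the ingenuity is concentrated entirely in the asymmetric placement of the multiplicative witnesses.
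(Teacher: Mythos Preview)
Your treatment of the validity half is essentially the paper's: both show that $x \satisfies^\top (\theta\wand\bot)\to\bot$ holds whenever $\theta$ is satisfiable, and then use the hypothesis on $\phi,\psi,\phi*\psi$ to conclude that every world satisfies both $\Gamma$ and $\chi$. You phrase this via unitality and the witness $e$; the paper states the biconditional ``$x \satisfies^\top (\theta\wand\bot)\to\bot$ iff some $y$ satisfies $\theta$'' (tacitly using totality of $\circ$), but for the direction needed the argument is the same.

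For the non-provability half you diverge from the paper. The paper disposes of $\Gamma\not\proves\chi$ in one line --- ``one can check by proof-search'' --- a purely syntactic observation. You instead propose to build a countermodel in the general BI-frame semantics of Section~\ref{subsec:modeltheory} and invoke the completeness direction of Corollary~\ref{cor:snc}. That is a legitimate alternative strategy in principle, and more explanatory than a bare proof-search claim; the cost is importing the paper's main theorem into what is meant to be a self-contained remark about the earlier literature.

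There is, however, a genuine obstacle in your plan. Your own unitality argument for the validity half transfers verbatim to the general semantics with $\pi$: if $\phi*\psi$ is valid then $e\satisfies\phi*\psi$, and $R(v,v,e)$ forces any $v$ with $v\satisfies(\phi*\psi)\wand\bot$ to satisfy $v\satisfies\bot$, i.e.\ $v=\pi$. Hence $\chi$ holds at \emph{every} world of \emph{every} general BI-model, and no countermodel to $\Gamma\models\chi$ can exist under the lemma's stated hypothesis. Your asymmetric-routing idea --- sending $(\phi*\psi)$-witnesses against $v$ through $\pi$ while keeping $\phi$- and $\psi$-witnesses clear --- cannot be realised when every world, including $e$, is already a $(\phi*\psi)$-witness. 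The countermodel route only works if $\phi,\psi$ are read schematically as propositional letters whose interpretation you control; under that reading both the paper's proof-search and a countermodel go through, but then the validity hypothesis must be understood as constraining only the semantic side, not as an instantiation of the sequent whose non-provability is claimed.
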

\begin{proof}
One can check by proof-search that $\Gamma \proves \chi$ is not true, so it only remains to witness $\Gamma \models^\top \chi$. It is routine to verify that $x \satisfies^\top (\theta \wand \bot) \to \bot$ if and only if there is $y$ such that $y \satisfies^\top \theta$. Since $\phi$ and $\psi$ and $\phi*\psi$ are valid, any world suffices to witness that for an arbitrary $w$ it is the case that $w \satisfies \Gamma \with w \satisfies \chi$, which is stronger than $w \satisfies \Gamma  \Rightarrow w \satisfies \chi$.
\end{proof}
\begin{Lemma}
   Let $\phi := \top$ and $\psi := \top \wand \top$. Formulas $\phi$, $\psi$, and $\phi * \psi$ are all valid in BI.
\end{Lemma}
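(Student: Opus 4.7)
The plan is to exhibit $\system{LBI}$-proofs of $\top$, $\top \wand \top$, and $\top * (\top \wand \top)$, after which semantic validity (in any of the senses considered in the paper) follows from the soundness direction of Corollary \ref{cor:snc}. Since each of these formulas is built only from $\top$, $*$, and $\wand$, only the right-introduction rules for these connectives, together with the $\rrn \top$ axiom and coherent equivalence, are needed.

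First I would prove $\proves \top$. This is immediate: the $\rrn \top$ axiom schema $\Gamma \proves \top$ holds for every bunch $\Gamma$, so in particular $\aunit \proves \top$ and $\munit \proves \top$. Next, to prove $\proves \top \wand \top$, apply $\rrn \wand$, which reduces the goal $\munit \proves \top \wand \top$ to $\munit \fatcomma \top \proves \top$; this premiss is again an instance of $\rrn \top$. One may illustrate the derivation as
\begin{prooftree}
\AxiomC{$\square$}
\RightLabel{$\rrn \top$}
\UnaryInfC{$\munit \fatcomma \top \proves \top$}
\RightLabel{$\rrn \wand$}
\UnaryInfC{$\munit \proves \top \wand \top$}
\end{prooftree}

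Finally, for $\proves \top * (\top \wand \top)$, I would use the $\rrn *$ rule from \system{LBI}. Pairing the derivations just obtained, one has $\munit \proves \top$ and $\munit \proves \top \wand \top$, from which $\rrn *$ yields $\munit \fatcomma \munit \proves \top * (\top \wand \top)$. By coherent equivalence $\munit \fatcomma \munit \equiv \munit$, and hence (using $\rn{exch}_{(\munit \fatcomma \munit \equiv \munit)}$, i.e., Lemma \ref{lem:lbiprime}) one obtains $\munit \proves \top * (\top \wand \top)$, as required.

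There is no real obstacle here: the argument is entirely routine, which is precisely the point. The formulas $\top$, $\top \wand \top$, and $\top * (\top \wand \top)$ are $\bot$-free and trivially provable using only the right rules for their connectives, and therefore by soundness they are valid in every model of BI. This then provides the concrete instance needed to invoke Lemma \ref{lem:consistentmodelsincomplete} and exhibit the failure of completeness for the $\satisfies^\top$-semantics.
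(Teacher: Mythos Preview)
The paper states this lemma without proof, treating it as evident; your approach of exhibiting explicit $\system{LBI}$-derivations and then invoking soundness is correct and supplies the routine details.

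Two small points of care. First, in $\system{LBI}$ (Figure~\ref{fig:lbi}) the axiom $\rrn\top$ is only $\aunit \proves \top$; the schema $\Gamma \proves \top$ that you use is a rule of $\system{sLBI}$ (Figure~\ref{fig:slbi}), or a derived fact in $\system{LBI}$ via weakening and the unit law for $\fatsemi$. Either way your derivations stand. Second, and slightly more substantive: to obtain validity in the sense needed by Lemma~\ref{lem:consistentmodelsincomplete} (satisfaction at \emph{every} world), you should derive the formulas from $\aunit$ rather than $\munit$, since $w \satisfies \lfloor\aunit\rfloor = \top$ holds for all $w$, whereas $w \satisfies \lfloor\munit\rfloor = \I$ only when $e \preceq w$. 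Your derivations adapt without change: $\aunit \proves \top\wand\top$ reduces via $\rrn\wand$ to $\aunit\fatcomma\top \proves \top$, and since $\aunit \equiv \aunit\fatcomma\munit$, applying $\rrn *$ to $\aunit \proves \top$ and $\munit \proves \top\wand\top$ yields $\aunit \proves \top * (\top\wand\top)$.
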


The form of Lemma \ref{lem:consistentmodelsincomplete} is pathological in that it expresses the incompatibility of the consistency condition with the totality of the monoids: if there are $u$ and $v$ such that $u \satisfies^\top A$ and $v \satisfies^\top A \wand \bot$, then $u \circ v \satisfies^\top A \fatcomma A \wand \bot$, but then $u \circ v \satisfies \bot$, which is absurd.

\subsection{The Inconsistency Semantics}

Since completeness fails for the consistency semantics, one can make a slight concession to the absurd: including a distinguished element $\pi$ dominating the algebra (i.e., $\forall w \in \mathbb{M} \Rightarrow w \preceq \pi$) that satisfies absurdity; this choice delivers the satisfaction relation given in Figure \ref{fig:sat:BI}. One may also substitute the equality for the preorder in the $\bot$-clause to form a candidate semantics. In either case, completeness fails:
\begin{Lemma}[Pym \cite{Pym2002}]
   Let $\phi = ((\psi \wand \bot) \wand \bot) \lor (\psi \wand \bot)$, then $e \satisfies^\bot \phi$ but $\phi$ is not valid in BI.
\end{Lemma}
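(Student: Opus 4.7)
The plan is to handle the two claims separately: the non-derivability of $\phi$ in $\system{LBI}$, and the validity of $\phi$ at $e$ under the modified satisfaction relation $\satisfies^\bot$. The former is a proof-search argument; the latter is a semantic case analysis.

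For the non-derivability, I would unfold the proof-search tree for the putative sequent $\square : \phi$ in $\system{sLBI}$. Since $\phi$ is a disjunction, any derivation must end with $\rrn \lor$, reducing to one of $\square : \psi \wand \bot$ or $\square : (\psi \wand \bot) \wand \bot$. Each branch is a magic wand, forcing $\rrn \wand$ and yielding $\psi : \bot$ or $\psi \wand \bot : \bot$ respectively. For generic $\psi$ (e.g. $\psi = \I$) the first is immediately stuck, since no rule decomposes $\I$ on the left into $\bot$ on the right; the second is stuck because the only way to consume $\psi \wand \bot$ on the left is $\lrn \wand$, which demands a proof of $\psi$ from an empty multiplicative context, again unavailable. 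By Corollary~\ref{cor:snc} this simultaneously yields $\not\models \phi$.

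For the validity at $e$, I would fix an arbitrary $\satisfies^\bot$-model and case-split on whether $e \satisfies^\bot \psi \wand \bot$. If it does, the disjunction is immediate. Otherwise the $\wand$-clause supplies $u, v$ with $R(v, e, u)$, $u \satisfies^\bot \psi$, and $v$ not absurd; by unitality and the unit-like behaviour of $e$, one may effectively identify $v$ with $u$, so $u$ itself is not absurd. To establish $e \satisfies^\bot (\psi \wand \bot) \wand \bot$, I take arbitrary $a, b$ with $R(b, e, a)$ and $a \satisfies^\bot \psi \wand \bot$, and apply the latter to the previously-fixed $u$: this forces the world $R$-composed from $a$ and $u$ to be absurd. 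The modified $\bot$-clause then propagates that absurdity back along $R$ and $\preceq$ to conclude $b \satisfies^\bot \bot$, so $\phi$ holds at $e$.

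The main obstacle is the closing step of the semantic argument: showing that absurdity along $(a, u)$ forces absurdity of $b$. In a commutative monoid model this would follow from $\pi$ being absorbing under $\circ$, but for a general BI-frame one must reason purely through the interplay of $R$, the preorder $\preceq$, and the particular shape of $w \satisfies^\bot \bot$. Once this is secured, the lemma exhibits a multiplicative analogue of classical excluded middle --- valid under the extra slack introduced by the modified $\bot$-clause, yet unprovable in $\system{LBI}$ --- and thereby witnesses the claimed incompleteness.
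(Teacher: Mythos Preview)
Your approach is essentially the paper's: proof-search for non-derivability, then a case split on whether $e \satisfies^\bot \psi \wand \bot$, using the failure case to extract a non-absurd witness of $\psi$ and feeding it to any putative $a \satisfies^\bot \psi \wand \bot$. The step you flag as the main obstacle---propagating absurdity from the composite back to $b$---is exactly the step the paper compresses into a terse chain of equivalences; there it is discharged by working in the total monoidal setting where $e$ is a genuine unit (so $R(b,e,a)$ collapses to $b = a$) rather than in a general BI-frame, which dissolves your worry about reasoning purely through $R$ and $\preceq$.
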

\begin{proof}
The invalidity of $\psi$ can be proved by proof-search in the sequence calculus. By the definition of satisfaction, $e \satisfies^\bot \phi$ if and only if $e \satisfies^\bot \psi \wand \bot$ or $e \satisfies^\bot (\psi \wand \bot) \wand \bot$, we proceed by case analysis. First, if $e \satisfies \psi \wand \bot$, then the claim is trivially satisfied since it assumes one of the disjuncts. Second, if $e \not \satisfies \psi \wand \bot$, then the claim $e \satisfies (\psi \wand \bot) \wand \bot$ is equivalent to the following statement: for all $u$, if $u \satisfies^\bot \psi \wand \bot$, then $u = \pi$. This is, in turn, equivalent to the claim that  all $u$ there is a $v$ such that $v \satisfies^\bot \psi$ with $v \neq \pi$ or $u =\pi$. This is equivalent to the hypothesis as it may be unpacked to say: there is $u \in \mathbb{M}$ such that $u \satisfies^\bot \psi$ and $u \neq \bot$. 
\end{proof}

Consequently, one must modify the clause for disjunction too, effectively using Beth's clause instead of Kripke's. A term model construction exists with respect to topological monoids (see Pym \cite{Pym2002}), and more generally to the Grothendieck sheaf-theoretic models studied by Pym et al. \cite{Pym2004bi}. The position of Beth's clause with respect to the approach to completeness in this paper is discussed in Section~\ref{sec:beth}. 

\subsection{Partial and Non-deterministic Monoids} \label{subsec:simon} Another possibility for salvaging either the consistency or the inconsistency semantics is to make the monoidal product partial. Does this result in an adequate semantics with Kripke clause for disjunction? The question was answered positively by Docherty and Pym \cite{Docherty2018,Docherty2019a, Docherty2019} through an argument that makes use of a Stone-type duality. The authors simultaneously considered the option of having non-deterministic monoidal products, a consideration that arises naturally from the setting up of a uniform metatheory for bunched logics by extending the metatheory for intuitionistic layered graph logic \cite{Docherty2016,Docherty2018a}.

These partial and non-deterministic models are essentially the same as the models in this paper, but expressed as monoids rather than with a relation. Curiously, the motivation for the definition differs: in this paper, the models captured the minimal structure required for the simulation delivering completeness to take place, rather than from intuition about what a model of BI should look like. A stylistic consequence is that Definition \ref{def:bialg} requires persistence on formulas, whereas traditionally one would state atomic persistence along side other sufficient conditions that collectively deliver persistence. Given the motivation for the models in this paper, the sufficiency (and, possibly, necessity) of these conditions is a \emph{post-hoc} result about models, rather than the \emph{a priori} definition of them.

The structures involved in the semantics of Docherty and Pym \cite{Docherty2018,Docherty2019a, Docherty2019} are similar to the ordered monoids above except rather than have a unit $e$, they have a set of elements $E$ at least one of which is a unit, which further satisfies the following: 
\[
\underbrace{ e \in E \with e' \succeq e \Rightarrow e' \in E}_{\text{Closure}} \qquad \underbrace{ e \in E \with x \in y \circ e \Rightarrow y \preceq x}_{\text{Coherence}}
\]
\[
\underbrace{t' \succeq t \in x \circ y \with w \in t' \circ z \Rightarrow  \exists s,s',w'( s' \succeq s \in y \circ z \with  w \succeq w' \in x \circ s' )}_{\text{Strong Associativity}}
\]

Doherty and Pym begin by considering variations of the clauses for satisfaction; one replaces the clauses for $\I,*$ and $\wand$ with the following:
\[
\begin{array}{ccc}
x \satisfies \I & \text{ iff } & x \in E \\
x \satisfies \phi * \psi & \text{ iff } & \text{ there exists } x',y,z \text{ st. } x \succeq x' \in y \circ z, y\satisfies \phi \text{ and } z \satisfies \psi \\
    x \satisfies \phi \wand \psi & \text{ iff } & \text{ for any } x',y,z, \text{ if } x \preceq x', z \in x' \circ y \text{ and } y \satisfies \phi, \text{ then } z \satisfies \psi \\
\end{array}
\]

As above, given an interpretation, such structures have been shown to be sound and complete for BI when persistent; and, moreover, one has soundness and completeness for related logics upon suitable augmentation (e.g., replacing the preorder with equality one produces models for Boolean BI \cite{Pym2002}).

These variations can indeed be treated with the approach to completeness in this paper. The clauses used here are considered a simplification that arises when one expects models to act directly on the world being considered, yielding the following non-deterministic clauses for $*$ and $\wand$:
\[
\begin{array}{ccc}
x \satisfies \phi * \psi & \text{ iff } & \text{ there exists } y,z \text{ st. } x \in y \circ z, y\satisfies \phi \text{ and } z \satisfies \psi \\
    x \satisfies \phi \wand \psi & \text{ iff } & \text{ for any } y,z, \text{ if } z \in x \circ y \text{ and } y \satisfies \phi, \text{ then } z \satisfies \psi \\
\end{array}
\]

Soundness and completeness requires persistent models, but checking that a model satisfies this criterion or constructing one that does can be challenging. Fortunately, there are results in the literature that address this issue.

In the deterministic case the problem can be resolved by assuming bifunctoriality, but generalizing the property to non-deterministic case is a delicate matter. Cao et al. \cite{Cao2017} have considered the following conditions:
\[
\begin{array}{c}
 z \in x \circ y \with z \preceq z' \Rightarrow \exists x', y'( z' \in x' \circ y' \with x \preceq x' \with  y \preceq y') \\
  z \in x \circ y \with x' \preceq x \with y' \preceq y \Rightarrow \exists z'(z' \preceq z \with z' \in x' \circ y')
\end{array}
\]

Assuming these properties, called \emph{Upward Closed} and \emph{Downward Closed}, respectively, one recovers soundness with both the direct and indirect clauses for $*$ and $\wand$, respectively.  Moreover, Cao et al. \cite{Cao2017} showed that any structure satisfying either condition together with \emph{Simple Associativity} --- $t \in x \circ y  \with w \in t \circ z \Rightarrow \exists s(s \in y \circ z \with w \in x \circ s)$ --- can be conservatively transformed into sound models of BI satisfying all three. Docherty and Pym \cite{Docherty2019a, Docherty2019} has further shown that strong associativity for the non-deterministic models suffices for the same result without assuming the model to be either upward or downward closed.

\subsection{The Relational Semantics}
 Galmiche et al. \cite{Galmiche2005} attempted to put the partial semantics within a more general framework, delivering a relational semantics. The structures are similar to those of this paper, but necessarily include a distinguished element $\pi$ satisfying absurdity, satisfying the following:
\[
\underbrace{R\pi xy }_{\pi\text{-max}} \qquad \underbrace{Ryx\pi \Rightarrow \pi \preceq y}_{\pi\text{-abs}}
\]

Moreover, the preorder is defined in terms of the relation (i.e., $x \preceq y  \iff  Ryxe$) , and there are some additional conditions beyond commutativity and associativity:
\[
     \underbrace{R(z,x,y) \with x \preceq x'  \Rightarrow R(z,x',y)}_{\text{Compatibility}} \qquad  \underbrace{R(z,x,y) \with z \preceq z' \Rightarrow R(z',x,y)}_{\text{Transitivity}}
\]

The relational structures form models under an interpretation $\ll - \rr$ of the atoms when they are atomically persistent and, for any world $w$ and atom $A$, if $\pi \preceq w$, then $w \in \ll A \rr$. The resulting semantics was shown sound and complete via a term-model construction, and the models are subsumed by the class of BI-frames.

Unfortunately, the completeness of the version corresponding to the total monoid semantics failed as it passed through a tableaux system that has not been shown to be sound and complete --- see, for example, Galmiche et al. \cite{Galmiche2019}, where an attempt to bridge the sequent calculus and tableaux system is discussed. 

The relational semantics is a sub-class of the semantics of this paper. For example, it is consistent with Definition \ref{def:bialg} that there be a distinguished element $\pi$ satisfying $\pi$-max and $\pi$-abs. In the presence of bifunctoriality one can derive the slightly weaker condition than comptability:
\[
   R(z,x,y) \with x \preceq x' \Rightarrow \forall z'( R(z',x',y) \Rightarrow z \preceq z')
\]

And with associativity one can derive a weaker condition than transitivity:
\[
R(z,x,y) \with z \preceq z' \Rightarrow \exists x'( x \preceq x' \with R(z',x',y))
\]

From these observations it is clear that the distinguishing feature of relational semantics over the semantics in this paper is to assert certain equalities that are not required of BI, though they are often natural and useful.

\section{Beth's Disjunction.} \label{sec:beth} Before Kripke's landmark paper on the semantics of intuitionistic logic (IL) \cite{Kripke1965}, there was Beth's \cite{Beth1956}. The problem with Beth's semantics was that it included a complicated clause for disjunction, but it happens that the semantics is, in a sense explained by Kripke \cite{Kripke1965}, an \emph{unfolding} of Kripke's semantics. To explain briefly whence Beth's treatment of disjunction comes, how it relates to Kripke's and what they both have to do with the proof-search methodology of this paper, attention is now restricted to the additive fragment of BI (i.e., to IL.) which has provability relation $\proves[IL]$.  For a detailed account on the meta-theory of IL see Dummett \cite{Dummett2000}.

Kripke's semantics can immediately be stated given the study of BI so far conducted. Let $\satisfies[K]$ be the additive fragment of $\satisfies$, then Kripke models are defined as follows:

\begin{Definition}[Kripke Frame]
    A Kripke frame is a structure $\langle \uni, \preceq \rangle$ in which $\preceq$ is a preorder that is persistent; that is,
    \[
    w \preceq u \Rightarrow (w \satisfies[K] \phi \Rightarrow u \satisfies[K] \phi)
    \]
    The set of all Kripke frame is $\mathcal{K}$.
\end{Definition}

As before, this induces a semantics:
\[
\Gamma \models[K] \phi \iff \forall \model \in \mathcal{K}\, (w \satisfies[K] \Gamma \Rightarrow w \satisfies[K] \phi)
\]

\begin{Theorem}[Kripke \cite{Kripke1965}]
$\Gamma \proves[IL] \phi \iff \Gamma \models[K] \phi$
\end{Theorem}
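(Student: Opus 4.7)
The plan is to specialize the methodology established for BI in Sections~\ref{sec:modeltheoryquaclassicaltheory}--\ref{sec:snc} to the purely additive fragment. First, I would encode Kripke semantics as a classical theory $\Sigma_{\logic{IL}}$ in a meta-logic whose world-terms come equipped with a binary relation $\preceq$, containing the universal closure of reflexivity, transitivity, and persistence $\,(w \preceq u \Rightarrow (w \satisfies \phi \Rightarrow u \satisfies \phi))$. The companion satisfaction theory $\Sigma_{\satisfies}$ is obtained by taking exactly the additive clauses of Figure~\ref{fig:sat:BI:symbolic} (dropping $\top^*, *, \wand$ and the $\bot$-clause, which can be replaced by vacuity at a distinguished $\pi$, or indeed omitted if one simply works with consistent Kripke frames).

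Next, I would establish the IL-analogues of Lemma~\ref{lem:worldindependence}, Lemma~\ref{lem:admissabilityofclassicalimplication}, and Lemma~\ref{lem:compdlj}, showing that a basic validity sequent $\langle \Sigma_{\logic{IL}}, w \satisfies \Gamma : w \satisfies \phi\rangle$ holds iff it admits a world-conservative \system{DLJ}-proof. The arguments are strictly simpler than the BI case because the only clause introducing a change of world is the $\to$-clause, so the world-independence argument for the elimination of $\rrn{\forall^{\mathsf{K}}}$ and $\rrn{\Rightarrow^{\mathsf{K}}}$ reduces to handling a single binding construct; there are no $*$- or $\wand$-resolutions to permute, no relational axioms, and unpacking is trivial since bunches collapse to lists (the $\lfloor \cdot \rfloor$ operation just becomes iterated $\land$).

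I would then define a validity calculus $\system{VIL}$ whose rules are exactly the additive fragment of Figure~\ref{fig:vbi} (i.e.\ $\lcl{\land}, \rcl{\land}, \lcl{\lor}, \rcl{\lor}, \lcl{\to}, \rcl{\to}, \lcl{\bot}, \rcl{\top}, \ax$, together with $\cont, \weak$ and the permutation rules appropriate to lists), and repeat the case analysis of the proof of Theorem~\ref{thm:vbi} to establish that every world-conservative \system{DLJ}-proof of a BVS decomposes into reductions corresponding to $\system{VIL}$-rules. Since $\system{VIL}$ is up to standard reformatting a presentation of Gentzen's $\system{LJ}$, the rules read on states give exactly the rules of the sequent calculus for IL.

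Finally, I would conclude by the bisimulation argument of Theorem~\ref{thm:bisimilar}: the relation $\{(\Gamma : \phi)\} \sim \{\Sigma_{\logic{IL}}, w \satisfies \Gamma : w \satisfies \phi\}$ is a bisimulation between the transition system generated by reductive proof-search in $\system{LJ}$ and the one generated by $\system{VIL}$, whence $\Gamma \proves[IL] \phi$ iff $\Gamma \models[K] \phi$. I expect the only genuinely delicate step to be the faithful rephrasing of the change-of-world guard for the $\to$-clause, since the motivating subtlety of Example~\ref{ex:lemfails}---that $p \lor \neg p$ is not forced---must be recovered with only a preorder (and no $R$) available to separate the world-independent sub-contexts in the induction; everything else follows by stripping the multiplicative machinery from the BI proof.
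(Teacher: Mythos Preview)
The paper does not actually prove this theorem: it is stated with attribution to Kripke \cite{Kripke1965} and no proof environment follows. So there is no ``paper's own proof'' to compare against; it is simply quoted as a classical result.

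Your proposal to obtain it by specializing the BI machinery of Sections~\ref{sec:modeltheoryquaclassicaltheory}--\ref{sec:snc} to the additive fragment is correct, and indeed the paper explicitly points out in the Conclusion that this restriction is possible: ``in the completeness proof above one could omit all of the multiplicative rules and one has proved the completeness of the additive fragment of BI (i.e., of intuitionistic logic) with respect to generally persistent preorders.'' Kripke's original argument, by contrast, proceeds via a canonical (term) model construction, which is precisely the kind of bird's-eye, commit-to-a-single-model approach the paper sets itself against. Your route buys modularity and the direct bisimulation with $\system{LJ}$; Kripke's buys a concrete countermodel for each non-theorem. Both are valid; yours is the one native to this paper's methodology, but it is not the proof the paper gives, because the paper gives none.
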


Stating Beth's semantics requires more work, but the notion of a Beth structure can already be given:

\begin{Definition}[Directed Tree]
A directed tree is a directed graph whose underlying graph is connected and acyclic.
\end{Definition}

\begin{Definition}[Beth Structure]
    A structure $\langle \uni, \leq \rangle$ is a Beth structure if there is a directed tree $\langle \uni, \preceq \rangle$ such that $\leq$ is the transitive closure of $\preceq$.
\end{Definition}

The relationship between Kripke structures (i.e., preorders) and Beth structures is simply that when the former is a directed tree, its transitive closure is the latter.

The definition of Beth frame is cumbersome only in that it requires an specific condition on paths, called \emph{barring}. This notion is in fact quite natural when considering the constructivist account of intuitionistic logic.

\begin{Definition}[Path]
Let $\langle \uni, \preceq \rangle$  be directed tree. A path is a sequence $(x_i)_I \in \uni$ such that for every $i \in I$ it is the case that $x_i \preceq x_{i+1}$. A path $(x_i)_i$ is though a point $y$ when there is $i \in I$ such that $x_i =y$.
\end{Definition}

\begin{Definition}[Bar]
Let $B$ be some set of points in a Beth structure and let $x$ be a point. The set $B$ bars the point $x$, denoted $B \mid x$, when every path through $x$ intersects $B$. 
\end{Definition}

The Beth satisfaction relation $\satisfies[B]$ is as $\satisfies[K]$ but with the following clause for disjunction: 
    \[
    w \satisfies \phi \lor \psi \iff \exists \mathbb{U} \subseteq \uni \text{ st. } \mathbb{U} \mid  w \with \forall u \in \mathbb{U}(u \satisfies \phi \parr u \satisfies \psi)
    \]
    
\begin{Definition}[Beth frame]
A Beth frame is a Beth structure $\langle \uni, \leq \rangle$ that is persistent and persistent through barring; that is, for any $w, u \in \uni$ and $U \subseteq \uni$, the following hold:
    \[
    w \leq u \Rightarrow (w \satisfies[B] \phi \Rightarrow u \satisfies[B] \phi)
    \]
\[
(U |w \with \forall u \in U \, u \satisfies[B] \phi) \Rightarrow w \satisfies[B] \phi   
\]
The set of all Beth frames is $\mathcal{B}$.
\end{Definition}

A semantics is once more induced from the class of frames and satisfaction:
\[
\Gamma \satisfies[B] \phi \iff \forall \model \in \mathcal{B}\, (w \satisfies[B] \Gamma \Rightarrow w \satisfies[B] \phi)
\]

\begin{Theorem}[Beth \cite{Beth1956}]
$\Gamma \proves[IL] \phi \iff \Gamma \models[B] \phi$
\end{Theorem}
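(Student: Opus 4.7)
The plan is to run the same methodology that established Corollary~\ref{cor:snc} but restricted to the additive fragment of BI (i.e.~IL) and with Beth frames in place of BI-frames. First, encode Beth frames as a theory $\Sigma_{\logic{IL}}^{\mathsf{B}}$ in the meta-logic: the preorder $\preceq$ with its standard persistence axiom, together with a binary predicate $\text{Bar}(U,w)$ axiomatizing the barring relation on a directed tree, and finally the persistence-through-barring axiom
\[
\forall w \forall U\big(\text{Bar}(U,w) \with \forall u(u \in U \Rightarrow u \satisfies \phi) \Rightarrow w \satisfies \phi\big).
\]
The clauses of $\Sigma_\satisfies^{\mathsf{B}}$ are those of Figure~\ref{fig:sat:BI:symbolic} restricted to the additive connectives, except that the $\lor$-clause becomes $w \satisfies \phi \lor \psi \iff \exists U(\text{Bar}(U,w) \with \forall u(u \in U \Rightarrow u \satisfies \phi \parr u \satisfies \psi))$. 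A Beth-BVS is then a meta-sequent $\Sigma_{\logic{IL}}^{\mathsf{B}}, w \satisfies \Gamma \proves w \satisfies \phi$, and by construction it is provable in classical first-order logic iff $\Gamma \models[B] \phi$.

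Second, adapt Lemma~\ref{lem:compdlj} and Lemma~\ref{lem:compdljwc} to show that a Beth-BVS holds iff it admits a world-conservative $\system{DLJ}$-proof. The argument goes through verbatim for the connectives inherited from IL; the only new case is disjunction on the right, where resolving the Beth $\lor$-clause introduces a fresh bar-variable and then (via $\rrn\exists$ and $\rrn\parr$ in $\system{DLJ}$) reduces back to pointwise assertions over the bar. The persistence-through-barring axiom plays here the role that persistence played in the $\to$-case of Lemma~\ref{lem:admissabilityofclassicalimplication}: it guarantees that the change-of-context introduced by selecting a bar can be collapsed back to the original eigenworld, preserving world-independence (Lemma~\ref{lem:worldindependence}) of the resulting premiss.

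Third, perform the case analysis of Theorem~\ref{thm:vbi} on the resolutions to extract a calculus $\system{VIL}^{\mathsf{B}}$ operating directly on states. The only rule that changes from the Kripke case is $\rcl\lor$, which acquires the form: from a bar $U$ of $w$ and, for each $u \in U$, a derivation of $w \satisfies \Gamma : u \satisfies \phi_{i(u)}$ conclude $w \satisfies \Gamma : w \satisfies \phi_1 \lor \phi_2$. Finally, exhibit a bisimulation between $\system{VIL}^{\mathsf{B}}$ and the additive fragment of $\system{sLBI}$ in the manner of Theorem~\ref{thm:bisimilar}, using persistence-through-barring to collapse any non-trivial bar appearing in a reduction to the trivial bar $\{w\}$, which matches the additive $\rrn\lor$ rule of $\system{sLBI}$ exactly.

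The principal obstacle is the axiomatization of $\text{Bar}$. Unlike the relational clauses of BI, bars are inductively generated by a path-covering condition that is not naturally first-order, and a naive axiomatization risks either under-generating (missing valid bars and thus under-generating consequences) or over-generating (introducing non-bars and thus breaking soundness). The cleanest route is to take as primitive only two axioms --- that $\{w\}$ is always a bar of $w$, and the persistence-through-barring axiom itself --- and let the soundness direction of the bisimulation supply any further bars that happen to be needed. In effect, this observation is what lets Kripke's unfolding argument be absorbed into the bisimulation step: at the level of validity, Beth semantics and Kripke semantics for the additive fragment agree, so once the bisimulation has been set up the completeness of IL with respect to Beth frames follows from the completeness of IL with respect to Kripke frames, which is the additive restriction of Corollary~\ref{cor:snc}.
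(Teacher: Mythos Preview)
The paper does not prove this theorem. It is stated in Section~\ref{sec:beth} with attribution to Beth~\cite{Beth1956} and no argument is given; like the companion Kripke theorem a few lines above, it is quoted as a classical result and then discussed informally in relation to the paper's proof-search methodology. So there is no ``paper's own proof'' against which to compare your proposal.

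That said, your attempt to push the paper's bisimulation machinery through for Beth frames is more ambitious than anything the paper claims, and it runs into a real obstacle that you identify but do not resolve: barring is a $\Pi^1_1$ notion (it quantifies over all infinite paths through a node), so it is not first-order definable over the frame in any uniform way. Your proposed fix---take only the trivial bar $\{w\}$ and persistence-through-barring as axioms---collapses the Beth $\lor$-clause to the Kripke one at the level of the meta-theory, which means you are no longer encoding $\models[B]$ but rather $\models[K]$. The final paragraph tacitly acknowledges this by appealing to Kripke completeness to finish; at that point the argument is really ``Beth validity coincides with Kripke validity, and Kripke is complete,'' which is a legitimate route to the theorem but is not the self-contained bisimulation proof you set out to give, and the coincidence step itself requires an argument you have not supplied.
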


For both Beth and Kripke the intuition is that worlds represent states of information about constructions. In the case of Beth, one thinks of reasoning as being done in time, which is taken to be divided into successive intervals, say \emph{days}; thus, for example, the set $\{ w \mid w \leq u \}$   represents all the possible states to which one can eventually advance, and to one of which one shall advance. In this reading, barring says one can only have encountered a disjunction $\phi \lor \psi$ if at some preceding day one knew $\phi$ or one knew $\psi$. This is precisely the constructive reading of the connective.

How does all this relate to the proof-search methodology? Kripke's clause for disjunction is precisely algebraic analogue of the rule for disjunction:
\[
\infer{\Gamma \proves \phi_1 \lor \phi_2}{\Gamma \proves \phi_i} \qquad  
        \raisebox{1cm}{\xymatrix@R=0.05cm{
         & \bullet \ar[r] & (\Gamma, \psi) \\
        (\Gamma, \phi \lor \psi) \ar[ur] \ar[dr] & & \\
        & \bullet \ar[r] & (\Gamma, \phi)
        }
        }
         \qquad \infer{w \satisfies \Gamma \Rightarrow w \satisfies \phi \lor \psi}{ w \satisfies \Gamma \Rightarrow w \satisfies \phi}
\]

However, this proof-theoretic definition for disjunction is not necessarily the most natural one. Since $\leq$ represents $\proves[IL]$ (by persistence), the claim $\Gamma \proves[IL] \phi_1 \lor \phi_2$ says that if at some day one know $\Gamma$, then at a latter day  one knows $\phi_1 \lor \phi_2$. To witness this, there must a sequence of days in which one constructs from the information in $\Gamma$ either $\phi_1$ or $\phi_2$, thus the rules for disjunction takes the following form: 
\[
\infer{\Gamma \proves \phi_1 \lor \phi_2}{\Gamma \proves \Delta & \Delta \proves \phi_i}
\]
Thus, from the semantic perspective, $w \satisfies \phi_1 \lor \phi_2$ if and only if for some collection of $\Delta$ defining a set of worlds $\mathbb{U}$, it is either the case that that for an arbitrary element $u$ from the set either $u \satisfies \phi_1$ or $u \satisfies \psi_2$. This is the barring condition in Beth's clause for disjunction.

\section{Conclusion} \label{sec:conclusion}

This paper demonstrates the soundness and completeness of an algebraic semantics for the logic of Bunched Implications that has a minimal amount of pre-conditions; that is, the semantics generalizes and unifies previous results that have captured more specific classes of algebra or satisfaction relations: all that is required is a ternary relation that is commutative and associative and a binary preorder that is generally persistent (or, equivalently, atomically persistent and bifunctorial). One immediate task following this work is the provision of a tableaux system that may enable algorithmic reasoning to be preformed within BI, a logic quite important to computing and verification.

The \emph{deus ex machina} delivering the proof of soundness and (strong) completeness 
is a paradigm shift: rather than working from the traditional perspective of deductive logic, one works from the point of view of reductive logic. The r\^{o}le of reductions in the semantical analysis cannot be trivalized as it is precisely considering the co-recrusive construction of the proof-search space that delivers the proof. Moreover, what has not be discussed except in the most cursory manner is how the class of BI-algebras is determined in the first place, but this can be done in reductive logic while it seemingly cannot be done in deductive logic. It is this that justifies the phrase \emph{semantical analysis} in the title of the paper; the contained proof is the synthesis (i.e., the reversal of the analysis) of the work. Future work includes providing a rigorous general methodology.

The complication with deductive logic as the working paradigm when proving completeness is that it necessitates a bird's eye view of validity; that is, one must construct models fully, be it via a tableaux counter-model construction or a term model construction, both of which commit one to the completeness of a particular model, and then show at once that the sequent calculus could not allow any further construction. By comparison, reductive logic proceeds from the worm's eye view thereby allowing one to take full advantage of the local correctness property of sequent calculi (i.e., handle generic instances of rules), meaning that one has a modular approach to the analysis and hence the synthesis of the model theory.  For example, in the completeness proof above one could omit all of the multiplicative rules and one has proved the completeness of the additive fragment of BI (i.e., of intuitionistic logic) with respect to generally persistent preorders. 

The above practical considerations are important, but there is also a moral argument for reductive logic: it is the way in which one uses logic. For example, one regards $\phi \land \psi$ as meaning that both $\phi$ \emph{ and } $\psi$ hold, but simply phrasing it as such constitutes a reduction --- indeed, one now requires knowing what it means for the components themselves to hold! It is for this reason that though the main result of the paper is the soundness and completeness theorem for BI, the thesis is that reductive logic is both a natural and powerful perspective on logic that may yield further insight into meta-theory where traditional deductive approaches are either incapable or intractable.

\subsection*{Acknowledgments}
We a grateful to Timo Lang and Simon Docherty for their thorough and thoughtful comments 
on this work.

\newpage

\bibliographystyle{asl}
\bibliography{bib}

\end{document}